\pgfplotsset{compat=1.17}
\DeclareMathOperator*{\maxx}{\mathrm{max}}
\DeclareMathOperator*{\argmax}{\mathrm{arg}\!\maxx\limits}
\newcommand{\data}{\ensuremath{\mathit{data}}}
\newtheorem{assumption}{Assumption}\newtheorem{lemma}{Lemma}\newtheorem{theorem}{Theorem}\newtheorem{corollary}{Corollary}\newtheorem*{remark}{Remark}
\theoremstyle{remark}
\DeclareMathOperator*{\argmin}{arg\,min} 
\date{}
\title{
%Information-theoretic limitations of data-based price discrimination
% \textcolor{red}{Empirical revenue maximization with price discrimination and without: a cautionary tale on revenue comparison}%
On the limitations of data-based price discrimination%
\thanks{Haitian Xie and Ying Zhu share the first authorship and are listed alphabetically. This paper supersedes a previously circulated draft by Xie and Zhu (\href{https://arxiv.org/pdf/2204.12723v1.pdf}{\tt https://arxiv.org/pdf/2204.12723v1.pdf}).
All three authors are grateful to the constructive comments from Co-Editor Rakesh Vohra and two anonymous reviewers at \emph{Theoretical Economics}, and three anonymous reviewers and the meta reviewer at \emph{ACM Economics and Computation}. The authors would also like to thank Dirk Bergemann, Songzi Du, Federico Echenique, Graham Elliott, Yannai Gonczarowski, Roger Gordon, Nima Haghpanah, Johannes Horner, Jonathan Libgober, Esfandiar Maasoumi, Maximilian Schaefer, Joel Sobel,
Karl Schlag, Larry Samuelson, Yixiao Sun, J. Miguel Villas-Boas, and Joel Watson
for valuable comments and discussions.
}}
\author{Haitian Xie\thanks{Assistant professor at the Department of Business Statistics and Econometrics, Guanghua School of Management, Peking University.
The author is grateful to the UC San Diego Department of Economics where this project was developed during his doctoral studies. Xie is supported by the Fundamental Research Funds for the Central Universities at Peking University. Email: \href{mailto:xht@gsm.pku.edu.cn}{\tt xht@gsm.pku.edu.cn}.}
\quad{}\quad{}Ying Zhu\thanks{Assistant professor at the Department of Economics, University of California San Diego%
% , 9500 Gilman Dr. La Jolla, CA 92093
. Email: \href{mailto:yiz012@ucsd.edu}{\tt yiz012@ucsd.edu}.
The author is grateful to the Society of Hellman Fellows at University
of California and the Cowles Foundation at Yale University, and also
thanks participants at her seminars.}
\quad{}\quad{}Denis Shishkin\thanks{Assistant professor at the Department of Economics, University of California San Diego%
% , 9500 Gilman Dr. La Jolla, CA 92093
. Email: \href{mailto:dshishkin@ucsd.edu}{\tt dshishkin@ucsd.edu}.}} 
\date{\today}
\begin{document}

\maketitle

\abstract{
\singlespacing

%This paper studies third-degree price discrimination (3PD) based on a random sample of valuation and covariate data, where the covariate is continuous, and the distribution of the data is unknown to the seller. 

The classic third degree price discrimination (3PD) model requires the knowledge of the distribution of buyer valuations and the covariate to set the price conditioned on the covariate.
In terms of generating revenue, the classic result shows that 3PD is at least as good as uniform pricing.
What if the seller has to set a price based only on a sample of observations from the underlying distribution?
Is it still obvious that the seller should engage in 3PD?
This paper sheds light on these fundamental questions.
In particular, the comparison of the revenue performance between 3PD and uniform pricing is ambiguous overall when prices are set based on samples.
This finding is in the nature of statistical learning under uncertainty: a curse of dimensionality, but also other small sample complications.

\bigskip{}
\noindent \textbf{Keywords:} price discrimination, empirical
revenue maximization, information theory, prior-independent pricing, optimal rate of convergence

\bigskip{}
% include MSC for arXiv submission
% No MSC for JOE submission
% \noindent \textbf{MSC classification code:} 62B10, 62R07, 91B24, 94A16
%Add the following too

\noindent \textbf{JEL classification code:} C14, C44, D42, D82
}

\newpage
\section{Introduction}

In the past few decades, the advances in the theory of mechanism design have been followed by a tremendous interest in its practical applications. At the same time, classic theoretical models typically make strong assumptions about the designer's knowledge of the environment which may lead the optimal mechanism to be sensitive to the details of the environment (which is sometimes referred to as the Wilson critique).\footnote{In some cases, this leads to extreme or unrealistic results as in, e.g., \citet{cremer1988full}.} % 

Third degree price discrimination (3PD) requires an observable covariate value associated with the buyer valuation. To set the price conditioned on the covariate, the classic pricing model requires the knowledge of the distribution of buyer valuations and the covariate. In terms of generating revenue, the classic result shows that 3PD is at least as good as uniform pricing. What if the seller has only partial information about those distributions? Is it still obvious that the seller should engage in 3PD? Setting the optimal price for each observed value of the covariate may not ``extrapolate'' well to the unobserved covariate values, and yield a lower expected revenue than a uniform price. On the other hand, too little discrimination underutilizes the information contained in the covariate about buyer valuations. This paper is concerned with how much information the seller will need in order to make 3PD generate more revenue. Suppose a unit demand buyer with a privately-known valuation $Y$ and a one-dimensional continuous covariate $X$ drawn from a joint distribution $F_{Y,X}$, that is unknown to the seller. The continuous covariate $X$ can be a single index or score that summarizes the relevant characteristics for pricing and marketing. \cite{hartmann2011identifying} provide examples where marketing firms use a one-dimensional continuous score function of customer characteristics, past response histories, and features of the zip code, and casinos use a one-dimensional continuous score referred to as the average daily win.

%Depending on the environment, the seller may or may not be able to third-degree price discriminate (3PD) between buyers with different values of $X$. First, suppose that 3PD is infeasible either because the realization of $X$ is not observed by the seller or $X$ corresponds to a protected characteristic such as race or gender, and regulatory constraints prohibit price discrimination. Then, standard results \citep{riley1983optimal} imply that it is optimal for the seller to choose a uniform pricing strategy that corresponds to a take-it-or-leave-it offer characterized by a single (posted) price. Similarly, when 3PD is feasible, and the realization of $X$ is observed by the seller, the optimal strategy is to charge the optimal uniform price conditioning on the realization of $X$. In terms of generating revenue, the classic pricing theory shows that 3PD is at least as good as uniform pricing.

While our seller is ignorant of $F_{Y,X}$, he/she does have access to a random sample of i.i.d. $\{Y_{i},X_{i}\}_{i=1}^{n}$ drawn from $F_{Y,X}$. A natural strategy is to choose prices that optimize against the empirical distribution of $\{Y_{i},X_{i}\}_{i=1}^{n}$. The $K$-markets empirical revenue maximization (ERM) divides the covariate space into $K$ equal-length segments, and the optimal price based on the conditional empirical distribution for each segment is calculated. We show that when $K=\Theta(n^{1/4})$, the
$K$-markets ERM strategy generates an expected revenue converging to that of
the true distribution 3PD optimum at the rate $O(n^{-1/2})$. The $1$-market ERM strategy is simply the (uniform) ERM strategy, which we show generates a revenue converging to that of the true-distribution uniform optimum at the rate $O(n^{-2/3})$. The $K$-markets ERM is just one possible strategy and one may wonder if a more sophisticated strategy might provide faster convergence rates. In a sense, the answer is no. We show that these rates are asymptotically unimprovable for the worst case distributions of $(Y,X)$ subject to some mild smoothness conditions. In other words, to guarantee a revenue deficiency of $\delta$ uniformly over a class of distributions, the necessary condition for the sample size is that $n=\Omega(\delta^{-2})$ in the 3PD problem and $n=\Omega(\delta^{-3/2})$ in the uniform pricing problem.

For sufficiently small $\delta$, the $K$-markets ERM and the uniform ERM strategies are optimal on the growth requirements of the sample size, respectively; that is, $n=\Theta(\delta^{-2})$ in the 3PD problem and $n=\Theta(\delta^{-3/2})$ in the uniform pricing problem. To show this optimality result, we establish a lower bound for the revenue deficiency in \textit{any} data-based pricing strategy relative to the true-distribution optimal strategy in the worst case (by considering the supremum over a class of joint distributions, $F_{Y,X}$, subject to some mild smoothness assumptions). In particular, data-based uniform pricing strategies are algorithms that depend on $\{Y_{i}\}_{i=1}^{n}$ only, and the true-distribution optimal strategy corresponds to the optimal uniform pricing strategy derived from $F_{Y}$. Similarly, data-based 3PD strategies are algorithms that depend on $\{Y_{i},X_{i}\}_{i=1}^{n}$, and the true-distribution optimal strategy corresponds to the optimal 3PD strategy derived from $F_{Y,X}$. We show that the minimax revenue deficiency is $\Omega(n^{-2/3})$ and $\Omega(n^{-1/2})$ in the uniform and 3PD cases, respectively.

Our results highlight the following economic trade-off. When the seller has the access to a sample of i.i.d. $\{Y_{i},\,X_{i}\}_{i=1}^{n}$, she can choose the $K$-markets ERM strategy that exploits both $\{X_{i}\}_{i=1}^{n}$ and $\{Y_{i}\}_{i=1}^{n}$, or the uniform ERM strategy that ignores $\{X_{i}\}_{i=1}^{n}$ and exploits only $\{Y_{i}\}_{i=1}^{n}$. Inherently, the former is an algorithm trying to learn the $F_{Y,X}$-optimal pricing function $p(\cdot)$ while the latter is an algorithm trying to learn the $F_{Y}$-optimal (constant) pricing function. As a result of the curse from the \emph{extra} dimensionality, the former is more demanding in the sample size than the latter. On the other hand, in terms of generating revenue, the true-distribution optimal 3PD strategy is at least as good as the true-distribution optimal uniform pricing strategy. This trade-off suggests that, even if $X$ contains useful information about $Y$, the $K$-markets ERM strategy based on a random sample \emph{can} be revenue inferior to the uniform ERM strategy when the sample size $n$ is not large enough, and vice versa.

To verify these potential implications, we conduct several numerical studies. In particular, we calculate the revenues of the $K$-markets ERM and the uniform ERM strategies based on a real-world data set from eBay auctions and two simulated data sets. Our numerical results illustrate the aforementioned trade-off.
When the sample size is small, the uniform ERM strategy can generate higher expected revenue than the $K$-markets ERM strategy. As the sample size grows, the $K$-market ERM strategy (the uniform ERM strategy) gets closer to the true-distribution optimal 3PD strategy (respectively, the true-distribution optimal uniform pricing strategy). The slower rate of convergence in the revenue from the $K$-markets ERM strategy (in contrast to the faster rate of convergence in the revenue from the uniform ERM strategy) is dominated by the benefit of price discrimination (based on $F_{Y,X}$) over uniform pricing (based on $F_{Y}$). Consequently, the revenue of the $K$-markets ERM strategy overtakes that of the uniform ERM strategy when the sample size becomes sufficiently large and $X$ contains sufficient information about $Y$.

The key takeaways from this paper are summarized here. First, no sample-based 3PD strategy is able to escape from the curse of dimensionality, shown by our information theoretic lower bounds. Second, absent uncertainty regarding the underlying probability laws,
third-degree price discrimination is at least as good as uniform pricing
in generating revenue. In contrast, the comparison of the revenue
performance between the $K$-markets ERM and the uniform ERM strategies
is ambiguous overall. This finding is in the nature of statistical
learning under uncertainty: a curse of dimensionality, but
also other small sample complications.\footnote{Specifically, there exists a distribution $F_Y$ where the revenue of the uniform ERM strategy is worse with two observations than with one; see \cite{babaioff2018two}. We illustrate in Section \ref{sec:Discussions} that this seemingly counter-intuitive result highlights the difficulty of establishing general comparative results with very small sample size and sheds some light on the comparison of the revenue performance of the $K$-markets ERM strategy with $K=1$ vs $K=2$ in the case of $n=2$.} Empirical
revenue maximization is not free of these issues. Ultimately, this paper poses a challenging open question of whether there exist some $\underline{n}<\bar{n}<\infty$ such that for any $n\in [\underline{n},\bar{n}]$ and distribution in the class defined in this paper, the $K$-markets ERM strategy (for any $K>1$) is always revenue-inferior to the uniform ERM strategy.

%the question of whether the $K$-markets ERM strategy with $K>1$ is always revenue inferior to the uniform ERM strategy when $n$ is below a threshold. \textcolor{red}{Should we still keep this or soften it? We argue that this is not \emph{always} the case.} In particular, for a distribution $F_Y$ where the revenue of the uniform ERM strategy is worse with two observations than with one,\footnote{See \cite{babaioff2018two} for the existence of such distributions.} we show that the expected revenue of the $K$-markets ERM strategy with $n=2$ is strictly higher for $K=2$ than for $K=1$.
%This means that, overall, the comparison of the revenue performance between the $K$-markets ERM and the uniform ERM strategies is ambiguous.

\subsection{Related literature}
\paragraph{Complexity measures and information theoretic lower bounds.}
% Needs editing
Information theoretic lower bounds and sample complexity are important notions in machine learning. 
%They are two sides of the same coin. 
Both aim to characterize learnability, i.e., how easy it is to learn an unknown object of interest (in our context, the true-distribution optimal 3PD strategy) from data where the uncertainty arises. Sample complexity derives the rate at which the sample size needs to grow to guarantee a desired learning accuracy. Information theoretic lower bound derives a lower bound as a function of the sample size on the learning error (in our context, the revenue deficiency) in the worst case. Sample complexity and information theoretic lower bounds are intrinsically tied to the complexity or size of the underlying function class of interest. Vapnik-Chervonenkis (VC) dimensions, shattering dimensions, and metric entropy (such as the cardinality of packing sets) are popular measures of complexity in machine learning. There have been a number of innovative applications of VC dimensions or shattering dimensions in economic theory and algorithmic economics. Together with the Probably Approximately
Correct (PAC) framework, they are used to study the complexity of the classes of demand and utility functions \citep{beigman2006learning,balcan2014learning}, k-demand buyer's valuation \citep{zhang2020learning}, theories of choices \citep{basu2020falsifiability}, preference functions \citep{chambers2021recovering,chambers2023recovering}, as well as the resulting learnability from data. VC dimension is useful for deriving sample complexity bounds concerning discrete function sets and finite dimensional vector spaces, and shattering dimension is useful for certain real functions.

From the theory of machine learning, when a class has infinite VC or shattering dimensions, this class is not PAC learnable. For example, a collection of sinusoids 
have subgraphs with infinite VC dimension. The max-min expected utility model with at least three states of the world has infinite VC dimension \citep{basu2020falsifiability}. The class of demand functions has infinite shattering dimension \citep{beigman2006learning}. Nonetheless, the notion of "learnability" can be generalized using a different type of complexity analysis that gives rise to our information theoretic lower bound in the 3PD problem. This type of analysis is built upon the notion of packing sets, along with tools from information theory. In particular, packing sets are useful for studying classes with an infinite number of elements (see \cite{kolmogorov1959varepsilon} and \cite{wainwright2019high}). This is the case for our 3PD problem as we try to learn an optimal pricing function of the covariate (an infinitely-dimensional parameter) and bound the deficiency in the \emph{expected} revenue, which concerns the entire pricing function at all covariate values. 

%in the 3PD problem, we convert the pricing problem into a multiple classification problem that tries to distinguish between optimal prices in a carefully constructed packing set. Our proof is based on a delicate construction of conditional densities along with the Fano's inequality and the Gilbert-Varshamov bound from information theory. The key lies in constructing a sufficiently large set of distributions with small enough pairwise Kullback-Leibler divergence such that one cannot distinguish them from the data, but their associated optimal prices have a sufficiently large separation with an unimprovable rate. This rate is the information theoretic limitation, which turns out to be $n^{-1/4}$. The desired set of optimal prices in our proof is a packing set where the separation between the elements is $\Omega(n^{-1/4})$ with respect to the unweighted $L_2$ norm, and the cardinality of this set is $\Omega(n^{1/4})$. The uniform pricing problem is much easier as it only requires constructing two distributions and simpler techniques.

\paragraph{Prior-independent mechanism design.}
% We should expand this a bit
% I don't like how we describe the difference from Huang, et. al. It gives people an excuse to reject
Most of the classic monopoly pricing literature assumes a known distribution of valuations (and covariates).\footnote{See also \cite{segal2003optimal} for a study of optimal multi-unit auctions where the seller has a probabilistic belief about the valuation distribution of the i.i.d. buyers.}
More recently, some papers \citep[e.g., those surveyed in][]{carrollRobustnessMechanismDesign2019} studied ``prior''-independent mechanism design.\footnote{Here, ``prior'' distribution refers to the seller's prior belief about buyers' valuations and is often taken to be the true distribution.}
% \textcolor{red}{Denis, I added this. Can you edit, especially the part "the worst case scenario", if necessary? }
The main focus of that literature is on deriving a robustly optimal mechanism in the absence of both ``prior'' and data.
In particular, \cite{BergemannSchlag2008,BergemannSchlag2011} derive the minimax-regret uniform pricing strategy in closed form; that is, the strategy that guarantees the smallest deficiency in revenue relative to the known distribution case. Like \cite{BergemannSchlag2008,BergemannSchlag2011}, we study the revenue deficiencies, but in contrast, we assume the availability of data and focus on the (inevitable) information-theoretic limitations of any \textit{data-based} pricing strategies and the achievability of the limitation.

This paper is inspired by the literature that studies approximately optimal ``prior''-independent mechanism design, in particular monopoly pricing with a single buyer.\footnote{There is a less related literature that studies optimal auctions; see, e.g., \cite{cole2014sample,dhangwatnotai2015revenue,fu2015randomization,guo2019settling,fu2021full}.} 
This literature assumes that the seller has access to a random sample of i.i.d. $\{Y_{i}\}_{i=1}^{n}$ drawn from $F_{Y}$ and proposes variants of the uniform ERM strategy to derive the revenue guarantee in relation to that from the true-distribution optimal uniform pricing strategy.
There are two types of analyses in this literature.
% I revert your edits to my original one, specific case of n=1 or 2 to illustrate that their analysis is specific to 1 or 2 and cannot generalized. The reality is, if one wants to get a general result for n, sharp constants are hard. More refined results have to work with specific n or n tending to infinity. 
The first one focuses on the guarantees for the specific case of $n=1$ or $n=2$ \citep{babaioff2018two,allouah2023optimal}.
% add references
The second one \citep[e.g.,][]{huang2018making} establishes ``sample complexity bounds'' such that the uniform ERM variants achieve a $\left(1-\epsilon\right)$ fraction guarantee when the sample size grows at a rate depending on $\epsilon$, and also derives the rate at which the sample size needs to grow (as a function of $\epsilon$) for any data-based uniform pricing strategies to obtain a given $\left(1-\epsilon\right)$ fraction guarantee. \cite{allouah2022pricing} involves both types of analyses.
%Talk about necessary sample sizes
%Our framework answers the dual question that, given a sample size $n$, how fast do the pricing and revenue regrets decay as a function of $n$? This question is studied with an information theoretic lower bound and an upper bound with respect to specific algorithms in the worst case scenario. 

In this paper, we ask the related question, how fast the revenue deficiency decays as a function of $n$, and provide an answer using information-theoretic lower bounds (independent of algorithms) and upper bounds with respect to specific algorithms in the worst case scenarios.\footnote{A large literature studies data-based auctions by focusing on guarantees for revenue deficiencies (instead of fractions), such as how the revenues from the data-based strategies converge in probability to the true-distribution benchmark, e.g., \cite{baliga2003market,goldberg2006competitive,gonccalves2020statistical}. This line of work does not consider the optimal rates of convergence or optimal sample size requirements.} The main difference with the majority of the data-based literature is that, we study third-degree price discrimination (3PD) with a continuous covariate and compare the revenue performance of data-based 3PD and uniform pricing strategies. 
% \textcolor{red}{Denis and Haitian: Can you edit what I wrote below?} 
%Denis, I revert your edit back to what I wrote before. one-dimensional function makes people think a univariate function but here I really want to say a single parameter (this makes more sense to statisticians). In statistics, when we say infinitely dimensional smooth function of interest, it means the function cannot be written as a taylor expansion of finite orders, instead, infinitely order. 

To understand why the 3PD problem in our context is more challenging than the uniform pricing problem, note that fundamentally, the latter tries to learn the constant optimal pricing function (a scalar parameter) while the former tries to learn an optimal pricing function of the covariate (an infinitely-dimensional parameter), where the deficiency in the \textit{expected} revenue concerns the entire pricing function at all covariate values. 
%Denis works on the next sentence. 
% The object of interest (revenue deficiency) and the framework we adopt is natural for those with a statistical background.
Our framework allows us to tackle several challenging aspects of the 3PD problem, which might be difficult to analyze with the toolkit in the existing pricing literature.
% \footnote{To be clear, we are not advocating that our framework is more superior to others and simply view it as an alternative for analyzing data-based pricing strategies.}
We describe one example below. 
%The vast literature in ERM focuses on the uniform pricing.

Somewhat related, \cite{devanur2016sample} studies sample complexity of optimal pricing with ``side information''.
% which has a slight resemblance in the setup to our 3PD problem.
In their ``signals model'' (Sections 5.1 and 5.3), there is a covariate (signal) $X \in [0,1]$, and the seller can condition the data-based reserve price on the covariate.
For the single-buyer case (which would correspond to our 3PD problem), they derive upper and lower sample complexity bounds.
Importantly, they assume that the true joint distribution $F_{Y,X}$ has the following property: larger values of $X$ are associated with larger values of $Y$ in the sense of first-order stochastic dominance of conditional distributions.
% In particular, these authors impose a stochastic dominance assumption that simply says: a large signal realization (of $X$ on $[0,1]$) implies a higher valuation $Y$, the authors' so called ``side information".
In contrast, our 3PD setup imposes no assumptions about the relationship between the covariate $X$ and the valuation $Y$; meanwhile, our proposed $K$-market ERM strategy learns the relationship from the data. 
Moreover, our $K$-market ERM strategy attains the optimal rate of convergence in revenue deficiency (as described before), while the upper and lower bounds in \cite{devanur2016sample} have different rates, and hence, the optimal sample size requirement is unclear.

\section{Setup}
\label{sec:the setup}

% Let us introduce the problem setup.
The seller is selling an
item to a buyer. Let $Y\in[0,1]$ be the valuation (i.e., willingness
to pay) of the buyer, and $X$ the covariate (such as a characteristic) associated with the buyer. The joint distribution of $(Y,X)$
is denoted by $F_{Y,X}$. We assume that $X$ is supported on a bounded
interval, and without loss of generality, we take the interval to
be $[0,1]$.\footnote{The assumption that $Y,X \in [0,1]$ is made merely for simplicity. First of all, our results in Sections \ref{sec:upper} and \ref{sec:lower} hold for general bounded supports. Second, the precise knowledge of the support boundaries is unnecessary because they can be readily estimated using extremum order statistics. The estimator converges at a superconsistent rate of $n^{-1}$ \citep[see, e.g.,][]{hirano2003asymptotic}, significantly faster than the convergence of revenue deficiency that we show in Section \ref{sec:upper}. Therefore, in our analysis, the estimation error resulting from the unknown support is negligible. We are grateful to a referee for raising this discussion.\label{footnote:support}}

%\footnote{In principle, the could observe multi-dimensional covariate data.}
% To Haitian, I add this elsewhere
Given a covariate value, the seller wants to set a price
according to a mapping from the covariate to a set of prices. We use
$\mathcal{D}$ to denote the set of all pricing functions: 
\begin{align*}
\mathcal{D} & \equiv\{p\colon[0,1]\rightarrow[0,1],\text{ measurable}\}.
\end{align*}
For a generic pricing strategy $p\in\mathcal{D}$, the price depends
on the covariate value $x$. This scheme falls in the realm of third-degree
price discrimination (3PD). Uniform pricing can be viewed as a special
case where the price is the same for all covariate values. We use
$\mathcal{U}$ to denote the set of all uniform pricing functions:
\begin{align*}
\mathcal{U} & \equiv\{p\in\mathcal{D}\colon p\text{ is a constant function}\}.
\end{align*}
To lighten the notation, we express $p\in\mathcal{U}$ as a scalar
rather than a function for the uniform pricing problem.

Let $F_{Y|X}$ be the conditional CDF and $f_{X}$ the marginal
density function. Given a price $y\in[0,1]$ and a covariate value $x\in[0,1]$,
there are $1-F_{Y|X}(p|x)$ buyers whose valuation is above
the price. The revenue generated from these buyers is 
\begin{align}
r(y,x,F_{Y,X})\equiv(1-F_{Y|X}(y|x))y, \label{eq:revenue opt}
\end{align}
and the \emph{expected} revenue for a pricing function $p$ is 
\begin{align*}
R(p,F_{Y,X})\equiv\int_{0}^{1}r(p(x),x,F_{Y,X})f_{X}(x)dx.
\end{align*}
In various places of the rest of the paper, we will slightly abuse the notation and denote $r(p,x)\equiv r(p(x),x)$ when $p$ is a pricing function and also write $r(y,x) = r(y,x,F_{Y,X})$ for brevity when $F_{Y,X}$ is clear from the context.
In the special case where the pricing strategy is uniform (i.e., $p\in\mathcal{U}$),
the revenue only depends on the marginal distribution $F_{Y}$: 
\begin{equation*}
R(p,F_{Y,X}) 
%& =p\int_{0}^{1}(1-F_{Y|X}(p|x))f_{X}(x)dx\\
% & =p\int_{0}^{1}\mathbb{P}(Y\geq p|X=x)f_{X}(x)dx\\
=p\mathbb{P}(Y\geq p)
 =p(1-F_{Y}(p)),p\in\mathcal{U}.
\end{equation*}
The true-distribution optimal 3PD strategy $p_{D}^{*}$ is the one
that maximizes the revenue: 
\begin{align*}
R(p_{D}^{*},F_{Y,X})=\sup_{p\in\mathcal{D}}\int_{0}^{1}r(p(x),x,F_{Y,X})f_{X}(x)dx.
\end{align*}
In a similar fashion, we denote $p_{U}^{*}$ as the true-distribution
optimal uniform pricing strategy such that 
\begin{align*}
R(p_{U}^{*},F_{Y})=R(p_{U}^{*},F_{Y,X})=\sup_{p\in\mathcal{U}}p(1-F_{Y}(p)).
\end{align*}
Note that $p_{D}^{*}$ depends on $F_{Y,X}$ and $p_{U}^{*}$ depends
on $F_{Y}$.

In terms of generating revenue, the classic pricing theory shows
that 3PD is at least as good as uniform pricing when the joint distribution
$F_{Y,X}$ is known to the seller. In this case, we can solve analytically or numerically
for the optimal pricing strategies $p_{D}^{*}$ and $p_{U}^{*}$.
Since $\mathcal{U}$ is contained in $\mathcal{D}$, $p_{D}^{*}$
must achieve a (weakly) better revenue than $p_{U}^{*}$. Intuitively,
when $Y$ is correlated with $X$, $p_{D}^{*}$ utilizes the information
in $X$.

% Add EXPLICITLY an assumption saying that the sample consists of iid observations. Make it Assumption 1 and reindex later assumptions. 
Now suppose that the seller knows neither $F_{Y,X}$ nor $F_{Y}$,
but instead observes a random sample of $\textit{data}\equiv\{(Y_{i},X_{i}),1\leq i\leq n\}$
drawn from $F_{Y,X}$, or $\textit{data}_{Y}\equiv\{Y_{i},1\leq i\leq n\}$
from $F_{Y}$, and wants to construct a pricing strategy based
on the sample.
The following assumption is used throughout this paper.

\begin{assumption} \label{assumption:i.i.d.}
    $\textit{data}$ and $\textit{data}_{Y}$ consist of i.i.d. draws from $F_{Y,X}$ and $F_{Y}$, respectively.
\end{assumption}
The following assumption is used to establish the results concerning our 3PD problem. 
%The assumption specifies the class of distributions that the true $F_{Y,X}$ can range over for our theoretical guarantees to be valid. 
Instead of a single known joint distribution $F_{Y,X}$, there is a class $\mathcal{F}$ of unknown distributions which are deemed possible and our data-based pricing strategies can be evaluated within this class.
The functions in $\mathcal{F}$ satisfy several smoothness and regularity conditions stated below.
%in Sections \ref{sec:upper-1} and \ref{sec: 3PD lower bound}.
% It introduces the class of joint distributions ubject to someconditions. These conditions are needed to establish results concerning our 3PD problem.  

\begin{assumption} \label{def:pd}

Any distribution function in the set $\mathcal{F}$ satisfies the following conditions.

\begin{enumerate}[label = (\roman*)]
\item\label{as:lipschitz} (Lipschitz continuity) There exists $C_{0}\in(0,\infty)$ such that,
for any $y,y',x\in[0,1]$, the conditional density $f_{Y|X}$ satisfies
\begin{align*}
|f_{Y|X}(y|x)-f_{Y|X}(y'|x)|\leq C_{0}|y-y'|.
\end{align*}
\item\label{as:concave} (Strong concavity) %There exists $\kappa>0$ such that 
 There exists $C^{*}>0$ such that the revenue function $r(y,x)\equiv y(1-F_{Y|X}(y|x))$
is strictly concave with second-order derivative 
\begin{align}
-2f_{Y|X}(y|x)-y\frac{\partial}{\partial y}f_{Y|X}(y|x)\leq-C^{*},\text{ a.e.}\label{eq:2nd derivative}
\end{align}
\item\label{as:interior} (Interior solution) For each $x\in[0,1]$, the optimal price is an
interior solution; that is, $p_{D}^{*}(x;F_{Y,X})\in(0,1)$. 
\item\label{as:diff} (Differentiability) The conditional distribution function $f_{Y|X}(y|x)$
is continuously differentiable in $(x,y)$ in a neighborhood of the
curve $\{(x,p_{D}^{*}(x;F_{Y,X})):x\in[0,1]\}$. 
\item\label{as:bounded} (Boundedness) The functions 
\begin{align}
\left|2f_{Y|X}(y|x)+y\frac{\partial}{\partial y}f_{Y|X}(y|x)\right|\label{eq:Boundedness_1}\\
\text{ and }\left|\frac{\partial}{\partial x}F_{Y|X}(y|x)+y\frac{\partial}{\partial x}f_{Y|X}(y|x)\right|\label{eq:Boundedness_2}
\end{align}
are bounded from above by $\overline{C}\in\left(0,\,\infty\right)$
a.e.
\item\label{as:marg} (Marginal density) The marginal density $f_{X}$ is 
bounded from above
by $\overline{C}'\in\left(0,\,\infty\right)$ and 
bounded away
from zero; that is, $f_{X}\geq\underline{C}>0$. 
\end{enumerate}
\end{assumption}

Part \ref{as:lipschitz} requires the conditional density function to be sufficiently smooth.
The partial derivative $\frac{\partial}{\partial y}f_{Y|X}(y|x)$ is well defined almost everywhere because $f_{Y|X}$ is Lipschitz continuous and hence absolutely continuous. Part \ref{as:interior} ensures that the first-order condition holds for the optimal price.
Part \ref{as:diff} ensures that the optimal pricing function $p_{D}^{*}(x;F_{Y,X})$ is sufficiently smooth in $x$.
Part \ref{as:bounded} requires the partial derivatives of the revenue to be bounded.
Part \ref{as:marg} ensures that the covariate does not take vanishing or dominating values.

Under part \ref{as:concave}, the optimal price is well defined. Part \ref{as:concave} is a standard assumption in the optimal auctions/pricing literature also known as regularity \citep{myerson1981optimal}, which is a so-called ``strong concavity" condition from machine learning theory. It is well known that any distribution $F$ with the monotone hazard rate satisfies regularity.

% \subsection{The class of marginal distributions of interest}
% \label{sec:class of marginal}

Analogously, the following assumption is used to establish the results for the uniform pricing problem which concerns a class $\mathcal{F}^{U}$ of unknown marginal distributions that are deemed possible.
%Our assumption lists several smoothness and regularity conditions for all $F_{Y}$ in a class $\mathcal{F}_Y$ of marginal distributions that are considered possible.

%The following assumption concerns the uniform pricing problem and is used to establish results in Sections \ref{sec:upper-2} and \ref{sec:lower-bound-unif}.
%The following assumption is used to establish the results concerning our 3PD problem. 
%The assumption specifies the class of distributions that the true $F_{Y,X}$ can range over for our theoretical guarantees to be valid. 
% Instead of a single known joint distribution $F_{Y}$, there is a class of distributions $\mathcal{F}$ which are deemed possible and our data-based pricing strategies can be evaluated within this class.
% This assumption requires several smoothness and regularity conditions for all $F_{Y}\in\mathcal{F}$.

\begin{assumption} \label{def:uniform} 
%Every joint distribution $F_{Y,X}$ in the set $\mathcal{F}$ is such that  marginal distribution$F_{Y}$ 
Let $\mathcal{F}^{U}$ be the set of marginal distributions such that any $F_{Y}\in \mathcal{F}^{U}$ satisfies parts (i), (ii), and (v)(\ref{eq:Boundedness_1})
of \cref{def:pd} with $f_{Y|X}(y|x)$ replaced by $f_{Y}(y)$.
Moreover, the optimal price is an interior solution; that is, $p_{U}^{*}(F_{Y})\in(0,1)$.
The distribution function $f_{Y}(y)$ is continuously differentiable in $y$ in a neighborhood of $p_{U}^{*}(F_{Y})$. 

\end{assumption}

\begin{remark}
    By defining $\mathcal{F}^{U}$ in the way above, note that the marginal distribution associated with any joint distribution satisfying (i), (ii) and (v)(\ref{eq:Boundedness_1}) of Assumption \ref{def:pd} satisfies the counterpart conditions in Assumption \ref{def:uniform}.
\end{remark}

\paragraph{Notation.}

% The following notation will be used throughout the paper.
For functions $f(n)$ and $g(n)$, we write $f(n)\gtrsim g(n)$ to mean that $f(n)=\Omega(g(n))$. Similarly, we write $f(n)\lesssim g(n)$ to
mean that $f(n)=O(g(n))$. The notation
$f(n)\asymp g(n)$ means that $f(n)=\Theta(g(n))$; that is, $f(n)=\Omega(g(n))$ and $f(n)=O(g(n))$. As a general rule for this paper, the various
$c$ and $C$ constants denote positive universal constants that are
independent of the sample size $n$, and may vary from place to place.
%For functions $f(n)$ and $g(n)$, we write $f(n)\gtrsim g(n)$ to mean that $f(n)\geq cg(n)$
%for some constant $c\in(0,\infty)$. Similarly, we write $f(n)\lesssim g(n)$ to
%mean that $f(n)\leq c'g(n)$ for some constant $c'\in(0,\infty)$, and write
%$f(n)\asymp g(n)$ when $f(n)\gtrsim g(n)$ and $f(n)\lesssim g(n)$
%hold simultaneously. As a general rule for this paper, the various
%$c$ and $C$ constants denote positive universal constants that are
%independent of the sample size $n$, and may vary from place to place.
For functions $f$ and $g$, the unweighted $L_{2}$ norm ($L_{2}$ as the short
form) $\lVert f-g\rVert_{2}\equiv\left(\int_{0}^{1}\left[f\left(x\right)-g\left(x\right)\right]^{2}dx\right)^{\frac{1}{2}}$.

\section{The $K$-markets ERM strategy}

\label{sec:upper}

In this section, we
propose the $K$-markets ERM strategy, and compare its revenue with that of the true-distribution optimal 3PD strategy.
In particular, we provide upper bounds for the pointwise and expected revenue deficiency as a function of $n$. We also compare the revenue of the $1$-market (uniform) ERM strategy with that of the true-distribution uniform optimum, and provide an upper bound on the revenue deficiency.

%In the previous section, we demonstrate the information-theoretic
%limitations of \textit{any} data-based pricing strategies in the
%realm of 3PD and uniform pricing, respectively. 
%In this section, we
%propose specific data-based pricing strategies and show that their
%performance bounds match the previous lower bounds up to constant
%factors (that are independent of $n$). This in turn shows that our
%lower bounds are tight in terms of rates.

\subsection{Price discrimination}\label{sec:upper-1}

We propose the \emph{``$K$-markets''} ERM strategy
for the data-based 3PD problem with a continuous covariate:

\begin{enumerate}
%[label = (\roman*)]
\item Divide the individuals into $K(\equiv K_{n})$ markets by splitting
the covariate space $[0,1]$ into $K$ equally spaced intervals 
\begin{align*}
I_{k}\equiv[(k-1)/K,k/K],k=1,\ldots,K.
\end{align*}
% \footnote{\textcolor{red}{I am a bit confused about below...}\denis{In the numerical studies, when computing the expected $K$-markets revenue, for each realization of $\data$, we set the number of markets $\tilde K$ to the highest $\tilde K \leq K$ such that each interval has at least one observation.}}
\item For each market $I_{k}$, based on the empirical distribution of $\{Y_{i}\colon X_{i}\in I_{k}\}$,
\begin{equation}
\hat{F}_{k}(y)=\frac{1}{n_{k}}\sum_{i\colon X_{i}\in I_{k}}\mathbf{1}\{Y_{i}\leq y,X_{i}\in I_{k}\} \label{eq:n_k}
\end{equation}
where $n_{k}$ is the cardinality of $\{i\colon X_{i}\in I_{k}\}$, solve
for the optimal price $\hat{p}_{D,k}$ as follows,
\begin{align*}
\hat{p}_{D,k}\equiv\argmax_{p\in [0,1]}p(1-\hat{F}_{k}(p)).%\label{eq:3PD strategy}
\end{align*}
The resulting pricing function is a piece-wise function 
\begin{align*}
\hat{p}_D(x;\textit{data})=\hat{p}_{D,k},x\in I_{k}.
\end{align*}
If the $k$th market does not contain any observation,
then simply choose $\hat{p}_{D,k}$ to be any arbitrary number in
$[0,1]$. Doing so has no impact on the asymptotic guarantee implied
by the following theorem. For practical implementation, the desired
choice may change from context to context, depending on the seller's
specific knowledge about a buyer, and the related analysis would be
beyond the scope of this paper. 
\end{enumerate}

\begin{theorem} \label{thm:upper-bound-pd} Suppose Assumptions \ref{assumption:i.i.d.}
and \ref{def:pd} hold. There exists a positive universal
constant $c_{1}\in(0,\infty)$ such that the following results hold.\footnote{For example, the constant $c_1=1$ when $X \sim U[0,1]$.}

\begin{enumerate}[label = (\roman*)]
\item At a given covariate value $x_{0}$, the revenue generated by the
$K$-markets ERM strategy $\hat{p}_{D}$ satisfies 
\begin{align*}
\sup_{F_{Y,X}\in\mathcal{F}}\left(r(p_{D}^{*},x_{0})-\mathbb{E}_{F_{Y,X}}\left[r(\hat{p}_{D}(\textit{data}),x_{0})\right]\right)\lesssim1/K^{2}+(K/n)^{2/3}\\
{+\exp\left(-\frac{nc_{1}^{2}}{8K^{2}}+\log K\right)},\quad x_{0}\in I_{k},
\end{align*}
where the expectation $\mathbb{E}_{F_{Y,X}}$ is taken with respect
to $\textit{data}\sim F_{Y,X}$ and $K$ satisfies
$\frac{c_{1}}{K}\leq\frac{1}{2}$; moreover, 
\[
(K/n)^{2/3}+1/K^{2}\asymp n^{-1/2}\textrm{ when }K\asymp n^{1/4},
\]
in which case,
\[
\sup_{F_{Y,X}\in\mathcal{F}}\left(r(p_{D}^{*},x_{0})-\mathbb{E}_{F_{Y,X}}\left[r(\hat{p}_{D}(\textit{data}),x_{0})\right]\right)\lesssim n^{-1/2}.
\]

\item The \textit{expected} revenue generated by the $K$-markets ERM strategy
$\hat{p}_{D}$ satisfies 
\begin{align*}
\sup_{F_{Y,X}\in\mathcal{F}}\left(R(p_{D}^{*},F_{Y,X})-\mathbb{E}_{F_{Y,X}}\left[R(\hat{p}_{D}(data),F_{Y,X})\right]\right)\lesssim1/K^{2}+(K/n)^{2/3}\\
{+\exp\left(-\frac{nc_{1}^{2}}{8K^{2}}+\log K\right)}
\end{align*}
where the expectation $\mathbb{E}_{F_{Y,X}}$ is taken
with respect to $\textit{data}\sim F_{Y,X}$ and $K$ satisfies $\frac{c_{1}}{K}\leq\frac{1}{2}$;
moreover, 
\[
(K/n)^{2/3}+1/K^{2}\asymp n^{-1/2}\textrm{ when }K\asymp n^{1/4},
\]
in which case,
\[
\sup_{F_{Y,X}\in\mathcal{F}}\left(R(p_{D}^{*},F_{Y,X})-\mathbb{E}_{F_{Y,X}}\left[R(\hat{p}_{D}(data),F_{Y,X})\right]\right)\lesssim n^{-1/2}.
\]

\end{enumerate}
\end{theorem}

\begin{remark} The term $\exp\left(-\frac{nc_{1}^{2}}{8K^{2}}+\log K\right)$ is technical and comes from a binomial tail bound on $n_k$ in (\ref{eq:n_k}); see (\ref{eq: A_q event}) and the following derivation in the appendix for more detail. Suppose $8K^{2}=n^{1-c}c_{1}^{2}$
with $c\in(0,1)$ so that $\frac{nc_{1}^{2}}{8K^{2}}=n^{c}$ (for
example, $c=\frac{1}{2}$ which gives $K\asymp n^{1/4}$ as in the
theorem above). Then, there exists some positive universal constant
$c_{0}\in(0,\infty)$ such that
\[
\exp\left(-\frac{nc_{1}^{2}}{8K^{2}}+\log K\right)=\exp(-c_{0}n^{c})\:\textrm{ as }n\rightarrow\infty.
\]
In this case, note that $\exp(-c_{0}n^{c})=o\left((K/n)^{2/3}\right)$
and the term $\exp\left(-\frac{nc_{1}^{2}}{8K^{2}}+\log K\right)$
can be dropped from the bounds in Theorem \ref{thm:upper-bound-pd}. 

\end{remark}

Note that having an upper bound on the supremum of the revenue deficiency immediately implies that this upper bound holds for every distribution $F_{Y,X} \in \mathcal{F}$.
Moreover, the revenue of the $K$-markets ERM strategy is guaranteed to have a convergence rate no greater than the provided upper bound, in particular $n^{-1/2}$ when $K \asymp n^{1/4}$.

The interpretation of our results is as follows.
The deficiency in revenues comes from two sources. The
first part $(K/n)^{2/3}$ is related to the ``variance'', which is due to the randomness of the sample, making $\hat{F}_{k}(\cdot)$ different from its expectation.
The second part $1/K^{2}$ is related to the approximation error due to the fact that we set
the same price for all covariate values in the market $I_{k}$.
Note that more discrimination (larger $K$) increases the ``variance'' but reduces the approximation error, and selecting $K \asymp n^{1/4}$ minimizes the upper bound on revenue deficiency.

To show $(K/n)^{2/3}$, we use a peeling argument and other tools from empirical process theory \citep{alexander1987rates,wellner1996,van_de_geer2000empirical}. Even though this toolkit is widely used in mathematical statistics and theoretical machine learning, to our knowledge, it has not been introduced to the data-based pricing literature. Showing $1/K^{2}$ requires controlling $|\tilde{p}_{k}-p_{D}^{*}(x_{0})|$, where $\tilde{p}_{k}\equiv\argmax_{p\in[0,1]}p\mathbb{P}(Y>p,X\in I_{k})$ and $x_{0}\in I_{k}$. Using the implicit function theorem, we show that, (i) $p_{D}^{*}(x)$ is Lipschitz continuous
on $[0,1]$, and (ii) $\tilde{p}_{k}$ is a weighted average
of $p_{D}^{*}(x),x\in I_{k}$. These facts imply that $|\tilde{p}_{k}-p_{D}^{*}(x_{0})|^{s}\lesssim {1}/K^{s}$ for any fixed $s\geq1$.

%\textcolor{red}{The dependence of the convergence rates on the extra dimension due to the covariate $X$ turns out to be unimprovable. Particularly, in Section \ref{sec: 3PD lower bound}, we establish lower bounds  which show that no 3PD strategy is able to escape the curse of the extra dimensionality in the worst case and hence the $K$-markets ERM strategy is not an exception. }

\subsection{Uniform pricing\label{sec:upper-2}}

Based on the empirical distribution of $\{Y_{i}\}_{i=1}^{n}$
\[
\hat{F}(y)=\frac{1}{n}\sum_{i=1}^{n}\mathbf{1}\{Y_{i}\leq y\},
\]
the uniform ERM strategy simply solves for the optimal
price $\hat{p}_{U}$ as follows:
\begin{align*}
\hat{p}_{U}(\textit{data}_{Y})\equiv\argmax_{p\in[0,1]}p(1-\hat{F}(p)).%\label{eq:uniform pricing strategy}
\end{align*}
We have the following result as a corollary of Theorem \ref{thm:upper-bound-pd}.
\begin{corollary} \label{thm:upper-bound-unif} Let Assumptions \ref{assumption:i.i.d.} and \ref{def:uniform} hold. The revenue generated
by $\hat{p}_{U}$ satisfies 
\begin{align*}
\sup_{F_{Y}\in\mathcal{F}^{U}}\left(R(p_{U}^{*},F_{Y})-\mathbb{E}_{F_{Y}}\left[R(\hat{p}_{U}(\textit{data}_{Y}),F_{Y})\right]\right)\lesssim n^{-2/3}
\end{align*}
where the expectation $\mathbb{E}_{F_{Y}}$ is taken with respect
to $\textit{data}_{Y}\sim F_{Y}$.
\end{corollary}

The 3PD ERM problem with a continuous covariate is more delicate than the uniform ERM problem.
%\footnote{In spirit, the comparison in the difficulty of the 3PD ERM versus uniform ERM is analogous to that of estimating a conditional density $f_{Y|X}$ versus estimating a marginal density $f_Y$ in the statistical literature. Conditional estimation problems are always harder.} 
The latter does not involve a (continuous) covariate and hence incurs no approximation error. Contrasting Corollary \ref{thm:upper-bound-unif} with Theorem \ref{thm:upper-bound-pd}, one can see that the only source of revenue deficiency in the uniform ERM strategy comes from the ``variance''. 

\subsection{Welfare analysis}

From the perspective of a policy maker, it is also of interest to study
the welfare under the specific pricing strategies in Sections \ref{sec:upper-1}
and \ref{sec:upper-2}. In this section, we derive the rate at which
the welfare generated by these data-based pricing strategies converges
to the welfare generated by their respective true-distribution optimal strategies. 

% We don't always have the subscript F(Y,X) or F(Y) in the expectation in the proofs but in the statements of theorems, we can have them just for consistency throughout the main text. -Ying
We assume that there is no production cost for the seller, and there is no utility for the seller if the item is not sold. 
These assumptions are typically imposed in a benchmark
model in the auction and pricing literature. For any pricing strategy
$p\in\mathcal{D}$, its welfare can be written as 
\begin{align*}
W(p,F_{Y,X})\equiv\mathbb{E}_{F_{Y,X}}[Y\mathbf{1}\{Y>p(X)\}].
\end{align*}

\begin{theorem} \label{thm:welfare} \ 

\begin{enumerate}[label = (\roman*)]
\item Let Assumptions \ref{assumption:i.i.d.} and \ref{def:pd} hold.
% Recall $\mathcal{F}$ given in Section \ref{sec:class-of-distributions}.
Take $K\asymp n^{1/4}$ in the \emph{``$K$-markets''} ERM strategy. Then 
\begin{align*}
\sup_{F_{Y,X}\in\mathcal{F}}\mathbb{E}_{F_{Y,X}}|W(\hat{p}_{D}(\textit{data}),F_{Y,X})-W(p_{D}^{*},F_{Y,X})|\lesssim n^{-1/4}.
\end{align*}
\item Let Assumptions \ref{assumption:i.i.d.} and \ref{def:uniform} hold.
% Recall $\mathcal{F}^{U}$ given in Section \ref{sec:class of marginal}.
Then 
\begin{align*}
\sup_{F_{Y}\in\mathcal{F}^{U}}\mathbb{E}_{F_{Y}}|W(\hat{p}_{U}(\textit{data}_{Y}),F_{Y})-W(p_{U}^{*},F_{Y})|\lesssim n^{-1/3}.
\end{align*}
\end{enumerate}
\end{theorem}

%To Haitian, I am commenting this out for now. Please don't delete it. I am still debating on this. 
%\section{Discussion}\label{sec:discussion}

%The first order condition of (\ref{eq:revenue opt}) yields
%\begin{align}
%p(x)=\frac{1-F_{Y|X}(p(x)|x)}{f_{Y|X}(p(x)|x)}. \label{eq:FOC}
%\end{align}
%The optimal pricing function $p(x)$ is a fixed point solution to (\ref{eq:FOC}), and the estimation of $p(x)$ turns out having the same optimal rate of convergence as the estimation of the joint density function $f_{Y,X}=f_{Y|X}f_{X}$. Similarly, the optimal uniform pricing function $p$ is a fixed point solution to \begin{align}
%p=\frac{1-F_{Y}(p)}{f_{Y}(p)}
%\end{align}
%and the estimation of $p$ turns out having the same optimal rate of convergence as the estimation of the marginal density function $f_{Y}$.
%Despite that these observations are made \emph{ex post} and do not prove the results in Sections \ref{sec:lower} and \ref{sec:upper}, these \emph{ex post} insights indicate that with a $d$-dimensional covariate (such that $d$ is small relative to $n$), the optimal rate of convergence with respect to the revenue deficiency in the 3PD case is $n^{-\frac{2}{3+d}}$.

\section{Information-theoretic limitation of data-based pricing}\label{sec:lower}

The revenue deficiency
in the $K$-markets ERM strategy and uniform ERM strategy in Section \ref{sec:upper} is $O\left(n^{-\frac{2}{2+2}}\right)$
and $O\left(n^{-\frac{2}{2+1}}\right)$, respectively. Note the ``$2$''
and ``$1$'' in the second terms of the denominators of the exponents
in these upper bounds, where the ``$2-1=1$'' difference is a result
of the extra dimension from the covariate $X$ in the 3PD problem.
Without any lower bounds, the upper bounds alone are unable to confirm
that the curse of the extra dimensionality necessarily exists and is
unimprovable.

In this section, we establish lower bounds to show that \textit{no 3PD strategy
is able to escape the curse of the extra dimensionality} and hence the $K$-markets ERM strategy is not an exception. Our lower bounds also conclude the optimality of the convergence rates $n^{-1/2}$ and $n^{-2/3}$ from Section \ref{sec:upper} within the respective realms of 3PD and uniform pricing. Therefore, the dependence of the extra dimension due to $X$ in our 3PD problem cannot be improved. As discussed in the introduction, rate optimality speaks to the optimality or efficiency of the growth requirement of the sample size.

%In particular, we show the fundamental information-theoretic limitation of \textit{any} data-based pricing strategies by comparing them with the true-distribution optimal pricing strategies.

For the lower bounds, it makes little sense to consider a framework recommending the data-based
pricing strategies that are only good for a single distribution. For
any \textit{fixed} joint distribution $F_{Y,X}$, there is always
a trivial data-based pricing strategy: simply ignore the data and
select the optimal pricing scheme given $F_{Y,X}$. For this particular
distribution, the revenue deficiency is zero. However, such a pricing
strategy may perform poorly under other distributions of $(Y,X)$.
One solution to circumvent this issue is to compute the worst revenue deficiency over the class $\mathcal{F}$ of possible distributions.

To be specific, we consider the minimax difference in the revenues
at a given covariate value $x_{0}$ for 3PD,
\begin{align*}
\mathscr{R}_{n}^{D}(x_{0};\mathcal{F})\equiv\inf_{\check{p}_{D} \in \check{\mathcal{D}}}\sup_{F_{Y,X}\in\mathcal{F}}\left(r(p_{D}^{*},x_{0},F_{Y,X})-\mathbb{E}_{F_{Y,X}}\left[r(\check{p}_{D}(\textit{data}),x_{0},F_{Y,X})\right]\right),
\end{align*}
and the minimax difference in the \textit{expected} revenues for 3PD,
\begin{align*}
\mathscr{R}_{n}^{D}(\mathcal{F})\equiv\inf_{\check{p}_{D}\in\check{\mathcal{D}}}\sup_{F_{Y,X}\in\mathcal{F}}\left(R(p_{D}^{*},F_{Y,X})-\mathbb{E}_{F_{Y,X}}[R(\check{p}_{D}(\textit{data}),F_{Y,X})]\right),
\end{align*}
where the expectation $\mathbb{E}_{F_{Y,X}}$ is taken with respect
to $\textit{data}\sim F_{Y,X}$ and $R(\cdot,\cdot)$ is defined in Section \ref{sec:the setup}. 
In the definitions above, $\check{p}_{D}(\textit{data})$ is a pricing function in $\mathcal{D}$ and $\check{p}_{D}(x_{0};\textit{data})$ corresponds to its value at a covariate $x_{0}\in[0,1]$; moreover, $\check{\mathcal{D}}$ is the set of \textit{all data-based} 3PD functions $\check{p}_{D}$.

Similarly, for uniform pricing, we consider 
\begin{align*}
\mathscr{R}_{n}^{U}(\mathcal{F}^{U})\equiv\inf_{\check{p}_{U}\in\check{\mathcal{U}}}\sup_{F_{Y}\in\mathcal{F}^{U}}\left(R(p_{U}^{*},F_{Y})-\mathbb{E}_{F_{Y}}[R(\check{p}_{U}(\textit{data}_{Y}),F_{Y})]\right),
\end{align*}
where the expectation $\mathbb{E}_{F_{Y}}$ is taken with respect
to $\textit{data}_{Y}\sim F_{Y}$. 
In the definition above, $\check{p}_{U}(\textit{data}_{Y})$ is a uniform pricing function in $\mathcal{U}$ and $\check{p}_{U}(x_{0};\textit{data}_Y)$ corresponds to its value at a covariate $x_{0}\in[0,1]$; moreover, $\check{\mathcal{U}}$ is the set of \textit{all data-based} uniform pricing functions $\check{p}_{U}$.

In what follows, we derive a lower bound for $\mathscr{R}_{n}^{D}(x_{0};\mathcal{F})$,
$\mathscr{R}_{n}^{D}(\mathcal{F})$ and $\mathscr{R}_{n}^{U}(\mathcal{F}^{U})$,
respectively. These lower bounds are algorithm independent and reveal
the fundamental information-theoretic limitation of data-based pricing strategies. 
%\textcolor{red}{Quoting \cite{wainwright2019high}: \textit{lower bounds of this type can establish that known---and possibly polynomial-time---estimators are statistically “optimal”\dots In this case, there is little purpose in searching for estimators with
%lower statistical error\dots}}

\subsection{Price discrimination} \label{sec: 3PD lower bound}

%\subsubsection{Lower bounds}

% Two questions arise when studying the information theoretic limitation of pricing. The second question asks, at a given covariate value x_{0}, what is the lower bound on the quantity  for the bestperforming  in the worst case distribution? The second question asks, what is the lower bound on  for the bestperforming  in the worst case distribution ? These two problems require different tools to solve. The first problem is easier than the second problem.

The first theorem presents a lower bound  for the revenue difference at a given covariate value $x_{0}$,
between \textit{any} data-based 3PD strategy and the true-distribution
optimal 3PD strategy under the worst-case distribution by taking the supremum
over $\mathcal{F}$.

\begin{theorem}[Lower bounds for 3PD, deficiency in pointwise revenue] \label{thm:lower-bound-pd-1}
Let \cref{assumption:i.i.d.} hold. For any $\mathcal{F}$ satisfying \cref{def:pd} with $C^{*}\in(0,2)$ in (\ref{eq:2nd derivative}), the minimax difference
in the revenues at a given covariate value $x_{0}$ is bounded from
below as
\begin{align*}
\mathscr{R}_{n}^{D}(x_{0};\mathcal{F})\gtrsim n^{-1/2},\quad x_{0}\in(0,1),
\end{align*}
if $x_{0}n^{1/4}\geq c'$ and $(1-x_{0})n^{1/4}\geq c''$ for
some positive universal constants $c'$ and $c''$ (independent
of $n$ and $x_{0}$).

\end{theorem}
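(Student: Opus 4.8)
The plan is to prove \cref{thm:lower-bound-pd-1} by Le Cam's two-point method, after reducing the pointwise revenue deficiency to squared-error estimation of the scalar $p_D^*(x_0;\cdot)$. Under \cref{def:pd}, $r(\cdot,x_0,F)$ is $C^*$-strongly concave and maximized at the interior point $p_D^*(x_0;F)$, so $r(p_D^*(x_0;F),x_0,F)-r(y,x_0,F)\ge\tfrac{C^*}{2}(y-p_D^*(x_0;F))^2$ for all $y\in[0,1]$; taking expectations, for every $\check p_D\in\check{\mathcal D}$,
\[
 r(p_D^*(x_0;F),x_0,F)-\mathbb E_{F}\big[r(\check p_D(x_0;\textit{data}),x_0,F)\big]\ \ge\ \tfrac{C^*}{2}\,\mathbb E_{F}\big[(\check p_D(x_0;\textit{data})-p_D^*(x_0;F))^2\big].
\]
Hence it suffices to build two distributions $F_0,F_1\in\mathcal F$ that are (a) statistically indistinguishable, $n\,\mathrm{KL}(F_0\|F_1)\le\tfrac18$ (so $\|F_0^{\otimes n}-F_1^{\otimes n}\|_{\mathrm{TV}}\le\tfrac14$ by Pinsker), and (b) have well-separated optimal prices at $x_0$, $\Delta:=|p_D^*(x_0;F_0)-p_D^*(x_0;F_1)|\gtrsim n^{-1/4}$. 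The standard two-point reduction (bound $\max_i\mathbb E_{F_i}[\cdot]$ below by $\tfrac12\sum_i\mathbb E_{F_i}[\cdot]$, split on $\{|\check p_D(x_0;\textit{data})-p_D^*(x_0;F_0)|<\Delta/2\}$, apply the triangle inequality, and bound the total error probability below by $1-\|F_0^{\otimes n}-F_1^{\otimes n}\|_{\mathrm{TV}}$) then gives $\mathscr R_n^D(x_0;\mathcal F)\gtrsim C^*\Delta^2\gtrsim n^{-1/2}$.

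\textbf{Construction.} I would take $F_0$ with $f_X\equiv1$ and $f_{Y|X}\equiv1$ on $[0,1]$, so $p_D^*(\cdot;F_0)\equiv\tfrac12$ and $r(y,x_0,F_0)=y(1-y)$ with $r''\equiv-2$ (if $\mathcal F$ happens to exclude this density, any fixed $x$-independent element of $\mathcal F$ with strongly concave revenue serves identically). For $F_1$ I would perturb $F_0$ only inside a small rectangle around $(x_0,\tfrac12)$,
\[
 f^{1}_{Y|X}(y\mid x)=1+\beta\,g\!\left(\frac{x-x_0}{w}\right)\psi\!\left(\frac{y-\tfrac12}{\epsilon}\right),\qquad f^{1}_{X}\equiv1,
\]
where $g,\psi$ are fixed smooth profiles supported on $(-1,1)$, $g(0)=1$, $\int\psi=0$, $\psi(0)\neq0$, and all three scales are of order $n^{-1/4}$: $w=\epsilon=n^{-1/4}$ and $\beta=c_0 n^{-1/4}$ for a small absolute constant $c_0$. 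Since $\int\psi=0$, $f^{1}_{Y|X}(\cdot\mid x)$ is a density; the conditionals of $F_0,F_1$ differ only on $\{|x-x_0|<w\}\times\{|y-\tfrac12|<\epsilon\}$, so a second-order expansion of $\log(1+u)$ (with the first-order term vanishing because $\int\psi=0$) yields $\mathrm{KL}(F_0\|F_1)\asymp\beta^2 w\epsilon$, hence $n\,\mathrm{KL}(F_0\|F_1)\asymp n\beta^2 w\epsilon\asymp c_0^2\le\tfrac18$ once $c_0$ is small. At $x=x_0$ the perturbation shifts the first-order condition by $-\beta(\epsilon\Psi(0)+\tfrac12\psi(0))$ at the base price $\tfrac12$ (with $\Psi(u)=\int_{-1}^u\psi$); since there $r''=-2$ and $\beta/\epsilon=c_0$ is small, the optimal price moves by $p_D^*(x_0;F_1)-\tfrac12=-\tfrac{\psi(0)}{4}\beta(1+o(1))$, so $\Delta\asymp\beta\asymp n^{-1/4}$. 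Finally, the rectangle $[x_0-w,x_0+w]\times[\tfrac12-\epsilon,\tfrac12+\epsilon]$ must lie inside $[0,1]^2$; with $w=\epsilon=n^{-1/4}$ this is exactly the requirement $x_0 n^{1/4}\ge c'$ and $(1-x_0)n^{1/4}\ge c''$ in the statement.

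\textbf{Main obstacle.} The step I expect to need the most care is checking that $F_0$ and $F_1$ genuinely lie in $\mathcal F$---i.e.\ obey all of \ref{as:lipschitz}--\ref{as:marg} with the universal constants attached to the class---and calibrating $c_0$ accordingly. The perturbation has $y$-derivative of magnitude $\lesssim\beta/\epsilon=c_0$ and $x$-derivative of magnitude $\lesssim\beta/w=c_0$, so for $c_0$ small the Lipschitz bound \ref{as:lipschitz}, the boundedness bounds \ref{as:bounded}, and interiority \ref{as:interior} are routine; smoothness of $g,\psi$ gives \ref{as:diff}, and $f_X\equiv1$ gives \ref{as:marg}. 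The delicate part is strong concavity \ref{as:concave}: the base has $r''\equiv-2$ and the perturbation moves $r''$ by at most $C_1 c_0$ for some $C_1$ depending only on $g,\psi$, so $r''_{F_1}\le-2+C_1 c_0\le-C^*$ holds precisely because the hypothesis $C^*\in(0,2)$ leaves a fixed positive margin $2-C^*$ into which the perturbation fits---this is exactly where $C^*\in(0,2)$ enters. Once membership in $\mathcal F$ is secured and $c_0$ is fixed, assembling the two-point bound of the first paragraph completes the proof (the finitely many $n$ with $n^{-1/4}>\tfrac12$ are absorbed into the constant).
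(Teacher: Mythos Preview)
Your proposal is correct and follows essentially the same approach as the paper: reduce pointwise revenue deficiency to squared price estimation via strong concavity, then apply Le~Cam's two-point method with a localized bump perturbation of the uniform-on-$[0,1]^2$ baseline, scaled so that the statistical distance of the product measures stays bounded while the optimal price at $x_0$ shifts by $\asymp n^{-1/4}$. The paper uses Hellinger distance (via its Lemma~\ref{lm:lecam-functionals}) in place of your KL/Pinsker route, and a piecewise-linear $\phi_Y$ in place of your smooth $\psi$, but these are cosmetic differences; the scaling, the verification that the perturbed pair lies in $\mathcal F$ (including the role of $C^*<2$ to absorb the $O(\beta/\epsilon)$ change in $r''$), and the boundary condition yielding $x_0 n^{1/4}\ge c'$, $(1-x_0)n^{1/4}\ge c''$ all match.
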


The second theorem presents a lower bound  for the difference in \textit{expected} revenues between \textit{any}
data-based 3PD strategy and the true-distribution optimal 3PD
strategy under the worst-case distribution by taking the supremum over
$\mathcal{F}$.

\begin{theorem}[Lower bounds for 3PD, deficiency in expected revenue] \label{thm:lower-bound-pd} Let \cref{assumption:i.i.d.} hold. For any $\mathcal{F}$ satisfying \cref{def:pd} with $C^{*}\in(0,2)$ in (\ref{eq:2nd derivative}), the minimax difference
in the \textit{expected} revenues is bounded from below as 
\begin{align*}
\mathscr{R}_{n}^{D}(\mathcal{F})\gtrsim n^{-1/2}.
\end{align*}

\end{theorem}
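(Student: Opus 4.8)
The plan is to reduce the revenue deficiency to a minimax $L_2$ estimation problem for the optimal pricing function, and then lower-bound that risk via Fano's inequality over a packing built by tensorizing the localized perturbation behind \cref{thm:lower-bound-pd-1} across $\asymp n^{1/4}$ disjoint covariate blocks. First I would note that, for any $F_{Y,X}\in\mathcal F$, part \ref{as:concave} of \cref{def:pd} gives $\partial_y^2 r(y,x)\le -C^*$ a.e., while the first-order condition at the interior optimum (part \ref{as:interior}) gives $\partial_y r(p_D^*(x),x)=0$; integrating twice yields $r(p_D^*(x),x)-r(y,x)\ge\frac{C^*}{2}\,(y-p_D^*(x))^2$ for all $y\in[0,1]$. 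Substituting $y=\check p_D(x;\textit{data})$, taking expectations under $F_{Y,X}$, integrating against $f_X\ge\underline C$ (part \ref{as:marg}), and rearranging yields
\[
\mathscr R_n^D(\mathcal F)\ \ge\ \frac{C^*\underline C}{2}\,\inf_{\check p_D\in\check{\mathcal D}}\ \sup_{F_{Y,X}\in\mathcal F}\ \mathbb E_{F_{Y,X}}\big[\,\|\check p_D(\cdot;\textit{data})-p_D^*(\cdot;F_{Y,X})\|_2^2\,\big],
\]
so it suffices to exhibit a finite subfamily $\{F^{(\omega)}\}\subset\mathcal F$ on which the minimax $L_2^2$-risk of estimating $p_D^*$ is $\gtrsim n^{-1/2}$.

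Next I would build the packing. Fix a base conditional density $f_0(y)$ that does not depend on $x$, sits in the interior of the constraints of \cref{def:pd}, and has revenue $y(1-F_0(y))$ maximized at an interior point $y_0$. Put $m\asymp n^{1/4}$, $h=1/m\asymp n^{-1/4}$, split $[0,1]$ into blocks $B_1,\dots,B_m$ of length $h$, and for $\omega\in\{0,1\}^m$ set $f_{Y|X}^{(\omega)}(y\mid x)=f_0(y)+\delta\sum_j\omega_j\,\psi_h(y)\,\eta_j(x)$ with $f_X\equiv1$, where $\eta_j$ is a smooth bump supported inside $B_j$, $\psi_h$ is a fixed smooth \emph{mean-zero} profile of width $\asymp h$ centered at $y_0$, and $\delta=c_0 h\asymp n^{-1/4}$ for a small constant $c_0$. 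For $c_0$ small enough every $F^{(\omega)}$ lies in $\mathcal F$ (all clauses of \cref{def:pd} survive, consuming only the slack in the constants), and solving the perturbed first-order condition shows that for $x\in B_j$ with $\omega_j=1$ the optimal price is displaced off $y_0$ by $\asymp\delta$ on a fixed fraction of $B_j$ (while $p_D^{*,(\omega)}(x)=y_0$ on blocks with $\omega_j=0$), so $\|p_D^{*,(\omega)}-p_D^{*,(\omega')}\|_2^2\gtrsim \delta^2 h\,d_H(\omega,\omega')$, with $d_H$ the Hamming distance. Pruning $\{0,1\}^m$ with the Gilbert-Varshamov bound keeps $M=2^{\Omega(m)}$ codewords that are pairwise Hamming-$\Omega(m)$ apart, hence pricing functions pairwise separated by $\gtrsim(\delta^2 h m)^{1/2}=\delta\asymp n^{-1/4}$ in $\|\cdot\|_2$.

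Then I would run Fano's inequality. Since $f_X$ is common across the family, $\mathrm{KL}(F^{(\omega)}\,\Vert\,F^{(\omega')})\le\chi^2\lesssim\delta^2\big(\int\psi_h^2\big)\sum_j(\omega_j-\omega_j')^2\int_{B_j}\eta_j^2\asymp\delta^2 h^2\,d_H(\omega,\omega')$, so the $n$-fold KL between any two members is $\lesssim n\delta^2 h^2 m=n\delta^2 h\asymp c_0^2\,n^{1/4}$, which for small $c_0$ is a small fraction of $\log M\asymp m\asymp n^{1/4}$. Fano then forces any data-based estimator to misidentify $\omega$ in the associated $M$-ary test with probability $\ge\frac12$, and the usual nearest-codeword reduction converts this into $\inf_{\check p_D\in\check{\mathcal D}}\max_\omega\mathbb E_{F^{(\omega)}}\|\check p_D-p_D^{*,(\omega)}\|_2^2\gtrsim\delta^2\asymp n^{-1/2}$; together with the display above, $\mathscr R_n^D(\mathcal F)\gtrsim n^{-1/2}$, the finitely many small $n$ being absorbed into the constant.

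The hard part will be the two-way balancing in the construction: the perturbation must be macroscopic in the optimal-price metric yet statistically invisible. Keeping the $x$-derivative of $F_{Y|X}$ bounded---condition (\ref{eq:Boundedness_2}) of \cref{def:pd}---forces $\delta\lesssim h$, while the Fano budget forces $n\delta^2 h\lesssim\log M\asymp 1/h$, and these two constraints meet exactly at $h\asymp n^{-1/4}$; this is also why the rate is $n^{-1/2}$ rather than the $n^{-2/3}$ of ordinary Lipschitz-function estimation, the conditional density being an extra layer between the data and the target $p_D^*$. The remaining effort is quantitative: verifying each clause of \cref{def:pd} for every $F^{(\omega)}$, and using the implicit function theorem together with the uniform lower bound on $|\partial_y^2 r|$ from part \ref{as:concave} to show that an $O(\delta)$ density perturbation propagates to an $O(\delta)$ shift of $p_D^*$, with higher-order terms too small to erode the separation.
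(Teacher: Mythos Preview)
Your proposal is correct and follows essentially the same route as the paper: reduce the revenue deficiency to a minimax $L_2^2$ risk for $p_D^*$ via the strong-concavity bound, build a hypercube of conditional densities by placing a localized mean-zero perturbation of amplitude $\asymp 1/m$ on each of $m\asymp n^{1/4}$ covariate blocks, prune with Gilbert--Varshamov, and apply Fano with the pairwise KL $\lesssim n/m^3$ to obtain $L_2^2$ risk $\gtrsim 1/m^2\asymp n^{-1/2}$. The paper makes the perturbation explicit (its $\phi_Y,\phi_X$ and Lemma on the perturbed optimum) where you invoke the implicit function theorem, but the scaling, the constraints $\delta\asymp h$, and the Fano budget are identical.
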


\begin{remark} By requiring $C^{*}\in(0,2)$ in the theorems above,
we allow $r(y,x)$ associated with an $f_{Y|X}$ to have a second
derivative bounded from above by a number smaller than or equal to
$-2$. To motivate the use of $C^{*}\in(0,2)$, suppose $f_{Y|X}=f_{Y}$
(that is, the valuation and covariate are independent of each other)
and $f_{Y}$ is the uniform distribution on $[0,1]$, $U[0,1]$. In
this case, the revenue function equals $R(y)=y(1-y)$, which is twice-differentiable
with second-order derivative $R''(y)=-2$ for any $y\in[0,1]$.
In our proof for the lower bounds, $U[0,1]$ is used as the benchmark
distribution. 
%Ensuring  to include the  and its perturbed distributions is important because  is a widely assumed distribution in pricing models in IO theory, as we mentioned earlier.

\end{remark}

Theorems \ref{thm:lower-bound-pd-1} and \ref{thm:lower-bound-pd}
state that, there is an inevitable deficiency, $\Omega(n^{-1/2})$,
in the revenue from \textit{any} data-based 3PD strategy relative to
the revenue from the true-distribution optimal 3PD strategy in the worst
case by taking the supremum over $\mathcal{F}$. 

%\textcolor{blue}{[The ``$\mathtt{constant}$'' does not depend on $n$. Add more remark?] Also the constant does not depend on a particular distribution $F_{Y,X}$}

Recalling Theorem \ref{thm:upper-bound-pd} on the convergence rate $O(n^{-1/2})$ of the revenue from the $K$-markets ERM strategy, despite its simplicity, Theorems \ref{thm:lower-bound-pd-1} and \ref{thm:lower-bound-pd} imply that the revenue from this strategy achieves the optimal rate of convergence (as a function of $n$) to the revenue from the true-distribution optimal 3PD strategy uniformly over $\mathcal{F}$. In other words, more sophisticated pricing strategies (e.g., with partitioning the covariate space based on observed frequencies) cannot improve upon the $K$-market ERM algorithm asymptotically.

To prove the lower bounds, we convert the problem into a classification task that tries to distinguish between distributions that are sufficiently close to each other but yield significantly different optimal prices. This technique was used in \cite{huang2018making}; there, the bound concerns data-based uniform pricing strategies, which only require constructing two distributions and simpler techniques. To establish the lower bound in Theorem \ref{thm:lower-bound-pd}, two distributions are far from being enough. The reason is that, unlike the uniform pricing problem where the optimal pricing function is a scalar parameter, the 3PD problem tries to learn an optimal pricing function of the covariate (an infinitely-dimensional parameter) and the deficiency in the \textit{expected} revenue concerns the entire pricing function at all covariate values. The notion of packing sets in \cite{kolmogorov1959varepsilon} and the Gilbert-Varshamov bound from coding theory are useful ingredients for proving Theorem \ref{thm:lower-bound-pd}. The most intricate part of the proof involves carefully constructing $M$ conditional densities (where $M$ grows with $n$) and bounding the separation between the optimal prices associated with these densities. The desired set of optimal prices in our proof is a packing set where the separation between elements is $\Omega(n^{-1/4})$ with respect to the unweighted $L_2$ norm, and the cardinality of this set is $\Omega(2^{n^{1/4}})$.

\subsection{Uniform pricing}

\label{sec:lower-bound-unif}

%\subsubsection{Lower bounds}

We have the following theorem for uniform pricing.

\begin{theorem} \label{thm:lower-bound-uniform} Let \cref{assumption:i.i.d.} hold. For any $\mathcal{F}^{U}$
satisfying the conditions in \cref{def:uniform} with $C^{*}\in(0,2)$
in (\ref{eq:2nd derivative}), the minimax difference in the revenues
is bounded from below as 
\begin{align*}
\mathscr{R}_{n}^{U}(\mathcal{F}^{U})\gtrsim n^{-2/3}.
\end{align*}

\end{theorem}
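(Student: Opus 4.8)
The plan is to prove this via Le Cam's two-point method: I restrict the supremum over $\mathcal{F}^U$ to a pair of marginal densities that are (i) statistically indistinguishable from $n$ i.i.d.\ draws yet (ii) have optimal prices separated enough that no single posted price is near-optimal for both. Because the revenue map $p\mapsto R(p,F_Y)$ is strongly concave, a price $\delta$ away from the optimum loses $\asymp\delta^2$ in revenue, so the resulting bound will be of order $(\text{price separation})^2$.

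\textbf{Construction.} I take the uniform density $f_0\equiv 1$ on $[0,1]$ as a benchmark --- there $R(p,F_0)=p(1-p)$ is maximized at $p_0^*=1/2$ with second derivative $-2$ --- and perturb it by a fixed smooth bump profile $g$ of height $\asymp\epsilon$ and width $\asymp h$ centered near $1/2$, with $\int_0^1 g=0$, $\lVert g\rVert_\infty\le\epsilon$, slope $\lVert g'\rVert_\infty\le\eta$, and $|g(1/2)|\asymp\epsilon$; I set $f_{+1}=f_0+g$ and $f_{-1}=f_0-g$, which forces $h\asymp\epsilon/\eta$. For a sufficiently small constant $\eta=\eta(C_0,C^*)$ and $\epsilon=\epsilon(n)\to 0$, both $f_{\pm1}$ are valid densities that are $C^1$, Lipschitz with constant $\le C_0$, strongly concave with second derivative $\le-2+2\epsilon+\eta\le-C^*$, and have an interior optimum --- hence $F_{\pm1}\in\mathcal{F}^U$.

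\textbf{Separation and indistinguishability.} Expanding the first-order condition $1-F_{\pm1}(p)-pf_{\pm1}(p)=0$ around $p=1/2$ and using that the benchmark revenue has constant second derivative $-2$, the optimal prices satisfy $|p_{+1}^*-p_{-1}^*|\asymp|g(1/2)|\asymp\epsilon$ (the CDF-perturbation term is lower order, $O(\epsilon h)$). By $C^*$-strong concavity and $R'(p_j^*,F_j)=0$, one has $R(p_j^*,F_j)-R(p,F_j)\ge\tfrac{C^*}{2}(p-p_j^*)^2$ for all $p\in[0,1]$, so summing over $j\in\{+1,-1\}$ and using $(p-a)^2+(p-b)^2\ge\tfrac12(a-b)^2$ gives $\sum_j\bigl(R(p_j^*,F_j)-R(p,F_j)\bigr)\gtrsim\epsilon^2$ uniformly in $p$. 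For indistinguishability, a second-order expansion of $t\mapsto(1+t)\log(1+t)$ yields $\mathrm{KL}(f_{+1}\,\|\,f_{-1})\lesssim\int(2g)^2\lesssim\epsilon^2 h\asymp\epsilon^3$, hence $\mathrm{KL}(F_{+1}^{\otimes n}\,\|\,F_{-1}^{\otimes n})\lesssim n\epsilon^3$, and Pinsker's inequality bounds the total variation between the two $n$-fold products.

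\textbf{Conclusion and main obstacle.} Restricting the supremum in $\mathscr{R}_n^U(\mathcal{F}^U)$ to $\{F_{+1},F_{-1}\}$ and applying the two-point inequality with loss $R(p_j^*,F_j)-R(\check p_U(\textit{data}_Y),F_j)$ gives $\mathscr{R}_n^U(\mathcal{F}^U)\gtrsim\epsilon^2\bigl(1-\mathrm{TV}(F_{+1}^{\otimes n},F_{-1}^{\otimes n})\bigr)\gtrsim\epsilon^2\bigl(1-\sqrt{Cn\epsilon^3}\bigr)$; taking $\epsilon\asymp n^{-1/3}$ with a small enough constant to keep the total variation below $1/2$ yields $\mathscr{R}_n^U(\mathcal{F}^U)\gtrsim n^{-2/3}$. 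The main obstacle is the construction step: the perturbation must change the density at the optimum by order $\epsilon$ while remaining admissible, and it is exactly the Lipschitz (and curvature) constraint that caps its slope, forcing width $h\gtrsim\epsilon$ and hence, through the variance budget $n\epsilon^2 h\lesssim 1$, the choice $\epsilon\asymp n^{-1/3}$. Dropping the smoothness constraint would only give the weaker rate $n^{-1}$, so the crux is verifying that this budget is tight and that the constants $\eta$ and the implied bounds can be matched to any admissible $C_0$ and $C^*\in(0,2)$.
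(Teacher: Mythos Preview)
Your proposal is correct and follows essentially the same route as the paper: a Le Cam two-point construction that perturbs the uniform density near $p=1/2$ by a mean-zero bump whose amplitude $\epsilon$ and width $h$ are linked through the Lipschitz constraint, yielding price separation $\asymp\epsilon$, information distance $\asymp n\epsilon^{3}$, and hence $\epsilon\asymp n^{-1/3}$. The only cosmetic differences are that the paper perturbs in one direction (uniform vs.\ uniform $+$ bump) and uses Hellinger distance via its Lemma~\ref{lm:lecam-functionals} rather than KL/Pinsker, and it converts to revenue via the price lower bound and Jensen rather than summing the quadratic losses directly.
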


Theorem \ref{thm:lower-bound-uniform} states that there is an inevitable
deficiency, $\Omega(n^{-2/3})$, in the revenue
from \textit{any} data-based uniform pricing strategy relative to the revenue from
the true-distribution optimal uniform pricing strategy by taking
the supremum over $\mathcal{F}^{U}$.

Recalling Corollary          \ref{thm:upper-bound-unif} on the convergence rate $O(n^{-2/3})$ of the $1$-market ERM strategy, despite its simplicity, Theorem \ref{thm:lower-bound-uniform} implies that the revenue from this algorithm achieves the optimal rate of convergence (as a function of $n$) to the revenue from the true-distribution optimal uniform pricing strategy uniformly over $\mathcal{F}^{U}$. 

\subsection{Sketches of the proofs}
To facilitate understanding, we start with a preliminary of the proof for Theorem \ref{thm:lower-bound-pd-1} before laying out the preliminaries for Theorems \ref{thm:lower-bound-pd} and \ref{thm:lower-bound-uniform}.

\subsubsection{Preliminary of the proof for Theorem \ref{thm:lower-bound-pd-1}}

\label{sec:proof-sketch1}

For Theorem \ref{thm:lower-bound-pd-1}, we first show that the minimax
difference in price at a given covariate value $x_{0}$ is bounded
from below as follows:
\begin{align}
\inf_{\check{p}_{D} \in \check{\mathcal{D}}}\sup_{F_{Y,X}\in\mathcal{F}}\mathbb{E}_{F_{Y,X}}|\check{p}_{D}(x_{0};\textit{data})-p_{D}^{*}(x_{0};F_{Y,X})|\gtrsim n^{-1/4},x_{0}\in(0,1).\label{eq:diff_price}
\end{align}
Using Taylor expansion type of arguments and condition (\ref{eq:2nd derivative}),
we can relate the revenue difference to the minimax squared difference
in price at $x_{0}$: 
\begin{align*}
\mathscr{R}_{n}^{D}(x_{0};\mathcal{F}) & \gtrsim\inf_{\check{p}_{D} \in \check{\mathcal{D}}}\sup_{F_{Y,X}\in\mathcal{F}}\mathbb{E}_{F_{Y,X}}\left[|\check{p}_{D}(x_{0};\textit{data})-p_{D}^{*}(x_{0};F_{Y,X})|^{2}\right]\\
 & \geq\inf_{\check{p}_{D} \in \check{\mathcal{D}}}\sup_{F_{Y,X}\in\mathcal{F}}\left\{ \mathbb{E}_{F_{Y,X}}\left[|\check{p}_{D}(x_{0};\textit{data})-p_{D}^{*}(x_{0};F_{Y,X})|\right]\right\} ^{2}
\end{align*}
where the last line follows from the Jensen's inequality. 

The derivation of the lower bound (\ref{eq:diff_price}) can be reduced
to a binary classification problem. In a binary classification problem,
we have two distributions $F_{Y,X}^{1},F_{Y,X}^{2}\in\mathcal{F}$
whose optimal prices are separated by some number $2\varepsilon$;
that is, 
\begin{equation}
|p_{D}^{*}(x_{0};F_{Y,X}^{j'})-p_{D}^{*}(x_{0};F_{Y,X}^{j})|\geq2\varepsilon,\quad j,j'\in\{1,2\}.\label{eq:1}
\end{equation}
A binary classification rule uses the data to decide whether the true
distribution is $F_{Y,X}^{1}$ or $F_{Y,X}^{2}$. To relate the binary
classification problem to the pricing problem, note that, given any
pricing function $\check{p}_{D}$, we can use it to distinguish between
$F_{Y,X}^{1}$ and $F_{Y,X}^{2}$ in the following way. Define the
binary classification rule 
\begin{align*}
\psi(\textit{data})=\argmin_{j\in\{1,2\}}|p_{D}^{*}(x_{0};F_{Y,X}^{j})-\check{p}_{D}(x_{0};\textit{data})|.
\end{align*}
We claim that when the underlying distribution is $F_{Y,X}^{j}$,
the decision rule $\psi$ is correct if 
\begin{equation}
|p_{D}^{*}(x_{0};F_{Y,X}^{j})-\check{p}_{D}(x_{0};\textit{data})|<\varepsilon.\label{eq:2}
\end{equation}
To see this, note that by the triangle inequality, (\ref{eq:1}) and
(\ref{eq:2}) guarantee that 
\begin{align*}
 & |p_{D}^{*}(x_{0};F_{Y,X}^{j'})-\check{p}_{D}(x_{0};\textit{data})|\\
\geq & |p_{D}^{*}(x_{0};F_{Y,X}^{j'})-p_{D}^{*}(x_{0};F_{Y,X}^{j})|-|p_{D}^{*}(x_{0};F_{Y,X}^{j})-\check{p}_{D}(x_{0};\textit{data})|\\
> & 2\varepsilon-\varepsilon=\varepsilon,\text{ where }j'\ne j,\,j,j'\in\{1,2\}.
\end{align*}
This implies that 
\begin{align*}
\mathbb{P}_{F_{Y,X}^{j}}(\psi(\textit{data})\ne j)\leq\mathbb{P}_{F_{Y,X}^{j}}(|p_{D}^{*}(x_{0};F_{Y,X}^{j})-\check{p}_{D}(x_{0};\textit{data})|\geq\varepsilon),\quad j=1,2.
\end{align*}
Therefore, we can upper bound the average probability of mistakes
in the binary classification problem as 
\begin{align*}
 & \frac{1}{2}\mathbb{P}_{F_{Y,X}^{1}}(\psi(\textit{data})\ne1)+\frac{1}{2}\mathbb{P}_{F_{Y,X}^{2}}(\psi(\textit{data})\ne2)\\
\leq & \frac{1}{2}\mathbb{P}_{F_{Y,X}^{1}}(|p_{D}^{*}(x_{0};F_{Y,X}^{1})-\check{p}_{D}(x_{0};\textit{data})|\geq\varepsilon)+\frac{1}{2}\mathbb{P}_{F_{Y,X}^{2}}(|p_{D}^{*}(x_{0};F_{Y,X}^{2})-\check{p}_{D}(x_{0};\textit{data})|\geq\varepsilon)\\
\leq & \sup_{F_{Y,X}\in\mathcal{F}}\mathbb{P}_{F_{Y,X}}(|p_{D}^{*}(x_{0};F_{Y,X})-\check{p}_{D}(x_{0};\textit{data})|\geq\varepsilon).
\end{align*}
By the Markov inequality, we have 
\begin{align*}
 & \sup_{F_{Y,X}\in\mathcal{F}}\mathbb{E}|\check{p}_{D}(x_{0};\textit{data})-p_{D}^{*}(x_{0};F_{Y,X})|\\
\geq & \varepsilon\sup_{F_{Y,X}\in\mathcal{F}}\mathbb{P}\left(|\check{p}_{D}(x_{0};\textit{data})-p_{D}^{*}(x_{0};F_{Y,X})|\geq\varepsilon\right)\\
\geq & \varepsilon\left(\frac{1}{2}\mathbb{P}_{F_{Y,X}^{1}}(\psi(\textit{data})\ne1)+\frac{1}{2}\mathbb{P}_{F_{Y,X}^{2}}(\psi(\textit{data})\ne2)\right).
\end{align*}
Finally, we take the infimum over all pricing strategies on the left-hand
side (LHS), and the infimum over the induced set of binary decisions
on the right-hand side (RHS). This leads to 
\begin{align}
 & \inf_{\check{p}_{D} \in \check{\mathcal{D}}}\sup_{F_{Y,X}\in\mathcal{F}}\mathbb{E}|\check{p}_{D}(x_{0};\textit{data})-p_{D}^{*}(x_{0};F_{Y,X})|\nonumber \\
\geq & \varepsilon\inf_{\psi}\left(\frac{1}{2}\mathbb{P}_{F_{Y,X}^{1}}(\psi(\textit{data})\ne1)+\frac{1}{2}\mathbb{P}_{F_{Y,X}^{2}}(\psi(\textit{data})\ne2)\right).
\end{align}
The RHS of the above inequality consists of two parts: (1) $\varepsilon$, related to the separation
between two optimal prices, and (2) the average probability
of making a mistake in distinguishing the two distributions. To obtain
a meaningful bound, we want to find two distributions $F_{Y,X}^{1}$
and $F_{Y,X}^{2}$ that are close to each other (hard to distinguish)
but their optimal prices are sufficiently separated. We leave the details of the construction of such distributions
to the \textit{proof of Theorem \ref{thm:lower-bound-pd-1}} given
in Appendix \ref{sec:proof-lower-x0}.

\subsubsection{Preliminary of the proof for Theorem \ref{thm:lower-bound-pd}}

\label{sec:proof-sketch2}

For Theorem \ref{thm:lower-bound-pd}, we first show that the minimax
(unweighted) $L_{2}-$distance in price is bounded from below as follows:
\begin{align*}
\inf_{\check{p}_{D} \in \check{\mathcal{D}}}\sup_{F_{Y,X}\in\mathcal{F}}\mathbb{E}\lVert\check{p}_{D}(\textit{data})-p_{D}^{*}(F_{Y,X})\rVert_{2}^{2} & \gtrsim n^{-1/2}
\end{align*}
where 
\begin{align*}
\lVert\check{p}_{D}(\textit{data})-p_{D}^{*}(F_{Y,X})\rVert_{2}^{2}=\int_{0}^{1}|\check{p}_{D}(x;\textit{data})-p_{D}^{*}(x;F_{Y,X})|^{2}dx.
\end{align*}
Using Taylor expansion type of arguments and condition (\ref{eq:2nd derivative}),
we can relate the difference in the \textit{expected} revenues to
the minimax (unweighted) $L_{2}-$distance in price: 
\begin{align*}
\mathscr{R}_{n}^{D}(\mathcal{F})\gtrsim\inf_{\check{p}_{D} \in \check{\mathcal{D}}}\sup_{F_{Y,X}\in\mathcal{F}}\mathbb{E}_{F_{Y,X}}\lVert\check{p}_{D}(\textit{data})-p_{D}^{*}(F_{Y,X})\rVert_{2}^{2}
\end{align*}
where the expectation $\mathbb{E}_{F_{Y,X}}$ is taken with respect
to $\textit{data}\sim F_{Y,X}$. 

The object above concerns the entire pricing function $p_{D}^{*}(\cdot;F_{Y,X})$.
As a result, bounding the RHS of the above inequality is more complicated
than the previous one (\ref{eq:diff_price}). In particular, we consider
a multiple classification problem that tries to distinguish among
$M$ distributions, where $M$ is a function of the sample size $n$.
Similar as before, we want the optimal prices of these $M$ distributions
to be sufficiently separated. Similar derivations show that
the lower bound of the revenue problem can be reduced to that of a
multiple classification problem: 
\begin{align}
\inf_{\check{p}_{D} \in \check{\mathcal{D}}}\sup_{F_{Y,X}\in\mathcal{F}}\mathbb{E}_{F_{Y,X}}\lVert\check{p}_{D}(\textit{data})-p_{D}^{*}(F_{Y,X})\rVert_{2}^{2}\geq\varepsilon^{2}\inf_{\psi}\frac{1}{M}\sum_{j=1}^{M}\mathbb{P}_{F_{Y,X}^{j}}(\psi(\textit{data})\ne j),\label{eq:lower-bound-multiple-decision}
\end{align}
where the infimum $\inf_{\psi}$ is taken over the set of all multiple
decisions (with $M$ choices). To proceed, we apply the Fano's inequality
from information theory \citep{cover2005elements}. Fano's inequality
gives a lower bound on the average probability of mistakes:\footnote{We do not present the Fano's inequality in its standard form as in
\citet{cover2005elements}. Instead, we use a version from \citet{wainwright2019high}
that is more convenient for our purposes.} 
\begin{align}
\frac{1}{M}\sum_{j=1}^{M}\mathbb{P}_{F_{Y,X}^{j}}(\psi(\textit{data})\ne j)\geq1-\frac{\sum_{j,j'=1}^{M}\text{KL}(F_{Y,X}^{j}\|F_{Y,X}^{j'})/M^{2}+\log2}{\log M},\label{eq:fano}
\end{align}
where $\text{KL}(\cdot\|\cdot)$ denotes the Kullback-Leibler (KL)
divergence between two distributions: 
\begin{align*}
\text{KL}(F_{1}\|F_{2})\equiv\int f_{1}(y,x)\log\frac{f_{1}(y,x)}{f_{2}(y,x)}dydx.
\end{align*}
To obtain a sharp bound based on the multiple classification problem,
we want to find a set of distributions (where the cardinality $M$
of the set is large enough) that are close enough to each other (small enough pairwise
KL divergence) but their optimal prices are sufficiently separated. We leave the detailed proof to
Appendix \ref{sec:proof-lower-x0}. Our proof is based on a delicate
construction of conditional densities along with an application of
the Gilbert-Varshamov Lemma from coding theory. Specifically, we use the distribution $Y,X\sim U[0,1]$ with $X$ independent of $Y$ as the benchmark distribution and construct its perturbed versions with some correlation. 

\subsubsection{Preliminary of the proof for Theorem \ref{thm:lower-bound-uniform}}

Relative to the proofs in the case of 3PD, the proofs for the price-
and revenue-deficiency lower bounds in uniform pricing are simpler. We first
show that the minimax difference in price is bounded from below as
follows:
\begin{align}
\inf_{\check{p}_{U}\in\check{\mathcal{U}}}\sup_{F_{Y}\in\mathcal{F}^{U}}\mathbb{E}_{F_{Y}}|\check{p}_{U}(\textit{data}_{Y})-p_{U}^{*}|\gtrsim n^{-1/3}.\label{eq:diff_unif_price}
\end{align}
As previously, we can relate the revenue difference to the minimax
squared difference in price: 
\begin{align*}
\mathscr{R}_{n}^{U}(\mathcal{F}^{U}) & \gtrsim\inf_{\check{p}_{U}\in\check{\mathcal{U}}}\sup_{F_{Y}\in\mathcal{F}^{U}}\mathbb{E}_{F_{Y}}\left[|\check{p}_{U}(\textit{data}_{Y})-p_{U}^{*}|^{2}\right]\\
 & \geq\inf_{\check{p}_{U}\in\check{\mathcal{U}}}\sup_{F_{Y}\in\mathcal{F}^{U}}\left\{ \mathbb{E}_{F_{Y}}\left[|\check{p}_{U}(\textit{data}_{Y})-p_{U}^{*}|\right]\right\} ^{2}
\end{align*}
where the last line follows from the Jensen's inequlity. The derivation
of (\ref{eq:diff_unif_price}) only requires constructing two distributions,
similar to the approach discussed in Section \ref{sec:proof-sketch1}.

\section{Numerical evidence}\label{sec:numerical}

Sections \ref{sec:upper} and \ref{sec:lower} establish that the $K$-market ERM strategy achieves the optimal rates of convergence in revenue uniformly over a class of distributions. In this section, we turn to specific distributions and study the revenue performance of our $K$-markets ERM strategies in these cases. We present numerical evidence that supports the implications of our theoretical results.
Specifically, we calculate the revenues of the pricing strategies proposed in Section \ref{sec:upper} using real-world and simulated data.
We describe the data in detail below.
% We describe these datasets in detail below.
\paragraph{Data.}
For the empirical study, we use an eBay auction data set \citep{jank2010modeling}.
Because eBay uses a sealed-bid second-price auction format, the bid of each participant can serve as a proxy for an individual valuation of the object.
In particular, we use the data on 194 7-day auctions for the new Palm Pilot M515 PDAs.\footnote{\cite{jank2010modeling} also provide data on Cartier wristwatches, Swarovski beads, and Xbox game consoles, but each of these data sets may pool various configurations or models of these products categories. Thus, we choose the data on the Palm Pilot M515 to minimize such variations.}
The data has 3,832 observations at the bid level, and each observation includes an auction id, a bid amount, a bidder id, and a bidder rating.
Some bidders appear in the data set several times because either they revised their bid during an auction or participated in several auctions.
To be consistent with our assumption of independent sampling, we analyze the data at the bidder level and use the highest bid of each bidder across all auctions she participated in as the one representing her valuation.
This leaves 1,203 observations from which we draw samples of various sizes.
For $Y_i$, we use the bid (as described above) of bidder $i$ normalized to $[0,1]$.
For $X_i$ in the 3PD case, we use bidder $i$'s rating on eBay, which indicates the number of times sellers left feedback after a transaction with~$i$.

For the simulation study, we let the marginal distribution of $X$ be uniform on $[0,1]$ and the CDF of $Y$ conditional on $X=x$ be 
\begin{align}
  F_{Y|X}(y|x) &= y^{x+1}. \label{eq:simulated-cdf}
\end{align}

\paragraph{Implementation.} For each type of data, we calculate (a Monte-Carlo approximation of) the expected revenue generated by the uniform ERM and the $K$-markets ERM strategies for various sample sizes as follows.
First, fix $n$ and $K$.
Then, draw a sample $\{Y_i,X_i\}_{i=1}^n$ and, for each $k=1,\ldots, K$, let
\begin{align*}
  \text{market}_k \equiv \{Y_i\colon X_i\in I_k\}, \quad \hat F_k(t) \equiv \frac{|Y_i\in \text{market}_k\colon Y_i \leq t|}{|\text{market}_k|}.
\end{align*}
% where $|\cdot|$ denotes the cardinality of the set (when applied to a set).

Then, the empirical optimal price in the $k$th market is given by
\begin{align*}
  \hat p_{D,k} 
  \equiv \argmax_{y\in [0,1]} y(1-\hat F_k(y)) 
  =\argmax_{y\in \text{market}_k} y(1-\hat F_k(y)),
\end{align*}
where the second equality holds because $\hat F_k$ is a step function.
Note that the uniform ERM strategy simply corresponds to the $1$-market ERM strategy. When $K>1$ and a drawn sample results in empty markets that contain no observations, we set the prices in those markets to one. 
Finally, we compute the revenue deficiency for the uniform ERM and $K$-markets ERM strategies (under $K \asymp n^{1/4}$).
%Figures \ref{fig:revenue} and \ref{fig:regret} demonstrate the consequence of the trade-off between revenue gain from price discrimination and the slower convergence rate of the $K$-markets ERM strategy to the $F_{Y,X}$-optimal 3PD strategy.
\paragraph{Numerical findings.} Figure \ref{fig:revenue} plots the expected revenue generated by the $K$-markets ERM strategy for $K\in\{1,\ldots,5\}$ as a function of the sample size $n$ (with $K=1$ corresponding to the uniform ERM strategy).
To facilitate the exposition, we use a logarithmic scale for the $n$-axis.
For both types of data, one can see that for sufficiently small $n$, the $K$-markets revenue is \emph{decreasing} in $K$.
As $n$ grows, the performance of higher $K$ improves faster than that of lower $K$, and, for sufficiently large $n$, the $K$-markets revenue overtakes that with any lower $K$.
%To Denis, I had to rewrite the explanations. See below. -Ying
%To Haitian, could you proofread?
This finding can be explained by the bound $(K/n)^{2/3}+1/K^{2}$ in Theorem \ref{thm:upper-bound-pd}(ii), which implies that higher $K$ (more discrimination) approximates the revenue generated by the $F_{Y,X}$-optimum better but incurs a larger ``variance''. When the sample size is small, a lower $K$ can indeed be more beneficial. 

Figure \ref{fig:revenue} also suggests that, even if $X$ contains useful information about $Y$, the uniform ERM strategy may be revenue superior to any $K(>1)$-markets ERM strategy when $n$ is sufficiently small. Recall from Theorem \ref{thm:upper-bound-pd} that the bound $(K/n)^{2/3}+1/K^{2}$ is minimized at $K=n^{1/4}$, which gives $n^{-1/2}$, the optimal rate of convergence to the revenue generated by the $F_{Y,X}$-optimal 3PD strategy. This convergence rate is slower than $n^{-2/3}$, the optimal rate of convergence to the revenue generated by the $F_{Y}$-optimal uniform pricing strategy (c.f. Corollary \ref{thm:upper-bound-unif}). The slower convergence of the rate-optimal $K$-market ERM strategy can potentially dominate the revenue gain from price discrimination over without discrimination for small $n$.

%Any data-based strategy is an approximation of the true-distribution optimum.
%Thus, any data-based strategy incurs an information-theoretic penalty - the revenue deficiency relative to the optimum.
%Clearly, such a penalty is decreasing in $n$ because approximating with more data is easier.
%To Denis, the issue with what you wrote above is that, here we consider specific algorithms and information theoretic limiation is algorithm independent by definition. We can only say the trade off in the specific algorithms. Also, our lower bound results hold for ANY strategy even if it doesn't approximate the true-distribution optimum well at all.-Ying
%But, $K$-markets penalty for a fixed $n$ is also increasing in $K$ because there is less data for each degree of freedom - a price in each market segment.
%Formally, this can be seen from Theorem \ref{thm:upper-bound-pd} and \ref{thm:upper-bound-unif}'s upper bounds as $K$-market revenue rate of convergence of $(K/n)^{2/3}+1/K^2$ ($\asymp n^{-1/2}$ under $K\asymp n^{-1/4}$) is slower than $n^{-2/3}$ of the uniform-price strategy.
%Thus, while the theoretically optimal 3PD revenue is higher than that under uniform price, the expected revenue under a data-based \emph{approximation} of the former is higher than that of the latter if and only if there is the sample size is large enough.

Figure \ref{fig:regret} illustrates the difference in the convergence rates of the uniform ERM and the $K$-markets ERM strategies to their respective theoretical benchmarks.
In particular, we set $K = \frac15\lfloor n^{1/4} \rfloor$ for the simulation study and $K = \max\{1,\lfloor 2n^{1/4} - 7\rfloor\}$ for the empirical study.
As predicted by the rate $n^{-1/2}$ in Theorem \ref{thm:upper-bound-pd} and the rate $n^{-2/3}$ in Corollary \ref{thm:upper-bound-unif}, the revenue from the uniform ERM strategy is converging to the revenue from the $F_{Y}$-optimal uniform pricing strategy faster than the $K$-markets revenue to the revenue from the $F_{Y,X}$-optimal 3PD strategy.

Figure~\ref{fig:independent} exhibits the revenue under the $K$-markets ERM strategy for $K=1,\ldots,5$ and $n=2,\ldots, 10^5$, in the case where $X$ and $Y$ are uniform on $[0,1]$ and independent of each other. Not surprisingly, there is no benefit from price discrimination for revenue.

\def\optbpd{0.3542}
\def\optbunif{0.310}
\def\optspd{0.322992}
\def\optsunif{0.322343}

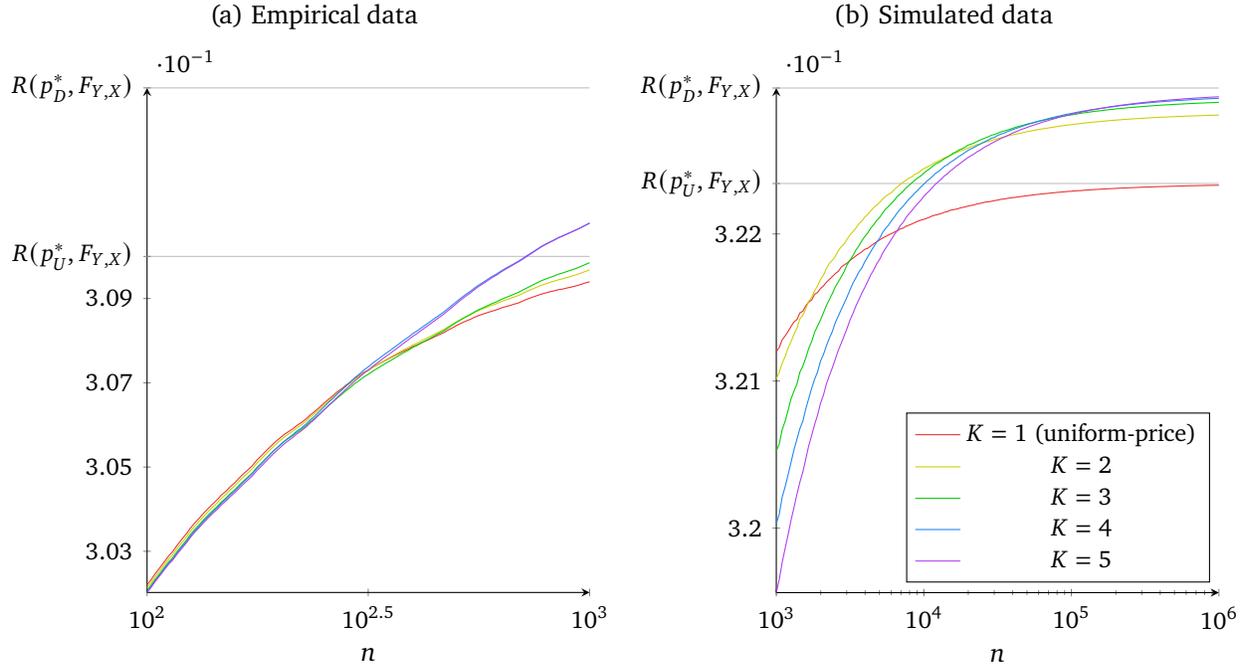
\begin{figure}[!htb]
\caption{Revenue under uniform and $K$-markets ERM strategies}\label{fig:revenue}
\begin{subfigure}{0.49\textwidth}
	\caption{Empirical data}\label{fig:revenue-empirical}
	\begin{tikzpicture}[scale=1]
      \begin{semilogxaxis}[
      		footnotesize,
      		xlabel={$n$},
          width=0.9\textwidth,
          height=\textwidth,
          axis x line=bottom,
          axis y line=left,
          % scaled ticks=true,
			% scaled ticks=true,
			xtick = {100, 316, 1000, 3162},
			% extra x ticks = {100+1*(316-100)/8,100+2*(316-100)/8,100+3*(316-100)/8,100+4*(316-100)/8,100+5*(316-100)/8,100+6*(316-100)/8,100+7*(316-100)/8},
			% extra x tick labels = {},
			% minor xtick=data,
			% ytick = {0.301, 0.303, 0.305, 0.307, 0.309},
			ymax = \optbpd,
		    extra y ticks={\optbpd,\optbunif},
		    extra y tick style={grid=major},
		    extra y tick labels={{$R(p_{D}^{*},F_{Y,X})$},{$R(p_{U}^{*},F_{Y,X})$}},
			scaled x ticks=base 10:-2,
			scaled y ticks=base 10:1,
            % /pgf/number format/sci,
            /pgf/number format/precision=4,
          legend style={
              at={(0.98,0.02)},
              anchor=south east,legend columns=1
          },
      ]
      \addplot [mark=none,Firebrick2] file {figs/revenue_b_1.csv};%\addlegendentry{$K = 1$ (uniform-price)}
      \addplot [mark=none,Yellow3] file {figs/revenue_b_2.csv};%\addlegendentry{$K=2$}
      \addplot [mark=none,Green3] file {figs/revenue_b_3.csv};%\addlegendentry{$K=3$}
      \addplot [mark=none,DodgerBlue2] file {figs/revenue_b_4.csv};%\addlegendentry{$K=4$}
      \addplot [mark=none,DarkOrchid2] file {figs/revenue_b_5.csv};%\addlegendentry{$K=5$}
      \end{semilogxaxis}
  \end{tikzpicture}
\end{subfigure}
\begin{subfigure}{0.49\textwidth}
	\caption{Data simulated from \eqref{eq:simulated-cdf}}\label{fig:revenue-sim}
	\begin{tikzpicture}[scale=1]
      \begin{semilogxaxis}[
      		footnotesize,
      		xlabel={$n$},
          width=0.9\textwidth,
          height=\textwidth,
          axis x line=bottom,
          axis y line=left,
          % scaled ticks=true,
			% scaled ticks=true,
			ytick = {0.319, 0.32, 0.321, 0.322},
			ymax = \optspd,
		    extra y ticks={\optspd,\optsunif},
		    extra y tick style={grid=major},
		    extra y tick labels={{$R(p_{D}^{*},F_{Y,X})$},{$R(p_{U}^{*},F_{Y,X})$}},
			% minor xtick=data,
			scaled x ticks=base 10:-1,
			scaled y ticks=base 10:1,
            % /pgf/number format/sci,
            /pgf/number format/precision=3,
          legend style={
              at={(0.98,0.02)},
              anchor=south east,legend columns=1
          },
      ]

      \addplot [mark=none,Firebrick2] file {figs/revenue_s_1.csv};\addlegendentry{\footnotesize $K = 1$ (uniform)}
      \addplot [mark=none,Yellow3] file {figs/revenue_s_2.csv};\addlegendentry{\footnotesize $K=2$}
      \addplot [mark=none,Green3] file {figs/revenue_s_3.csv};\addlegendentry{\footnotesize $K=3$}
      \addplot [mark=none,DodgerBlue2] file {figs/revenue_s_4.csv};\addlegendentry{\footnotesize $K=4$}
      \addplot [mark=none,DarkOrchid2] file {figs/revenue_s_5.csv};\addlegendentry{\footnotesize $K=5$}
      \end{semilogxaxis}
  \end{tikzpicture}
\end{subfigure}

\end{figure}

\begin{figure}[!htb]
\caption{Data-based revenue deficiency under uniform and $K$-markets ERM strategies (with $K\asymp n^{1/4}$).}\label{fig:regret}
\begin{subfigure}{0.49\textwidth}
	\caption{Empirical data}\label{fig:regret-empirical}
	\begin{tikzpicture}[scale=1]
      \begin{semilogxaxis}[
      		footnotesize,
      		xlabel={$n$},
          width=\textwidth,
          height=\textwidth,
          axis x line=bottom,
          axis y line=left,
          % scaled ticks=true,
			% scaled ticks=true,
			xtick = {100, 316, 1000},
			ytick = {0, 0.01, 0.02, 0.03, 0.04},
			ymin = 0,
			scaled x ticks=base 10:-2,
			scaled y ticks=base 10:2,
            % /pgf/number format/sci,
            /pgf/number format/precision=4,
          legend style={
              at={(0.99,0.99)},
              % anchor=north east,
              legend columns=5
          },
      ]
      % \addplot [mark=none,Firebrick2] file {figs/regret_b.csv};\addlegendentry{uniform-price}
      \addplot [mark=none,black] file {figs/regret_b_uniform.csv};%\addlegendentry{\scriptsize uniform-price}
       \addplot [mark=none,Firebrick2] file {figs/regret_b_1.csv};%\addlegendentry{\scriptsize $1$-markets}
      \addplot [mark=none,Yellow3] file {figs/regret_b_2.csv};%\addlegendentry{\scriptsize $2$-markets}
      \addplot [mark=none,Green3] file {figs/regret_b_3.csv};%\addlegendentry{\scriptsize $3$-markets}
      \addplot [mark=none,DodgerBlue2] file {figs/regret_b_4.csv};%\addlegendentry{\scriptsize $4$-markets}
      \addplot [mark=none,DarkOrchid2] file {figs/regret_b_5.csv};%\addlegendentry{$5$-markets}
      \end{semilogxaxis}
  \end{tikzpicture}
\end{subfigure}
\begin{subfigure}{0.49\textwidth}
	\caption{Data simulated from \eqref{eq:simulated-cdf}}\label{fig:regret-sim}
	\begin{tikzpicture}[scale=1]
     \begin{semilogxaxis}[
     		footnotesize,
     		xlabel={$n$},
         width=\textwidth,
         height=\textwidth,
         axis x line=bottom,
         axis y line=left,
         ytick = {0, 0.0004, 0.0008, 0.0012, 0.0016},
         ymin = 0,
          scaled ticks=true,
			scaled ticks=true,
			scaled x ticks=base 10:-1,
			scaled y ticks=base 10:3,
            % /pgf/number format/sci,
           /pgf/number format/precision=3,
         legend style={
             at={(0.98,0.98)},
             anchor=north east,
             legend columns=1
         },
     ]
     \addplot [mark=none,black,opacity=0.5] file {figs/regret_s_uniform.csv};\addlegendentry{\footnotesize uniform}
     \addplot [mark=dotted,dashed,black] file {figs/regret_s_uniform_fit.csv};\addlegendentry{\footnotesize $0.11\cdot n^{-2/3}$}
      \addplot [mark=none,Firebrick2] file {figs/regret_s_1.csv};\addlegendentry{\footnotesize $K=1$}
     \addplot [mark=none,Yellow3] file {figs/regret_s_2.csv};\addlegendentry{\footnotesize $K=2$}
     \addplot [mark=none,Green3] file {figs/regret_s_3.csv};\addlegendentry{\footnotesize $K=3$}
     \addplot [mark=none,DodgerBlue2] file {figs/regret_s_4.csv};\addlegendentry{\footnotesize $K=4$}
     \addplot [mark=none,DarkOrchid2] file {figs/regret_s_5.csv};\addlegendentry{\footnotesize $K=5$}
     \addplot [mark=none,black, dotted] file {figs/regret_s_fit.csv};\addlegendentry{\footnotesize $0.06 \cdot n^{-1/2}$}
     \end{semilogxaxis}
  \end{tikzpicture}
\end{subfigure}

\end{figure}
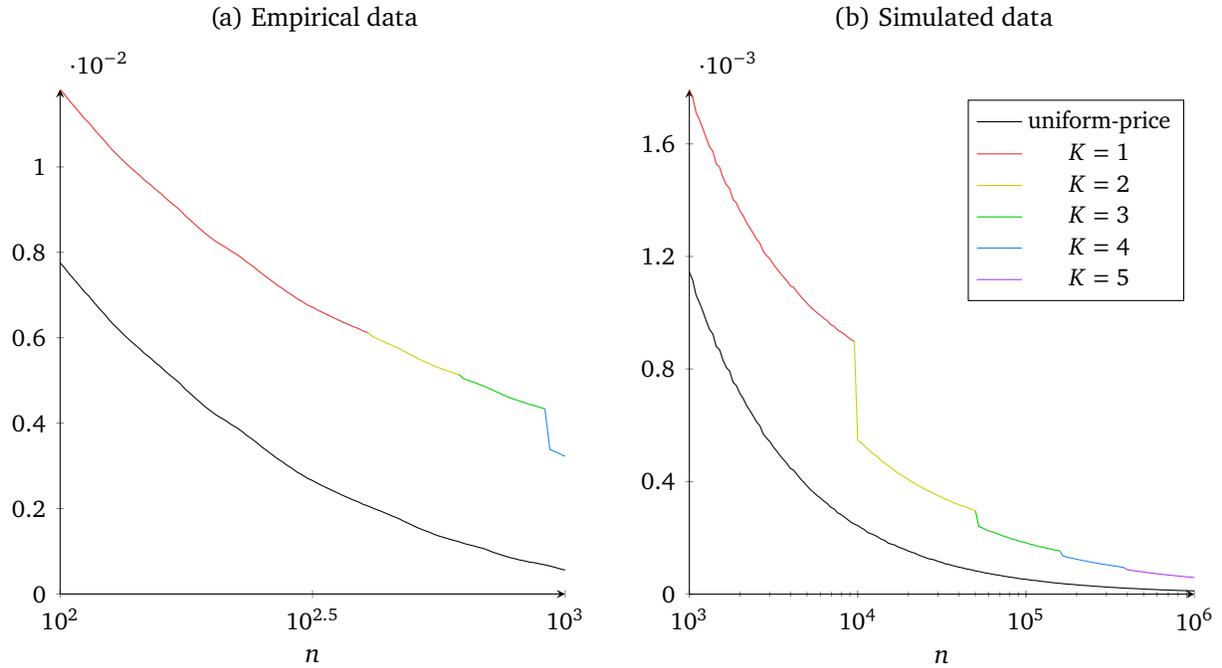

\begin{figure}[t]
	\caption{Uniform and $K$-markets revenue for the case of $X$ and $Y$ uniform on $[0,1]$ and independent of each other.}\label{fig:independent}
	\begin{center}
		\begin{tikzpicture}[scale=1]
	      \begin{semilogxaxis}[
	      		footnotesize,
	      		xlabel={$n$},
	          width=0.49\textwidth,
	          height=0.49\textwidth,
	          axis x line=bottom,
	          axis y line=left,
	          % scaled ticks=true,
				% scaled ticks=true,
				xtick = {100, 1000, 10000, 10000},
				extra x ticks = {2, 10},
				extra x tick labels = {2, 10},
				% extra x ticks = {100+1*(316-100)/8,100+2*(316-100)/8,100+3*(316-100)/8,100+4*(316-100)/8,100+5*(316-100)/8,100+6*(316-100)/8,100+7*(316-100)/8},
				% extra x tick labels = {},
				% minor xtick=data,
				ytick = {0,0.05,0.1,0.15,0.2,0.25},
				ymax = 0.25,
				ymin = 0,
			    extra y ticks={0.25},
			    extra y tick style={grid=major},
			    % extra y tick labels={{$R(p_D^*,F_{Y,X})$}},
			    % extra y tick labels={{$R(p_{D}^{*},F_{Y,X}) = R(p_{U}^{*},F_{Y,X})$}},
				% enlarge y limits = true,
				scaled x ticks=base 10:-2,
				% scaled y ticks=base 10:1,
	            % /pgf/number format/sci,
	            /pgf/number format/precision=4,
	          legend style={
	              at={(0.98,0.02)},
	              anchor=south east,legend columns=1
	          },
	      ]
	      \addplot [mark=none,Firebrick2] file {figs/revenue_i_1.csv};\addlegendentry{$K = 1$ (uniform)}
	      \addplot [mark=none,Yellow3] file {figs/revenue_i_2.csv};\addlegendentry{$K=2$}
	      \addplot [mark=none,Green3] file {figs/revenue_i_3.csv};\addlegendentry{$K=3$}
	      \addplot [mark=none,DodgerBlue2] file {figs/revenue_i_4.csv};\addlegendentry{$K=4$}
	      \addplot [mark=none,DarkOrchid2] file {figs/revenue_i_5.csv};\addlegendentry{$K=5$}
	      \end{semilogxaxis}
	  \end{tikzpicture}
	\end{center}
\end{figure}

%\paragraph{Independent $X$ and $Y$.}

\section{Discussions}\label{sec:Discussions}

Recall that $p_{D}^{*}$ is the true-distribution optimal 3PD strategy and $\hat{p}_{D}$ is the $K$-markets ERM strategy with $K=\Theta(n^{1/4})$ giving the best trade-off between the ``variance'' and approximation error as shown in Theorem \ref{thm:upper-bound-pd}; $p_{U}^{*}$ is the true-distribution optimal uniform pricing strategy and $\hat{p}_{U}$ is the uniform ERM strategy. We can decompose the difference between the expected revenues generated respectively from $\hat{p}_{D}$ and $\hat{p}_{U}$ as follows:
\[
\mathbb{E}[R(\hat{p}_{D})]-\mathbb{E}[R(\hat{p}_{U})]=-\underbrace{\left\{ R(p_{D}^{*})-\mathbb{E}[R(\hat{p}_{D})]\right\} }_{A_{1}}+\underbrace{R(p_{D}^{*})-R(p_{U}^{*})}_{A_{2}}+\underbrace{R(p_{U}^{*})-\mathbb{E}[R(\hat{p}_{U})]}_{A_{3}}.	
\]
The first term $A_{1}=\Theta\left(n^{-1/2}\right)$ under a worst-case distribution $F_{Y,X}\in \mathcal{F}$, and the third term $A_{3}=O\left(n^{-2/3}\right)$ under $F_{Y}$, the marginal of $F_{Y,X}$. The second term $A_{2}=\Theta(1)$ when $X$ contains sufficient information about the valuation $Y$. Then, a sufficient condition for $\hat{p}_{D}$ to be revenue superior to $\hat{p}_{U}$ is that $n\rightarrow \infty$. In theory, this claim can be proved with the upper bounds in Section \ref{sec:upper} and a different construction in the derivations of the lower bounds. Particularly, this new construction would first find a density $f_{Y,X}$ such that the revenue generated by the corresponding $f_{Y,X}$-optimal 3PD strategy is well separated from the revenue generated by the optimal uniform pricing strategy associated with $f_{Y}$, and then build a large enough class of perturbed versions of $f_{Y,X}$; finally we would bound the separation between the optimal prices associated with these densities, in a similar fashion as what is done in Appendix \ref{sec: proofs for lower bounds}. In the paper, to make the analysis tractable, we choose the distribution $Y,X\sim U[0,1]$ with $X$ independent of $Y$ as the benchmark distribution and construct its perturbed versions with some correlation.

A challenging open question is, can the condition on $n$ be weakened to some finite number and if so, when? To answer this question, we would have to derive the universal constants in our bounds in meaningful forms. Unfortunately, due to the complexity of our problem, this exercise is infeasible under the
existing techniques from mathematical statistics, probability theory, and information theory. 

%In addition, a different construction in the derivations of the lower bounds would be needed. In theory, this new construction would first find a density $f_{Y,X}$ such that the revenue generated by the corresponding $f_{Y,X}$-optimal 3PD strategy is well separated from the revenue generated by the optimal uniform pricing strategy associated with $f_{Y}$, and then build a large enough class of perturbed versions of $f_{Y,X}$; finally we would bound the separation between the optimal prices associated with these densities, in a similar fashion as what is done in Appendix \ref{sec: proofs for lower bounds}. In the paper, to make the analysis tractable, we choose the distribution $Y,X\sim U[0,1]$ with $X$ independent of $Y$ as the benchmark distribution and construct its perturbed versions with some correlation.

%We conjecture that in the worst case, there exists a sample size $\bar{n}<\infty$ such that when $n\geq\bar{n}$, it is always beneficial to engage in sample-based 3PD when $X$ contains sufficient information about the valuation $Y$.
%

Our results suggest that it is more beneficial to engage in sample-based uniform pricing when $X$ is independent of $Y$.\footnote{The information of independence is unknown to the seller. He/she can statistically test for the independence of $Y$ and $X$ from the data but any such tests would suffer from Type I and Type II errors.} The fundamental reason lies in the proofs for Theorems \ref{thm:lower-bound-pd-1} and \ref{thm:lower-bound-pd}: unless $n=\infty $, no strategies that exploit $\{(Y_{i},X_{i}),1\leq i\leq n\}$ are able to distinguish with certainty the distribution $Y,X\sim U[0,1]$ with $X$ independent of $Y$ from its perturbed versions with some correlation (see the detailed constructions in Appendix \ref{sec: proofs for lower bounds}). The curse of dimensionality from exploiting the covariate $X$ makes the convergence of 3PD strategies based on $\{(Y_{i},X_{i}),1\leq i\leq n\}$ slower than that of the uniform pricing strategies based on $\{Y_{i},1\leq i\leq n\}$. 
%Figure~\ref{fig:independent} illustrates this point by plotting the $K$-markets revenue for $K=1,\ldots,5$ and $n=2,\ldots, 10^5$ for the case of uniform and independent $X$ and $Y$. It can be seen that having fewer markets is always beneficial.

Our upper
and lower bounds together suggest the following possibility: even when the covariate $X$ contains useful information about the valuation
$Y$, the $K$-markets ERM strategy \textit{can} be revenue inferior to the uniform ERM strategy in finite samples, due to the curse
of dimensionality and slower convergence of the $K$-markets ERM strategy
to its true-distribution optimal counterpart (and hence, a more stringent growth requirement of the sample size).
Indeed, the numerical evidence in Section \ref{sec:numerical} confirms this possibility. But such an implication should be taken with caution in small samples.

\paragraph{Small sample complication.} Given the pattern observed in our numerical studies, one might conjecture the following: there exists some $\bar n > 1$ such that when $n < \bar n$, the uniform ERM is always revenue-superior to the $K$-markets ERM (with $K>1$). In what follows, we explain why this conjecture may not hold.

Specifically, in our language, \cite{babaioff2018two} construct a distribution $F_Y$ such that the uniform ERM revenue under $n=2$ is strictly smaller than the uniform ERM revenue under $n=1$.
This seemingly counterintuitive result highlights the difficulty of establishing general comparative results with very small sample size.
We now argue that this construction also sheds some light on the comparison of the revenue performance of the $K$-markets ERM strategy with $K=1$ vs $K=2$ in the case of $n=2$.

To make this connection, we take $X$ to be uniform on $[0,1]$ and independent of $Y$, and assume that in the case $K=2$, when one of the markets is empty, the price for this market is set at the same level as for the other market.
Then, if $K=2$ and both markets are non-empty, the revenue in each market equals the $1$-market ERM revenue under $n=1$.
Otherwise, if both observations are in the same market, then the revenue equals the $1$-market ERM revenue under $n=2$.
Therefore, the expected $2$-markets ERM revenue with $n=2$ is the average of the $1$-market ERM revenue under $n=1$ and $n=2$ and hence strictly higher than the $1$-market ERM revenue with $n=2$ for a distribution $F_Y$ exhibiting the property discussed in \cite{babaioff2018two}.

%% Any reason lambda is used instead of just the numbers 3/4 and 1/4 since X is uniform and we have iid data?
More formally, let $R_{K,n}$ denote the expected revenue of the $K$-markets ERM strategy with a sample of size $n$.
 	Then, 
	\begin{align*}
		% R_{2,2} &= \mathbb{E}_{F_{Y,X}} [R(\hat{p}(\mathrm{data}),F_{Y,X})] \\
		R_{2,2}
			&= \mathbb{E}_{\data^{n=2} \sim \,F_{Y,X}} [R(\hat{p}_D(\data),F_{Y,X})]\\
			&= \frac12 \mathbb{E}_{\data^{n=2} \sim  F_{Y,X} | I_{1} = \varnothing \text{ or } I_{2} = \varnothing} [R(\hat{p}_D(\data),F_{Y,X})]
			\\ &\quad 
			+ \frac12 \mathbb{E}_{\data^{n=2} \sim  F_{Y,X} | I_{1} \neq \varnothing \text{ and } I_{2} \neq \varnothing} [R(\hat{p}_D(\data),F_{Y,X})] \\
			&= \frac12   R_{1,2}
			% \\ &\quad
			 + \frac12 R_{1,1}.
			% &= (1-\lambda)  \mathbb{E}_{\data^{n=1}_Y \sim  F_{Y}} [R(\hat{p}_D(\data_Y),F_{Y})]
			% % \\ &\quad
			%  + \lambda  \mathbb{E}_{\data_Y^{n=2} \sim  F_{Y}} [R(\hat{p}_D(\data_Y),F_{Y})] \\
	\end{align*}
    Therefore, $R_{1,1} > R_{1,2}$ implies $R_{2,2} > R_{1,2}$.
    
    Finally, we add the caveat that the construction in \cite{babaioff2018two} is based on an atomless approximation of the censored equal-revenue distribution $F_{Y}(y) = 1-1/y, y\in[1,\infty)$ which has a discontinuous density.
    However, it is straightforward to verify that the same property holds for the equal-revenue distribution truncated at any $y > 4$, which has a Lipschitz continuous and differentiable density.
    Moreover, the equal revenue distribution truncated at $y$ and translated to the left by $t>1/y$ (so that the support is $[1-t,y-t]$) also has a Lipschitz continuous and differentiable density, the interior optimal price (in line with our assumptions), and satisfies the \cite{babaioff2018two} property, e.g., for $y=4, t=1/2$.
% Here is the something I am comtemplating.
% There is always the alternative to remove this if we can't find the best way to do it. 

    \paragraph{An open problem.} To conclude, we would like to propose a challenging open problem: Do there exist some $3 \leq \underline{n}<\bar{n}<\infty$ such that for any $n\in [\underline{n},\bar{n}]$ and distribution in $\mathcal{F}$, the $K$-markets ERM strategy (for any $K>1$) is always revenue-inferior to the uniform ERM strategy?

\newpage
\appendix
\numberwithin{lemma}{section}
\numberwithin{figure}{section}
\section{Proofs for upper bounds}

\label{sec:proof-upper-uniform}

To facilitate the presentation, we first give the proof for Corollary
\ref{thm:upper-bound-unif}. 
\begin{proof}[Proof of Corollary \ref{thm:upper-bound-unif}]

%For simplicity, we omit writing $F_{Y,X}$ and $\textit{data}$ in the proof. 
Denote $\kappa'\equiv\inf_{p\in[0,1]}|R''(p)|/2>0$. By
Taylor expansion, for any $p$, 
\begin{align*}
R(p_{U}^{*})-R(p)\geq\kappa'(p-p_{U}^{*})^{2}.
\end{align*}
Denote $\hat{R}(p)\equiv p(1-\hat{F}(p))$. Combining the inequality
above with the basic inequality (i.e., $\hat{R}(\hat{p}_{U})\geq\hat{R}(p_{U}^{*})$),
we have 
\begin{align}
\kappa'(\hat{p}_{U}-p_{U}^{*})^{2}\leq R(p_{U}^{*})-R(\hat{p}_{U})\leq R(p_{U}^{*})-\hat{R}(p_{U}^{*})-(R(\hat{p}_{U})-\hat{R}(\hat{p}_{U})).\label{eq:basic-ineq}
\end{align}
For $\delta\in(0,p_{U}^{*}]$, define 
\begin{align*}
\mathcal{G}_{\delta}\equiv\{y\mapsto p\mathbf{1}\{y\geq p\}-p_{U}^{*}\mathbf{1}\{y\geq p^{*}\}\colon p\in[p_{U}^{*}-\delta,p_{U}^{*}+\delta]\}
\end{align*}
and 
\begin{align*}
G_{\delta}(y)\equiv\begin{cases}
0, & \text{ if }y<p_{U}^{*}-\delta,\\
p_{U}^{*}, & \text{ if }p_{U}^{*}-\delta\leq y\leq p_{U}^{*}+\delta,\\
\delta, & \text{ if }y>p_{U}^{*}+\delta.
\end{cases}
\end{align*}
Then $G_{\delta}$ is an envelope function of the class $\mathcal{G}_{\delta}$.
The $L_{2}\left(P\right)-$norm of $G_{\delta}$ is bounded by 
\begin{align*}
\lVert G_{\delta}\rVert_{L_{2}\left(P\right)}=\left((p_{U}^{*})^{2}\mathbb{P}(Y\in[p_{U}^{*}-\delta,p_{U}^{*}+\delta])+\delta^{2}\mathbb{P}(Y>p_{U}^{*}+\delta)\right)^{1/2}\leq C\sqrt{\delta}.
\end{align*}
As we argue in the proof of Lemma \ref{lm:G_delta-VC-subgraph}, $\mathcal{G}_{\delta}$ is a VC-subgraph class, so we have 
\begin{align}
\mathbb{E}\sup_{g\in\mathcal{G}_{\delta}}\left|\frac{1}{n}\sum_{i=1}^{n}g(Y_{i})-\mathbb{E}g(Y_{i})\right|\leq C\sqrt{\delta/n}.\label{eq:expected-sup-EP}
\end{align}
We derive the convergence rate of $\hat{p}-p^{*}$ via a peeling argument.
Consider the following decomposition 
\begin{align*}
\mathbb{P}\left(n^{1/3}|\hat{p}_{U}-p_{U}^{*}|>M\right)=\sum_{j=M+1}^{\infty}\mathbb{P}\left(n^{1/3}|\hat{p}_{U}-p_{U}^{*}|\in(j-1,j]\right).
\end{align*}
For any $j\geq M+1$, we have 
\begin{align*}
 & \{|\hat{p}_{U}-p_{U}^{*}|\in((j-1)n^{-1/3},jn^{-1/3}]\}\\
= & \{|\hat{p}_{U}-p_{U}^{*}|>(j-1)n^{-1/3},|\hat{p}_{U}-p_{U}^{*}|\leq jn^{-1/3}\}\\
\subset & \left\{ R(p_{U}^{*})-\hat{R}(p_{U}^{*})-(R(\hat{p}_{U})-\hat{R}(\hat{p}_{U}))\geq\kappa'(j-1)^{2}n^{-2/3},|\hat{p}_{U}-p_{U}^{*}|\leq jn^{-1/3}\right\} \\
\subset & \left\{ \Delta_{j,n}\geq\kappa'(j-1)^{2}n^{-2/3}\right\} ,
\end{align*}
where the third line follows from (\ref{eq:basic-ineq}), and $\Delta_{j,n}$
in the last line is defined as 
\[
\Delta_{j,n}\equiv\sup_{g\in\mathcal{G}_{jn^{-1/3}}}\left|\frac{1}{n}\sum_{i=1}^{n}g(Y_{i})-\mathbb{E}g(Y_{i})\right|.
\]
Therefore, 
\begin{align*}
\mathbb{P}\left(|\hat{p}_{U}-p_{U}^{*}|\in((j-1)n^{-1/3},jn^{-1/3}]\right) & \leq\mathbb{P}\left(\Delta_{j,n}\geq\kappa'(j-1)^{2}n^{-2/3}\right).
\end{align*}
To bound the probability on the RHS of the above inequality, we use
the concentration inequality given by Theorem 7.3 in \citet{bousquet2003concentration},
which is a version of \citeauthor{talagrand1996new}'s (\citeyear{talagrand1996new})
inequality. The concentration inequality states that for all $t>0$,
\begin{align*}
\mathbb{P}\left(\Delta_{j,n}\geq\mathbb{E}\Delta_{j,n}+\sqrt{2t(\sigma^{2}+2\mathbb{E}\Delta_{j,n})/n}+t/(3n)\right)\leq\exp(-ct),
\end{align*}
for some universal constant $c>0$, where 
\begin{align*}
\sigma^{2}\equiv\sup_{g\in\mathcal{G}_{jn^{-1/3}}}\mathbb{E}g(Y_{1})^{2}\leq\lVert\mathcal{G}_{jn^{-1/3}}\rVert_{L_{2}}^{2}\leq Cjn^{-1/3}.
\end{align*}
From (\ref{eq:expected-sup-EP}), we have 
\begin{align*}
\mathbb{E}\Delta_{j,n}\leq C\sqrt{jn^{-1/3}/n}=C\sqrt{j}n^{-2/3}.
\end{align*}
By setting $t=\kappa'j^{2}$, we have 
\begin{align*}
 & \mathbb{E}\Delta_{j,n}+\sqrt{2t(\sigma^{2}+2\mathbb{E}\Delta_{j,n})/n}+t/(3n)\\
\leq & C\sqrt{j}n^{-2/3}+\sqrt{2\kappa'j^{2}(Cjn^{-1/3}+2C\sqrt{j}n^{-2/3})/n}+\kappa'j^{2}/(3n)\\
\leq & C'j^{3/2}n^{-2/3}\leq C^{*}(j-1)^{2}n^{-2/3},
\end{align*}
when $j$ is large enough. Then we have 
\begin{align*}
\mathbb{P}\left(\Delta_{j,n}\geq C^{*}(j-1)^{2}n^{-2/3}\right)\leq\mathbb{P}\left(\Delta_{j,n}\geq Cjn^{-2/3}\right)\leq\exp(-c\kappa'j^{2}),\text{ for \ensuremath{j} large}.
\end{align*}
To summarize, we have shown that 
\begin{align*}
\mathbb{P}\left(n^{1/3}|\hat{p}_{U}-p_{U}^{*}|>M\right)\leq\sum_{j=M+1}^{\infty}\exp(-C_{1}j^{2})\leq C_{3}\exp(-C_{2}M^{2}). 
\end{align*}
By integrating the tail probability, we have 
\begin{align}
\mathbb{E}|\hat{p}_{U}-p_{U}^{*}|^{s}\lesssim n^{-s/3}. \label{eq:price-diff-unif}
\end{align}
For revenue, we use the second-order Taylor expansion and obtain
that 
\begin{align*}
\mathbb{E}[R(p_{U}^{*})-R(\hat{p}_{U})]\leq\sup_{p}|R''(p)|\mathbb{E}(\hat{p}_{U}-p_{U}^{*})^{2}\lesssim n^{-2/3}.
\end{align*}
\end{proof}
\label{sec:proof-upper-3PD}
\begin{proof}[Proof of Theorem \ref{thm:upper-bound-pd}]

We introduce some notations. Let $\tilde{R}_{k}(p)$ denote the revenue
collected from the $k$th market by charging price $p$; that is,
\begin{align*}
\tilde{R}_{k}(p) & \equiv p\mathbb{P}(Y>p,X\in I_{k})\\
 & =p\int_{p}^{1}\int_{I_{k}}f_{Y|X}(y|x)f_{X}(x)dxdy.
\end{align*}
Denote $\tilde{p}_{k}\equiv\argmax_{p\in[0,1]}\tilde{R}_{k}(p)$ as
the maximizer of $\tilde{R}_{k}$. The first- and second-order
derivatives of $\tilde{R}_{k}$ are respectively 
\begin{align*}
\tilde{R}_{k}'(p) & =\int_{p}^{1}\int_{I_{k}}f_{Y|X}(y|x)f_{X}(x)dxdy-p\int_{I_{k}}f_{Y|X}(p|x)f_{X}(x)dx,\\
\tilde{R}_{k}''(p) & =\int_{I_{k}}\left(-2f_{Y|X}(p|x)-p\frac{\partial}{\partial y}f_{Y|X}(p|x)\right)f_{X}(x)dx.
\end{align*}
By the Lipschitz continuity assumption, the second-order derivative
$\tilde{R}_{k}''(p)$ exists for almost all $p\in[0,1]$. Recall
that 
\begin{align*}
-2f_{Y|X}(p|x)-p\frac{\partial}{\partial y}f_{Y|X}(p|x)\leq-C^{*},
\end{align*}
and $f_{X}$ is bounded away from zero. Denote $2\kappa''\equiv C^{*}\inf_{x\in[0,1]}f_{X}(x).$
Then 
\begin{align*}
\tilde{R}_{k}''(p)\leq-2\kappa''\int_{I_{k}}dx=-2\kappa''/K
\end{align*}
for almost all $p\in[0,1]$. By Lemma \ref{lm:foundamental-thm-calculus-Lipschitz},
we have 
\begin{align}
\tilde{R}_{k}(\tilde{p}_{k})-\tilde{R}_{k}(p)=|\tilde{R}_{k}(\tilde{p}_{k})-\tilde{R}_{k}(p)|\geq\frac{\kappa''}{K}(\tilde{p}_{k}-p)^{2},p\in[0,1].\label{eq:upper-pd-quadratic-lower}
\end{align}
Note that $\tilde{p}_{k}$ is not the true optimal price under $F_{Y,X}$.
We need to relate it to the true optimal price. Let $k(x_{0})$ be
such that $x_{0}\in I_{k}$. Then by the triangle inequality, we can
decompose the pricing difference into estimation error and approximation
error: 
\begin{align}
\mathbb{E}|\hat{p}_{D}(x_{0};\textit{data})-p_{D}^{*}(x_{0})| & =\mathbb{E}|\hat{p}_{k(x_{0})}-p_{D}^{*}(x_{0})|\nonumber \\
 & \leq\underset{\textrm{Estimation error}}{\underbrace{\mathbb{E}|\hat{p}_{k(x_{0})}-\tilde{p}_{k(x_{0})}|}}+\underset{\textrm{Approximation error}}{\underbrace{|\tilde{p}_{k(x_{0})}-p_{D}^{*}(x_{0})|}}.\label{eq:15}
\end{align}

\textbf{\textit{Estimation error}}. Denote $\hat{R}_{k}$ as the empirical
counterpart of $\tilde{R}_{k}$; that is, 
\begin{align*}
\hat{R}_{k}(p)\equiv\frac{p}{n_{k}}\sum_{i\in\{j:X_{j}\in I_{k}\}}\mathbf{1}\{Y_{i}>p,X_{i}\in I_{k}\}.
\end{align*}
Recall that $\hat{p}_{k}$ is the maximizer of $\hat{R}_{k}$. The
basic inequality (i.e., $\hat{R}_{k}(\hat{p}_{k})\geq\hat{R}_{k}(\tilde{p}_{k})$)
gives that 
\begin{align}
\tilde{R}_{k}(\tilde{p}_{k})-\tilde{R}_{k}(\hat{p}) & =\tilde{R}_{k}(\tilde{p}_{k})-\tilde{R}_{k}(\hat{p})-\hat{R}_{k}(\tilde{p}_{k})+\hat{R}_{k}(\tilde{p}_{k})\nonumber \\
 & \leq\tilde{R}_{k}(\tilde{p}_{k})-\tilde{R}_{k}(\hat{p})-\hat{R}_{k}(\tilde{p}_{k})+\hat{R}_{k}(\hat{p}_{k}).\label{eq:basic}
\end{align}
Combining (\ref{eq:upper-pd-quadratic-lower}) and (\ref{eq:basic})
yields 
\begin{align}
\frac{\kappa''}{K}(\tilde{p}_{k}-\hat{p}_{k})^{2}\leq\tilde{R}_{k}(\tilde{p}_{k})-\hat{R}_{k}(\tilde{p}_{k})-(\tilde{R}_{k}(\hat{p})-\hat{R}_{k}(\hat{p}_{k})).\label{eq:PD-basic-ineq}
\end{align}
In each $I_{k}$ ($k=1,\ldots,K$), the optimal price is the same.

In what follows, $s=1$ or $s=2$. Conditioning on
$X_{i}$ where $i$ falls in the $k$th market, the proof for Corollary
\ref{thm:upper-bound-unif}, in particular, (\ref{eq:price-diff-unif})
yields
\begin{equation}
\mathbb{E}\left[|\hat{p}_{k}-\tilde{p}_{k}|^{s}\vert X_{i},\,i\in\{j:X_{j}\in I_{k}\}\right]\lesssim(1/n_{k})^{s/3}.\label{eq:cond_price_error}
\end{equation}
By Assumption \ref{def:pd}(vi), the $i$th observation falls into
the $k$th market with probabilty $\frac{c_{1}}{K}$ and other markets
with probability $\frac{K-c_{1}}{K}$. Let us consider the event $\mathcal{A}_{q}=\left\{ n_{k}>qn\right\} $
where $q\in(0,\frac{c_{1}}{K})$. If $K$ is large enough such that
$\frac{c_{1}}{K}\leq\frac{1}{2}$, the classic binomial tail bound
yields 
\begin{equation}
\mathbb{P}(\mathcal{A}_{q})>1-\exp(-n\textrm{KL}(q||c_{1}/K)) \label{eq: A_q event}
\end{equation}
where 
\[
\textrm{KL}(q||c_{1}/K):=q\log\frac{qK}{c_{1}}+(1-q)\log\frac{(1-q)K}{K-c_{1}}.
\]
Therefore, we have 
\[
\mathbb{E}\left[|\hat{p}_{k}-\tilde{p}_{k}|^{s}\vert X_{i},\,i\in\{j:X_{j}\in I_{k}\}\right]\lesssim(qn)^{-s/3},\textrm{ for a given }k,
\]
with probability at least $1-\exp(-n\textrm{KL}(q||c_{1}/K))$. With
a union bound, we also have 
\begin{equation}
\mathbb{E}\left[|\hat{p}_{k}-\tilde{p}_{k}|^{s}\vert X_{i},\,i\in\{j:X_{j}\in I_{k}\}\right]\lesssim(qn)^{-s/3},\textrm{ for all }k,\label{eq:binomial}
\end{equation}
with probability at least $1-\exp(-n\textrm{KL}(q||c_{1}/K)+\log K)$.

Furthermore, we have 
\begin{equation}
\textrm{KL}(q||c_{1}/K)\geq\frac{1}{2}\left(\frac{c_{1}}{K}-q\right)^{2}.\label{eq:bin-KL}
\end{equation}
We show a more general result 
\[
\textrm{KL}(q||\alpha)=:g_{q}(\alpha)\geq\frac{\left(\alpha-q\right)^{2}}{2}
\]
for any $q\in(0,\alpha)$. Because $g_{q}(\cdot)$ is twice differentiable
and $g_{q}(q)=0$, a second-order Taylor expansion gives 
\[
g_{q}(\alpha)=g_{q}^{'}(q)(\alpha-q)+\frac{g_{q}^{''}(t)}{2}\left(\alpha-q\right)^{2}
\]
where $t\in[q,\alpha]$ and $g_{q}^{'}(t)=-\frac{q}{t}+\frac{1-q}{1-t}$.
Note that $g_{q}^{'}(q)=0$. Moreover, given $t\in(0,1)$ such that
$\frac{1}{t^{2}}\geq1$ and $\frac{1}{(1-t)^{2}}\geq1$, we have 
\[
g_{q}^{''}(t)=\frac{q}{t^{2}}+\frac{1-q}{(1-t)^{2}}\geq1.
\]

As a consequence of (\ref{eq:binomial}) and (\ref{eq:bin-KL}),
we have 
\begin{equation}
\mathbb{E}\left[|\hat{p}_{k}-\tilde{p}_{k}|^{s}\vert X_{i},\,i\in\{j:X_{j}\in I_{k}\}\right]\lesssim(qn)^{-s/3},\textrm{ for all }k,\label{eq:binomial-1}
\end{equation}
with probability at least $1-\exp\left(-n\left(\frac{c_{1}}{K}-q\right)^{2}/2+\log K\right)$.
Taking $q=\frac{c_{1}}{2K}$, (\ref{eq:binomial-1}) gives 
\begin{equation}
\mathbb{E}\left[|\hat{p}_{k}-\tilde{p}_{k}|^{s}\vert X_{i},\,i\in\{j:X_{j}\in I_{k}\}\right]\lesssim(K/n)^{s/3},\textrm{ for all }k,\label{eq:binomial-main}
\end{equation}
with probability at least $1-\exp\left(-\frac{nc_{1}^{2}}{8K^{2}}+\log K\right)$.

In view of (\ref{eq:cond_price_error}), (\ref{eq:binomial}),
(\ref{eq:binomial-1}) and (\ref{eq:binomial-main}), the source of
uncertainty from the conditioning is solely from the statistics $n_{k}$.
Using this fact, (\ref{eq:binomial-main}) as well as the fact that
$\hat{p}_{k}$ and $\tilde{p}_{k}$ are bounded, we have
\begin{eqnarray*}
\mathbb{E}\left[|\hat{p}_{k}-\tilde{p}_{k}|^{s}\right] & = & \mathbb{E}\left[|\hat{p}_{k}-\tilde{p}_{k}|^{s}1\left\{ n_{k}>\frac{nc_{1}}{2K}\right\} \right]+\mathbb{E}\left[|\hat{p}_{k}-\tilde{p}_{k}|^{s}1\left\{ n_{k}\leq\frac{nc_{1}}{2K}\right\} \right]\\
 & \precsim & (K/n)^{s/3}+\exp\left(-\frac{nc_{1}^{2}}{8K^{2}}+\log K\right).\label{eq:binomial_final}
\end{eqnarray*}

\textbf{\textit{Approximation error}}. The second term $|\tilde{p}_{k(x_{0})}-p_{D}^{*}(x_{0})|$
in (\ref{eq:15}) is deterministic and can be controlled by using
the smoothness conditions. By definition, $p_{D}^{*}(x_{0})$ satisfies
the first-order condition 
\begin{align*}
0=\frac{\partial}{\partial p}r(p_{D}^{*}(x),x).
\end{align*}
By the differentiability condition of $\mathcal{F}$, $\frac{\partial}{\partial p}r(p,x)$
is continuously differentiable in $(p,x)$ in a neighborhood of $(p_{D}^{*}(x),x)$.
By the strong concavity, $\frac{\partial^{2}}{\partial p^{2}}r(p_{D}^{*}(x),x)$
is non-zero. Then by the implicit function theorem, the function
$p_{D}^{*}(x)$ is well-defined (uniquely determined by the first-order
condition) and is differentiable. Its derivative is given as follows:
\begin{align*}
\frac{d}{dx}p_{D}^{*}(x)=-\frac{\frac{\partial^{2}}{\partial p\partial x}r(p_{D}^{*}(x),x)}{\frac{\partial^{2}}{\partial p^{2}}r(p_{D}^{*}(x),x)}.
\end{align*}
By the strong concavity, the absolute value of $\frac{\partial^{2}}{\partial p^{2}}r(p,x)$
is bounded away from zero; also, the function $\left|\frac{\partial}{\partial x}F_{Y|X}(y|x)+y\frac{\partial}{\partial x}f_{Y|X}(y|x)\right|$
is bounded above by $\bar{C}$. This implies that $p_{D}^{*}(x)$ is Lipschitz continuous
on $[0,1]$. We use $L_{1}$ to denote the Lipschitz constant. By
applying Taylor expansion to the first-order condition of $\tilde{p}_{k}$,
we have 
\begin{align*}
0=\int_{I_{k}}\frac{\partial}{\partial p}r(\tilde{p}_{k},x)f_{X}(x)dx=\int_{I_{k}}\underbrace{\frac{\partial}{\partial p}r(p_{D}^{*}(x),x)}_{=0}f_{X}(x)dx\\
+\int_{I_{k}}\frac{\partial^{2}}{\partial p^{2}}r(\bar{p}(x),x)(\tilde{p}_{k}-p_{D}^{*}(x))f_{X}(x)dx,
\end{align*}
for some $\bar{p}(x)$ between $\tilde{p}_{k}$ and $p_{D}^{*}(x)$.
Rearranging terms shows that $\tilde{p}_{k}$ is a weighted average
of $p_{D}^{*}(x),x\in I_{k}$; that is, 
\[
\tilde{p}_{k}=\frac{\int_{I_{k}}\frac{\partial^{2}}{\partial p^{2}}r(\bar{p}(x),x)p_{D}^{*}(x)f_{X}(x)dx}{\int_{I_{k}}\frac{\partial^{2}}{\partial p^{2}}r(\bar{p}(x),x)f_{X}(x)dx}.
\]
%This implies that there exists some $x^{*}\in I_{k}$ such that $\tilde{p}_{k}=p_{D}^{*}(x^{*})$).
Since $p_{D}^{*}(x)$ is Lipschitz continuous, the triangle inequality implies that
\begin{align*}
|\tilde{p}_{k}-p_{D}^{*}(x_{0})|^{s}\leq L_{1}^{s}/K^{s},\text{ for any fixed }s\geq1.
\end{align*}

Therefore, we obtain the following upper bound 
\begin{align*}
\mathbb{E}|\hat{p}_{D}(x_{0};\textit{data})-p_{D}^{*}(x_{0})|^{2}\lesssim{1/K^{2}+(K/n)^{2/3}+\exp\left(-\frac{nc_{1}^{2}}{8K^{2}}+\log K\right).}
\end{align*}
By choosing $K\asymp n^{-1/4}$, the above bound becomes $n^{-1/4}$.

In addition, Lemma \ref{lm:foundamental-thm-calculus-Lipschitz}(iii) gives
that 
\begin{align*}
\mathbb{E}\left[r(p_{D}^{*},x_{0})-r(\hat{p}_{D}(\textit{data}),x_{0})\right] & \lesssim\mathbb{E}\left[|\hat{p}_{D}(x_{0};\textit{data})-p_{D}^{*}(x_{0})|^{2}\right]\\
 & \lesssim1/K^{2}+(K/n)^{2/3}+\exp\left(-\frac{nc_{1}^{2}}{8K^{2}}+\log K\right).
\end{align*}
This proves part (i) of the theorem.

For part (ii), we want to bound the expected revenue difference. Consider
the following decomposition: 
\begin{align*}
 & R(p_{D}^{*},F_{Y,X})-R(\hat{p}_{D}(\textit{data}),F_{Y,X})\\
\leq & R(p_{D}^{*},F_{Y,X})-R(\tilde{p},F_{Y,X})+|R(\tilde{p},F_{Y,X})-R(\hat{p}_{D}(\textit{data}),F_{Y,X})|.
\end{align*}
The first term on the RHS is deterministic and can be bounded by using
Lemma \ref{lm:foundamental-thm-calculus-Lipschitz}(iii) as follows:
\begin{align*}
 & |R(p_{D}^{*},F_{Y,X})-R(\tilde{p},F_{Y,X})|\\
\leq & \int_{0}^{1}|r(p_{D}^{*}(x),x)-r(\tilde{p}(x),x)|f_{X}(x)dx\\
= & \sum_{k=1}^{K}\int_{I_{k}}|r(p_{D}^{*}(x),x)-r(\tilde{p}_{k},x)|f_{X}(x)dx\\
\leq & \sum_{k=1}^{K}\int_{I_{k}}\frac{1}{2}\left|2f_{Y|X}(y|x)+y\frac{\partial}{\partial y}f_{Y|X}(y|x)\right|\sup_{y,x}|p_{D}^{*}(x)-\tilde{p}_{k}|^{2}f_{X}(x)dx\\
 & \lesssim1/K^{2}.
\end{align*}
where we have used the first-order condition of $p_{D}^{*}$. For
the second term, we have 
\begin{align*}
R(\tilde{p},F_{Y,X})-R(\hat{p}_{D}(\textit{data}),F_{Y,X})=\sum_{k=1}^{K}\tilde{R}_{k}(\tilde{p}_{k})-\tilde{R}_{k}(\hat{p}_{k}).
\end{align*}
This is because both $\hat{p}(\textit{data})$ and $\tilde{p}$ are
constant within each $I_{k}$. Their revenues on $I_{k}$ are reduced
to $\tilde{R}_{k}$. Note that for every $k$, $\tilde{R}_{k}'(\tilde{p}_{k})=0$,
and 
\begin{align*}
|\tilde{R}_{k}''(p)| & \leq\int_{I_{k}}\left|2f_{Y|X}(y|x)+y\frac{\partial}{\partial y}f_{Y|X}(y|x)\right|f_{X}(x)dx\\
 & \leq\frac{1}{K}\sup_{y,x}\left(\left|2f_{Y|X}(y|x)+y\frac{\partial}{\partial y}f_{Y|X}(y|x)\right|f_{X}(x)\right).
\end{align*}
Then Lemma \ref{lm:foundamental-thm-calculus-Lipschitz}(iii) gives
that 
\begin{align*}
\tilde{R}_{k}(\tilde{p}_{k})-\tilde{R}_{k}(\hat{p}_{k})\lesssim1/K(\tilde{p}_{k}-\hat{p}_{k})^{2}.
\end{align*}
Hence, we have 
\begin{align*}
\mathbb{E}|R(\tilde{p},F_{Y,X})-R(\hat{p}_{D}(\textit{data}),F_{Y,X})| & \leq\sum_{k=1}^{K}\mathbb{E}|\tilde{R}_{k}(\tilde{p}_{k})-\tilde{R}_{k}(\hat{p}_{k})|\\
 & \lesssim{(K/n)^{2/3}+\exp\left(-\frac{nc_{1}^{2}}{8K^{2}}+\log K\right).}
\end{align*}
To summarize, we have shown that 
\begin{align*}
R(p_{D}^{*},F_{Y,X})-R(\hat{p}_{D}(\textit{data}),F_{Y,X})\lesssim{1/K^{2}+(K/n)^{2/3}+\exp\left(-\frac{nc_{1}^{2}}{8K^{2}}+\log K\right).}
\end{align*}
By choosing $K\asymp n^{-1/4}$, the above bound becomes $n^{-1/2}$.
This proves part (ii) of the theorem.  
\end{proof}
\label{sec:proof-upper-welfare}
\begin{proof}[Proof of Theorem \ref{thm:welfare}]
For part (i), notice that the welfare can be written as a double
integral 
\begin{align*}
W(p,F_{Y,X})=\int_{0}^{1}\int_{0}^{p(x)}yf_{Y|X}(y|x)dyf_{X}(x)dx.
\end{align*}
The function $yf_{Y|X}(y|x)$ is nonnegative and bounded for $y,x\in[0,1]$.
Then by the integral mean value theorem, we have 
\begin{align*}
 & \mathbb{E}|W(\hat{p}_{D}(\textit{data}),F_{Y,X})-W(p_{D}^{*},F_{Y,X})|\\
= & \mathbb{E}\Big|\int_{0}^{1}\int_{p_{D}^{*}(x)}^{\hat{p}_{D}(x;\textit{data})}yf_{Y|X}(y|x)dyf_{X}(x)dx\Big|\\
\leq & \sup_{y,x}|yf_{Y|X}(y|x)|\mathbb{E}\int_{0}^{1}|\hat{p}_{D}(x;\textit{data})-p_{D}^{*}(x)|f_{X}(x)dx.
\end{align*}
The integral on the last line can be decomposed based on the $K$
markets: 
\begin{align*}
\mathbb{E}\int_{0}^{1}|\hat{p}_{D}(x;\textit{data})-p_{D}^{*}(x)|f_{X}(x)dx & \leq\sum_{k=1}^{K}\int_{I_{k}}\left[\mathbb{E}|\hat{p}_{D}(x;\textit{data})-\tilde{p}_{k}|+|\tilde{p}_{k}-p_{D}^{*}(x)|\right]f_{X}(x)dx\\
 & =\sum_{k=1}^{K}\mathbb{E}|\hat{p}_{k}-\tilde{p}_{k}|/K+\sum_{k=1}^{K}\int_{I_{k}}|\tilde{p}_{k}-p_{D}^{*}(x)|f_{X}(x)dx\\
 & \lesssim(K/n)^{1/3}+1/K+\exp\left(-\frac{nc_{1}^{2}}{8K^{2}}+\log K\right)\asymp n^{-1/4},
\end{align*}
where the last line follows from the proof of Theorem \ref{thm:upper-bound-pd}.

For part (ii), since $p_{U}^{*}$ is a scalar, the welfare can be
simplified to 
\begin{align*}
W(p_{U}^{*},F_{Y})=\int_{0}^{p_{U}^{*}}yf_{Y}(y)dy.
\end{align*}
Then we have 
\begin{align*}
\mathbb{E}|W(\hat{p}_{U}(\textit{data}_{Y}),F_{Y})-W(p_{U}^{*},F_{Y})| & =\mathbb{E}\Big|\int_{p_{U}^{*}}^{\hat{p}_{U}(\textit{data}_{Y})}yf_{Y}(y)dy\Big|\\
 & \leq\sup_{y}|yf_{Y}(y)|\mathbb{E}|\hat{p}_{U}(\textit{data}_{Y})-p_{U}^{*}|\\
 & \lesssim n^{-1/3},
\end{align*}
where we have used Corollary  \ref{thm:upper-bound-unif} along with the
fact that $yf_{Y}(y)$ is nonnegative and bounded for $y\in[0,1]$. 
\end{proof}

\section{Proofs for lower bounds }\label{sec: proofs for lower bounds}

%\subsection{Proofs} 

\label{sec:proof-lower-x0} 
\begin{proof}[Proof of Theorem \ref{thm:lower-bound-pd-1}]
For Theorem \ref{thm:lower-bound-pd-1}, we use Lemma \ref{lm:lecam-functionals}
to prove the lower bound. Define 
\begin{align*}
\omega_{D}(\epsilon) & \equiv\sup_{F_{1},F_{2}\in\mathcal{F}}\left\{ |p_{D}^{*}(x_{0};F_{1})-p_{D}^{*}(x_{0};F_{2})|\colon H(F_{1}\|F_{2})\leq\epsilon\right\} .
\end{align*}
By Lemma \ref{lm:lecam-functionals}, we have 
\begin{align*}
\inf_{\check{p}_{D} \in \check{\mathcal{D}}}\sup_{F_{Y,X}\in\mathcal{F}}\mathbb{E}_{F_{Y,X}}|\check{p}_{D}(x_{0};\textit{data})-p_{D}^{*}(x_{0})|\geq\frac{1}{8}\omega_{D}\left(1/(2\sqrt{n})\right).
\end{align*}
Therefore, we only need to find a lower bound for $\omega_{D}$. Based
on the explanation in Section \ref{sec:proof-sketch1}, we want to construct two distributions
that are hard to distinguish but their optimal prices are well-separated.
We start by defining two perturbation functions. Let $\phi_{Y}$ be
defined as 
\begin{align}
\begin{split}\phi_{Y}(t)\equiv\begin{cases}
t+1, & t\in[-1,0],\\
-t+1, & t\in[0,2],\\
t-3, & t\in[2,3],\\
0, & \text{ otherwise.}
\end{cases}\end{split}
\label{eq:def-phiY}
\end{align}
Notice that $\phi_{Y}$ is Lipschitz continuous on $\mathbb{R}$.
Let $\phi_{X}$ be defined as \iffalse \begin{cases} e{-t2/(1-t2)},
\& t \in[{[}-1,1], \\
 -e{-(t-2)2/(1-(t-2)2)}, \& t \in[{[}1,3], \\
 0, \& \text{ otherwise.} \end{cases} \fi 
\begin{align*}
\begin{split}\phi_{X}(t)\equiv\begin{cases}
e^{-(4t-1)^{2}/(1-(4t-1)^{2})}, & t\in(0,1/2),\\
-e^{-(4t-3)^{2}/(1-(4t-3)^{2})}, & t\in(1/2,1),\\
0, & \text{ otherwise.}
\end{cases}\end{split}
\end{align*}
Notice that $\phi_{X}$ is infinitely differentiable on $\mathbb{R}$.
We plot the two perturbation functions in Figure \ref{fig:perturbs}.
\begin{figure}[!htbp]
\caption{Perturbation functions $\phi_{Y}$ and $\phi_{X}$.}
\centering \begin{tikzpicture}
				
				\begin{axis}[
					legend style={nodes={scale=0.8, transform shape}},
					axis y line=center, 
					axis x line=bottom,
					%y axis line style={opacity=0},
					ytick={-1,0,1},
					xtick={-1,0,1,2,3},
					xmin=-2,     xmax=4,
					ymin=-1.2,     ymax=1.2,
					%xlabel = \(u\),
					%ylabel = {\(MTE\)},
					]
					%Below the red parabola is defined
					\addplot [
					thick,
					domain=-2:-1, 
					samples=100, 
					color=black,
					]
					{0};
					\addplot [
					thick,
					domain=-1:0, 
					samples=100, 
					color=black,
					]
					{x+1};
					\addplot [
					thick,
					domain=0:2, 
					samples=100, 
					color=black,
					]
					{-x+1};
					\addplot [
					thick,
					domain=2:3, 
					samples=100, 
					color=black,
					]
					{x-3};
					\addplot [
					thick,
					domain=3:4, 
					samples=100, 
					color=black,
					]
					{0};
					\addlegendentry{$\phi_Y$};
					
				\end{axis}
			\end{tikzpicture} \begin{tikzpicture}
				
				\begin{axis}[
					legend style={nodes={scale=0.8, transform shape}},
					axis y line=center, 
					axis x line=bottom,
					%y axis line style={opacity=0},
					ytick={-1,0,1},
					xtick={0,1/2,1},
					xmin=-0.2,     xmax=1.2,
					ymin=-1.2,     ymax=1.2,
					%xlabel = \(u\),
					%ylabel = {\(MTE\)},
					]
					\addplot [
					thick,
					domain=-0.2:0, 
					samples=100, 
					color=black,
					]
					{0};
					\addplot [
					thick,
					domain=0:1/2, 
					samples=100, 
					color=black,
					]
					{exp(-(4*x-1)^2/(1-(4*x-1)^2))};
					\addplot [
					thick,
					domain=1/2:1, 
					samples=100, 
					color=black,
					]
					{-exp(-(4*x-3)^2/(1-(4*x-3)^2))};
					\addplot [
					thick,
					domain=1:1.2, 
					samples=100, 
					color=black,
					]
					{0};
					\addlegendentry{$\phi_X$};
					
				\end{axis}

			\end{tikzpicture} 
\label{fig:perturbs} 
\end{figure}
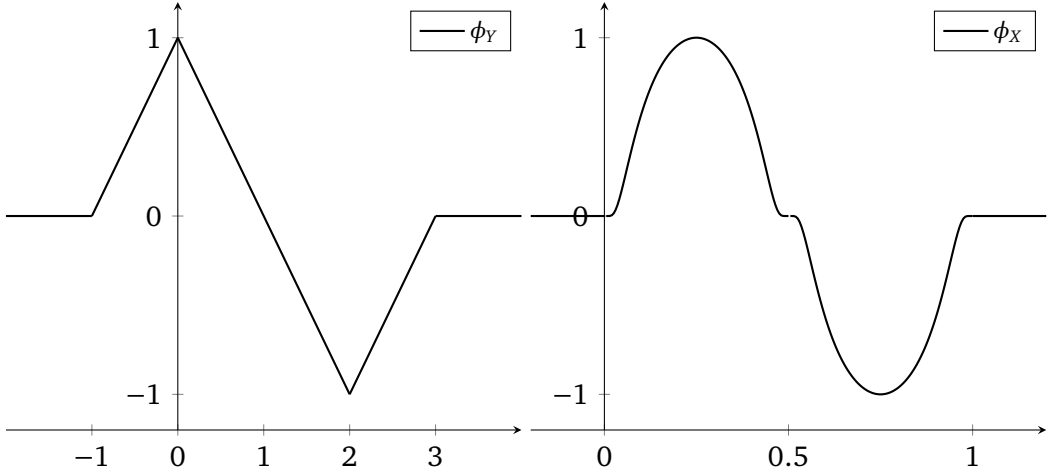

\iffalse 
\begin{figure}[!htbp]
\centering \begin{tikzpicture}
				
				\begin{axis}[
					legend style={nodes={scale=0.8, transform shape}},
					axis y line=center, 
					axis x line=center,
					%y axis line style={opacity=0},
					ytick={-1,0,1},
					xtick={-1,0,1,2,3},
					xmin=-2,     xmax=4,
					ymin=-1.2,     ymax=1.2,
					%xlabel = \(u\),
					%ylabel = {\(MTE\)},
					]
					%Below the red parabola is defined
					\addplot [
					thick,
					domain=-2:-1, 
					samples=100, 
					color=black,
					]
					{0};
					\addplot [
					thick,
					domain=-1:1, 
					samples=100, 
					color=black,
					]
					{exp(-x^2/(1-x^2))};
					\addplot [
					thick,
					domain=1:3, 
					samples=100, 
					color=black,
					]
					{-exp(-(x-2)^2/(1-(x-2)^2)};
					\addplot [
					thick,
					domain=3:4, 
					samples=100, 
					color=black,
					]
					{0};
					\addlegendentry{$\phi_\infty$};
					
				\end{axis}

			\end{tikzpicture} 
\end{figure}

\fi

Now we construct the two distributions. Let $\delta\in(0,1/4)$ be
a small number (that depends on $n$) to be specified later. Let $a$
be any number in the interval $(0,4-2C^{*})$. Define the two conditional
density functions of $Y$ given $X$ as 
\begin{align}
f_{1}(y|x) & \equiv1,\nonumber \\
f_{2}(y|x) & \equiv1+a\delta\phi_{Y}\left(\frac{y-1/2}{\delta}\right)\phi_{X}\left(\frac{x-x_{0}}{\delta}+1/4\right).\label{eq:f2(y|x)}
\end{align}
We let the marginal distribution $f_{X}(x)$ of $X$ be the uniform
distribution on $[0,1]$. Note that $f_{1}(y|x)$, $f_{2}(y|x)$,
$f_{1}(y,x)=f_{1}(y|x)f_{X}(x)$, and $f_{2}(y,x)=f_{2}(y|x)f_{X}(x)$
are non-negative everywhere, with integrals over their respective
entire spaces all equaling to $1$.

The first task is to verify that the two distributions are indeed
in the class $\mathcal{F}_{\kappa}$. For $C^{*}\in(0,\,2)$, the
first distribution is in $\mathcal{F}$ by Lemma \ref{lm:uniform-distribution}
and the fact that $Y$ is independent of $X$. Given any $x\in[0,1]$,
we can treat the whole term $a\phi_{X}((x-x_{0})/\delta+1/4)$ as
the coefficient $b$ in Lemma \ref{lm:perturbed-distribution}. Then
the results of Lemma \ref{lm:perturbed-distribution} applies since
$|\phi_{X}|\leq1$. In particular, the revenue function at $x$ is
twice-differentiable a.e., the absolute value of the second-order
partial derivative with respect to $y$ is bounded, and is also bounded
from below by $C^{*}$. The optimal price is an interior solution
and is in the interior of a region on which the revenue function is
twice-differentiable. Lastly, the absolute value of the partial
derivative of $f_{2}(y|x)$ with respect to $x$ is bounded. This
ensures that the quantity $|\frac{\partial}{\partial x}F_{Y|X}(y|x)+y\frac{\partial}{\partial x}f_{Y|X}(y|x)|$
is bounded.

Next, we want to derive the Hellinger distance between the two joint
densities 
\begin{align*}
f_{1}(y,x) & =1,\\
f_{2}(y,x) & =1+a\delta\phi_{Y}\left(\frac{y-1/2}{\delta}\right)\phi_{X}\left(\frac{x-x_{0}}{\delta}+1/4\right).
\end{align*}
Let $\Psi(t)\equiv\sqrt{1+t}$. Its second-order derivate is bounded
when $|t|<1/2$; that is, 
\begin{align*}
\sup_{|t|<1/2}|\Psi''(t)|<C.
\end{align*}
We use $H$ to denote the Hellinger distance: 
\begin{align*}
H(f_{1}\|f_{2})^{2}\equiv\int_{0}^{1}\left(\sqrt{f_{1}(y)}-\sqrt{f_{2}(y)}\right)^{2}dy.
\end{align*}
The Hellinger distance can be bounded as 
\begin{align*}
H^{2}(f_{1}\|f_{2})/2 & =1-\int_{0}^{1}\int_{0}^{1}\Psi\left(a\delta\phi_{Y}\left(\frac{y-1/2}{\delta}\right)\phi_{X}\left(\frac{x-x_{0}}{\delta}+1/4\right)\right)dydx\\
 & =\int_{0}^{1}\int_{0}^{1}\Psi(0)-\Psi\left(a\delta\phi_{Y}\left(\frac{y-1/2}{\delta}\right)\phi_{X}\left(\frac{x-x_{0}}{\delta}+1/4\right)\right)dydx\\
 & \leq-a\Psi'(0)\delta\int_{0}^{1}\int_{0}^{1}\phi_{Y}\left(\frac{y-1/2}{\delta}\right)\phi_{X}\left(\frac{x-x_{0}}{\delta}+1/4\right)dydx\\
 & \quad+a^{2}C\delta^{2}\int_{0}^{1}\int_{0}^{1}\phi_{Y}^{2}\left(\frac{y-1/2}{\delta}\right)\phi_{X}^{2}\left(\frac{x-x_{0}}{\delta}+1/4\right)dydx,
\end{align*}
where we have applied the second-order Taylor expansion to obtain
the last inequality. By the change of variables $u=(y-1/2)/\delta$
and $v=(x-x_{0})/\delta+1/4$, for sufficiently small $\delta\in(0,\,1/2]$,
\begin{align}
 & \int_{0}^{1}\int_{0}^{1}\phi_{Y}\left(\frac{y-1/2}{\delta}\right)\phi_{X}\left(\frac{x-x_{0}}{\delta}+1/4\right)dydx=\delta^{2}\int_{-1}^{1}\phi_{Y}\left(u\right)du\int_{0}^{1}\phi_{X}\left(v\right)dv=0,\label{eq:10}
\end{align}
and 
\begin{align*}
 & \int_{0}^{1}\int_{0}^{1}\phi_{Y}^{2}\left(\frac{y-1/2}{\delta}\right)\phi_{X}^{2}\left(\frac{x-x_{0}}{\delta}+1/4\right)dydx=\delta^{2}\int_{-1}^{1}\phi_{Y}^{2}\left(u\right)du\int_{0}^{1}\phi_{X}^{2}\left(v\right)dv\leq C\delta^{2}.
\end{align*}
Therefore, the Hellinger distance is bounded as 
\begin{align*}
H^{2}(f_{1}\|f_{2})\lesssim\delta^{4}.
\end{align*}
Now we take $\delta$ such that $\delta^{4}\asymp1/n$. Note that
(\ref{eq:10}) holds when $\delta>0$ is small enough such that $\delta\in(0,\,1/2]$,
$1/4-x_{0}/\delta\leq0$ and $(1-x_{0})/\delta+1/4\geq1$; that is,
when $x_{0}n^{1/4}\geq c'$ and $(1-x_{0})n^{1/4}\geq c''$
for positive universal constants $c'$ and $c''$ (independent
of $n$ and $x_{0}$). This ensures that $H^{2}(f_{1}\|f_{2})\lesssim1/n$.
Then from Lemma \ref{lm:lecam-functionals}, we know that 
\begin{align*}
\inf_{\check{p}_{D} \in \check{\mathcal{D}}}\sup_{F_{Y,X}\in\mathcal{F}}\mathbb{E}_{F_{Y,X}}|\check{p}_{D}(x_{0};\textit{data})-p_{D}^{*}(x_{0})|\gtrsim n^{-1/4},x_{0}\in(0,1).
\end{align*}
For bounding the revenue, recall that the revenue achieved at the
price $p$ and covariate value $x_{0}$ is $r(p,x_{0})=\max_{p}p(1-F_{Y|X}(p|x_{0}))$.
By Lemma \ref{lm:foundamental-thm-calculus-Lipschitz}, we have 
\begin{align*}
r(p_{D}^{*}(x_{0}))-r(\check{p}_{D}(x_{0};\textit{data}))\geq\frac{C^{*}}{2}|p_{D}^{*}(x_{0})-\check{p}_{D}(x_{0};\textit{data})|^{2}.
\end{align*}
As a result, we have 
\begin{align*}
 & \inf_{\check{p}_{D} \in \check{\mathcal{D}}}\sup_{F_{Y,X}\in\mathcal{F}}\mathbb{E}[r(p_{D}^{*},x_{0})-r(\check{p}_{D}(\textit{data}),x_{0})]\\
\geq & \inf_{\check{p}_{D} \in \check{\mathcal{D}}}\sup_{F_{Y,X}\in\mathcal{F}}\mathbb{E}\left[\frac{C^{*}}{2}|p_{D}^{*}(x_{0})-\check{p}_{D}(x_{0};\textit{data})|^{2}\right]\\
\geq & \inf_{\check{p}_{D} \in \check{\mathcal{D}}}\sup_{F_{Y,X}\in\mathcal{F}}\frac{C^{*}}{2}\left\{ \mathbb{E}\left[|p_{D}^{*}(x_{0})-\check{p}_{D}(x_{0};\textit{data})|\right]\right\} ^{2}\gtrsim n^{-1/2}.
\end{align*}
This proves Theorem \ref{thm:lower-bound-pd-1}.
\end{proof}
\label{sec:proof-lower-expected} 
\begin{proof}[Proof of Theorem \ref{thm:lower-bound-pd}]
To prove Theorem \ref{thm:lower-bound-pd}, we follow the explanation
in Section \ref{sec:proof-sketch2} and use the Fano's inequality to bound the probability
of mistakes in the multiple classification problem. Before solving
the revenue problem, we first study the lower bound for the $L_{2}-$distance
of pricing functions. For two pricing functions $p_{1}$ and $p_{2}$,
we define the (unweighted) $L_{2}-$distance as 
\begin{align*}
\lVert p_{1}-p_{2}\rVert_{2}\equiv\left(\int_{0}^{1}|p_{1}(x)-p_{2}(x)|^{2}dx\right)^{1/2}.
\end{align*}
In part (i), we defined the perturbation on the $X$ dimension at
a fixed point $x_{0}$. Now we want to define a large set of perturbed
distributions. Each of these distributions is perturbed in a small
interval on the $X$ dimension. Let $m\geq8$ be a large number (depending
on $n$) that we specify later. Let $\alpha\in\{0,1\}^{m}$ be a vector
of length $m$; that is, 
\begin{align*}
\alpha\equiv(\alpha_{1},\ldots,\alpha_{m}),\text{ where }\alpha_{j}\in\{0,1\},j=1,\ldots,m.
\end{align*}
We construct a set of conditional density functions indexed by $\alpha$:
\begin{align*}
f_{Y|X}^{\alpha}(y|x)\equiv1+\frac{a}{m}\sum_{j=1}^{m}\alpha_{j}\phi_{Y}\left(m(y-1/2)\right)\phi_{X}\left(mx-(j-1)\right).
\end{align*}
The marginal distribution of $X$ is taken to be the uniform distribution
on $[0,1]$, that is, $f_{X}\equiv\mathbf{1}_{[0,1]}$. We denote
the joint distribution by $f_{Y,X}^{\alpha}\equiv f_{Y|X}^{\alpha}f_{X}$.

We briefly describe this construction of the conditional density.
The unit interval $[0,1]$ is divided equally into $m$ subintervals:
\begin{align*}
I_{j}\equiv\left[(j-1)/m,j/m\right],j=1,\ldots,m.
\end{align*}
For $x\in I_{j}$, if $\alpha_{j}=0$, then the conditional density
is $1$. If $\alpha_{j}=1$, then the conditional density 
\begin{align*}
f_{Y|X}^{\alpha}(y|x)\equiv1+\frac{a}{m}\phi_{Y}\left(m(y-1/2)\right)\phi_{X}\left(mx-(j-1)\right),x\in I_{j}.
\end{align*}
By treating $1/m$ as the scalar $\delta$ in part (i), we can see
that, for $m$ large enough, each $f_{Y,X}^{\alpha}$ belongs to the
set $\mathcal{F}_{\kappa}$.

From the set $\{f_{Y,X}^{\alpha}:\alpha\in\{0,1\}^{m}\}$, we want
to pick out a large enough subset of distributions whose optimal price
functions are well-separated. For this purpose, we use the Gilbert-Varshamov
bound \citep[Lemma 2.9, Chapter 2][]{tsybakov2009introduction}. The
Gilbert-Varshamov bound states that for $m\geq8$, there exists
a subset $\mathcal{A}\subset\{0,1\}^{m}$ with cardinality $M\equiv|\mathcal{A}|\geq2^{m/8}$,
and the pairwise rescaled Hamming distance between elements in this
set is greater than $1/8$. That is, 
\begin{align*}
\frac{1}{m}\sum_{j=1}^{m}\mathbf{1}\{\alpha_{j}\ne\alpha_{j}'\}\geq\frac{1}{8},\text{ for any }\alpha,\alpha'\in\mathcal{A}.
\end{align*}
Applying the Gilbert-Varshamov bound, we can show that for $\alpha,\alpha'\in\mathcal{A}$,
the optimal pricing functions of $f_{Y,X}^{\alpha}$ and $f_{Y,X}^{\alpha'}$
are well-separated. Let $p_{\alpha}$ be the pricing function associated
with $f_{Y,X}^{\alpha}$; that is, 
\begin{align*}
p_{\alpha}(x)\equiv\argmax_{p\in[0,1]}p(1-F_{Y|X}^{\alpha}(p|x)),
\end{align*}
where $F_{Y|X}^{\alpha}(y|x)$ is the corresponding conditional cumulative
distribution function. Note that $\alpha,\alpha'\in\mathcal{A}$ differ
in at least $m/8$ positions. This means that $f_{Y|X}^{\alpha}$
and $f_{Y|X}^{\alpha'}$ differ in $m/8$ intervals. Suppose that
$I_{j}$ is such an interval, where $\alpha_{j}=0$ and $\alpha{}_{j}'=1$.
We restrict our attention to a subset of this interval: 
\begin{align*}
\tilde{I}_{j}\equiv\left[\frac{1}{6m}+\frac{j-1}{m},\frac{1}{3m}+\frac{j-1}{m}\right]\subset I_{j}.
\end{align*}
When $x\in\tilde{I}_{j}$, we have 
\begin{align}
mx-(j-1)\in\left[1/6,1/3\right]\implies\phi_{X}\left(mx-(j-1)\right)\in\left[\phi_{X}(0),\,\phi_{X}(1/2)\right].\label{eq:11}
\end{align}
By Lemma \ref{lm:perturbed-distribution} (where $b=a\phi_{X}\left(mx-(j-1)\right)>0$,
$\delta=1/m$), the choice $a\in(0,4-2\kappa)$, and the fact (\ref{eq:11}),
if we fix $x\in\tilde{I}_{j}$, then $p_{\alpha}(x)=1/2$ while 
\begin{align*}
p_{\alpha'}(x)\leq1/2-\frac{c}{m}\phi_{X}\left(mx-(j-1)\right)\leq1/2-\frac{c\phi_{X}(1/6)}{m},x\in\tilde{I}_{j},
\end{align*}
where $c>0$ is a universal constant that does not depend on $n$.\footnote{For example, $c$ can be equal to $a/8$ according to Lemma \ref{lm:perturbed-distribution}.}
This implies that 
\begin{align*}
|p_{\alpha}(x)-p_{\alpha'}(x)|\gtrsim\frac{1}{m},x\in\tilde{I}_{j}.
\end{align*}
Therefore, on the interval $I_{j}$, the separation between $p_{\alpha}$
and $p_{\alpha'}$ is lower bounded as 
\begin{align*}
\int_{I_{j}}|p_{\alpha}(x)-p_{\alpha'}(x)|^{2}dx\gtrsim\int_{\tilde{I}_{j}}1/m^{2}dx=\frac{1}{6m}\times\frac{1}{m^{2}}\gtrsim1/m^{3}.
\end{align*}
By the Gilbert-Varshamov bound, there are at least $m/8$ such intervals.
Therefore, we can lower bound the total separation by 
\begin{align*}
\lVert p_{1}-p_{2}\rVert_{2}\gtrsim\left(m/8\times1/m^{3}\right)^{1/2}\gtrsim1/m.
\end{align*}

Next, we want to compute the KL divergence between $f_{Y,X}^{\alpha}$
and $f_{Y,X}^{\alpha'}$. Note that the term $\phi_{X}\left(mx-(j-1)\right)$
is non-zero only when $x\in I_{j}$. The KL divergence can therefore
be treated as a sum of $m$ integrals: 
\begin{align*}
\text{KL}(f_{Y,X}^{\alpha}\|f_{Y,X}^{\alpha'}) & =\int_{0}^{1}\int_{0}^{1}f_{Y,X}^{\alpha}(y,x)\log\frac{f_{Y,X}^{\alpha}}{f_{Y,X}^{\alpha'}}dydx=\sum_{j=1}^{m}E_{j},
\end{align*}
where 
\begin{align*}
E_{j} & \equiv\int_{I_{j}}\int_{0}^{1}\left(1+\frac{a}{m}\alpha_{j}\phi_{Y}\left(m(y-1/2)\right)\phi_{X}\left(mx-(j-1)\right)\right)\\
 & \quad\times\log\frac{1+\frac{a}{m}\alpha_{j}\phi_{Y}\left(m(y-1/2)\right)\phi_{X}\left(mx-(j-1)\right)}{1+\frac{a}{m}\alpha{}_{j}'\phi_{Y}\left(m(y-1/2)\right)\phi_{X}\left(mx-(j-1)\right)}dydx.
\end{align*}
Notice that when $\alpha_{j}=\alpha_{j'}$, $E_{j}=0$. Therefore,
we only need to consider the $j$'s where $\alpha_{j}\ne\alpha{}_{j}'$.
Denote $\Psi_{1}(t)=-\log(1+t)$ and $\Psi_{2}(t)=(1+t)\log(1+t)$.
Then we can write $E_{j}$ as 
\begin{align*}
E_{j}=\begin{cases}
\int_{I_{j}}\int_{0}^{1}\Psi_{1}\left(\frac{a}{m}\phi_{Y}\left(m(y-1/2)\right)\phi_{X}\left(mx-(j-1)\right)\right)dydx, & \text{ if }\alpha_{j}=0,\alpha{}_{j}'=1,\\
\int_{I_{j}}\int_{0}^{1}\Psi_{2}\left(\frac{a}{m}\phi_{Y}\left(m(y-1/2)\right)\phi_{X}\left(mx-(j-1)\right)\right)dydx, & \text{ if }\alpha_{j}=1,\alpha{}_{j}'=0.
\end{cases}
\end{align*}
By the second-order Taylor expansion at zero, we have 
\begin{align*}
\Psi_{1}(t)=-t+\frac{1}{2(1+t')^{2}}t^{2},
\end{align*}
for some $t'$ between $0$ and $t$. When $|t|\leq1/4$,\footnote{Later we show that $m$ is chosen to be $c_{0}n^{1/4}$ where $c_{0}>0$
is a universal constant. As a result, $|t|\leq1/4$ is guaranteed
as long as $c_{0}$ is sufficiently large.} we have 
\begin{align*}
\Psi_{1}(t)\leq-t+Ct^{2},
\end{align*}
for some universal constant $C>0$. Similarly, we can show that 
\begin{align*}
\Psi_{2}(t)\leq t+Ct^{2}.
\end{align*}
Applying these inequalities to $E_{j}$, we have 
\begin{align*}
E_{j}\leq & \pm\int_{I_{j}}\int_{0}^{1}\frac{a}{m}\phi_{Y}\left(m(y-1/2)\right)\phi_{X}\left(mx-(j-1)\right)dydx\\
 & +C\int_{I_{j}}\int_{0}^{1}\frac{a^{2}}{m^{2}}\phi_{Y}^{2}\left(m(y-1/2)\right)\phi_{X}^{2}\left(mx-(j-1)\right)dydx.
\end{align*}
Similar to the derivation in Part (i), we know that the first term
on the RHS is zero. For the second term, we can apply change of variables
$u=m(y-1/2)$ and $v=mx-(j-1)$ and obtain that 
\begin{align*}
 & \int_{I_{j}}\int_{0}^{1}\phi_{Y}^{2}\left(m(y-1/2)\right)\phi_{X}^{2}\left(mx-(j-1)\right)dydx\\
= & \frac{1}{m^{2}}\int_{0}^{1}\phi_{X}^{2}\left(v\right)dv\int_{-1}^{3}\phi_{Y}^{2}\left(u\right)du\leq\frac{C'}{m^{2}}
\end{align*}
for some universal constant $C'>0$. Putting the results results
together, we know that $E_{j}\leq\frac{C}{m^{4}}$ for all $j$. Since
there are $m$ intervals, we can bound the KL divergence by 
\begin{align*}
\text{KL}(f_{Y,X}^{\alpha}\|f_{Y,X}^{\alpha'})=\sum_{j=1}^{m}E_{j}\lesssim\frac{1}{m^{3}}.
\end{align*}
This is the KL distance for a single observation. For the entire data
set with $n$ i.i.d. observations, the KL divergence is upper bounded
by $Cn/m^{3}$.

Lastly, we can summarize our results into the Fano inequality presented
in Lemma \ref{lm:fano}. We have 
\begin{align*}
\inf_{\check{p}_{D} \in \check{\mathcal{D}}}\sup_{F_{Y,X}\in\mathcal{F}}\mathbb{E}\lVert\check{p}_{D}(\textit{data})-p_{D}^{*}\rVert_{2}^{2} & \geq\frac{C_{1}}{m^{2}}\left(1-\frac{C_{2}n/m^{3}+\log2}{\log2^{m/8}}\right)\\
 & \geq\frac{C_{1}}{m^{2}}\left(1-\frac{C_{2}n/m^{3}+\log2}{C_{3}m}\right).
\end{align*}
By choosing $m=c_{0}n^{1/4}$ for a sufficiently large universal constant
$c_{0}>0$, we can make the factor $\left(1-\frac{C_{2}n/m^{3}+\log2}{C_{3}m}\right)$
stay above, say, $1/2$. Then we have 
\begin{align*}
\inf_{\check{p}_{D} \in \check{\mathcal{D}}}\sup_{F_{Y,X}\in\mathcal{F}}\mathbb{E}\lVert\check{p}_{D}(\textit{data})-p_{D}^{*}\rVert_{2}^{2}\gtrsim\frac{1}{m^{2}}\asymp n^{-1/2}.
\end{align*}

So far we have derived the lower bound for the $L_{2}-$distance of
pricing. Moving onto the revenue problem, recall that the revenue
achieved at the price $p$ and covariate value $x$ is $r(p,x)=\max_{p}p(1-F_{Y|X}(p|x))$.
By Lemma \ref{lm:foundamental-thm-calculus-Lipschitz}, we have 
\begin{align*}
r(p_{D}^{*},x)-r(\check{p}_{D}(\textit{data}),x)\geq\frac{C^{*}}{2}|p_{D}^{*}(x)-\check{p}_{D}(x;\textit{data})|^{2}.
\end{align*}
Since $f_{X}$ is bounded away from zero, we have 
\begin{align*}
 & \inf_{\check{p}_{D} \in \check{\mathcal{D}}}\sup_{F_{Y,X}\in\mathcal{F}}\mathbb{E}[R(p^{*}_D)-R(\check{p}_D)]\\
= & \inf_{\check{p}_{D} \in \check{\mathcal{D}}}\sup_{F_{Y,X}\in\mathcal{F}}\mathbb{E}\left[\int_{0}^{1}(r(p^{*}_D,x)-r(\check{p}_D,x))f_{X}(x)dx\right]\\
\geq & \inf_{\check{p}_{D} \in \check{\mathcal{D}}}\sup_{F_{Y,X}\in\mathcal{F}}\mathbb{E}\left[\frac{C^{*}}{2} \left(\inf_{x\in[0,1]}f_{X}(x)\right)\int_{0}^{1}|p_{D}^{*}(x)-\check{p}_{D}(x;\textit{data})|^{2}dx\right]\gtrsim n^{-1/2}.
\end{align*}
\end{proof}
\label{sec:proof-lower-uniform} 
\begin{proof}[Proof of Theorem \ref{thm:lower-bound-uniform}]
We use Lemma \ref{lm:lecam-functionals} to prove the lower bound
for Theorem \ref{thm:lower-bound-uniform}. Define 
\begin{align*}
\omega_{U}(\epsilon) & \equiv\sup_{F_{1},F_{2}\in\mathcal{F}^{U}}\left\{ |p_{U}^{*}(F_{1})-p_{U}^{*}(F_{2})|:H(F_{1}\|F_{2})\leq\epsilon\right\} .
\end{align*}
Then by Lemma \ref{lm:lecam-functionals}, we have 
\begin{align*}
\inf_{\check{p}_{U}\in\check{\mathcal{U}}}\sup_{F_{Y}\in\mathcal{F}^{U}}\mathbb{E}_{F_{Y}}|\check{p}_{U}(\textit{data}_{Y})-p_{U}^{*}|\geq\frac{1}{8}\omega_{U}\left(1/(2\sqrt{n})\right).
\end{align*}
Therefore, we only need to find a lower bound for $\omega_{U}$. The
proof proceeds in three steps. In the first step, we construct two
distributions and compute the separation between their optimal prices.
The second step bounds the Hellinger distance between these two distributions.
The third step summarizes.

\bigskip{}
\textbf{Step 1.} We construct two distribution functions. The first
distribution is the uniform distribution on the unit interval $[0,1]$.
We denote this density function as 
\begin{align*}
f_{1}(y)=1_{[0,1]}(y).
\end{align*}
The distribution function is $F_{1}(y)=y$ on the support $[0,1]$.
The revenue function under this distribution is $R_{1}(p)=p(1-p)$.
The optimal price is 
\begin{align*}
p_{1}=\argmax_{p\in[0,1]}R_{1}(p)=\argmax_{p\in[0,1]}p-p^{2}=1/2.
\end{align*}

\noindent The second distribution function is a small twist of the
uniform distribution. We use the same perturbation function $\phi_{Y}$
defined in (\ref{eq:def-phiY}).

\iffalse We first define the following ``hat'' function $\phi$:
\begin{align}
\begin{split}\phi(t)\equiv\begin{cases}
x+1, & x\in[-1,0],\\
-x+1, & x\in[0,2],\\
x-3, & x\in[2,3],\\
0, & \text{ otherwise.}
\end{cases}\end{split}
\end{align}
\begin{figure}[!htbp]
\centering \begin{tikzpicture}
				
				\begin{axis}[
					legend style={nodes={scale=0.8, transform shape}},
					axis y line=center, 
					axis x line=bottom,
					%y axis line style={opacity=0},
					ytick={-1,0,1},
					xtick={-1,0,1,2,3},
					xmin=-2,     xmax=4,
					ymin=-1.2,     ymax=1.2,
					%xlabel = \(u\),
					%ylabel = {\(MTE\)},
					]
					%Below the red parabola is defined
					\addplot [
					thick,
					domain=-2:-1, 
					samples=100, 
					color=black,
					]
					{0};
					\addplot [
					thick,
					domain=-1:0, 
					samples=100, 
					color=black,
					]
					{x+1};
					\addplot [
					thick,
					domain=0:2, 
					samples=100, 
					color=black,
					]
					{-x+1};
					\addplot [
					thick,
					domain=2:3, 
					samples=100, 
					color=black,
					]
					{x-3};
					\addplot [
					thick,
					domain=3:4, 
					samples=100, 
					color=black,
					]
					{0};
					\addlegendentry{$\phi$};
					
				\end{axis}

			\end{tikzpicture} 
\end{figure}

\fi

We apply a small perturbation to the uniform density. Let $\delta>0$
be a small number (that depends on $n$) specified later. Let $a\in(0,4-2C^{*})$.
The formula of the density $f_{2}$ is given by 
\begin{align*}
f_{2}(y)\equiv1+a\delta\phi_{Y}\left(\frac{y-1/2}{\delta}\right)=\begin{cases}
1, & \text{ if }y\in[0,1/2-\delta),\\
ay+1-\frac{a}{2}+a\delta, & \text{ if }y\in[1/2-\delta,1/2),\\
-ay+1+\frac{a}{2}+a\delta, & \text{ if }y\in[1/2,1/2+2\delta),\\
ay+1-\frac{a}{2}-3a\delta, & \text{ if }y\in[1/2+2\delta,1/2+3\delta),\\
1, & \text{ if }y\in[1/2+3\delta,1].
\end{cases}%\label{eq:uniform-perturb}{align*}
\end{align*}
We compare the two densities $f_{1}$ and $f_{2}$ in the following
graph. %We can see that to construct $f_2$ from $f_1$, we move a small amount of mass $a\delta^2$ from the right of $1/2$ to the left of $1/2$.
 
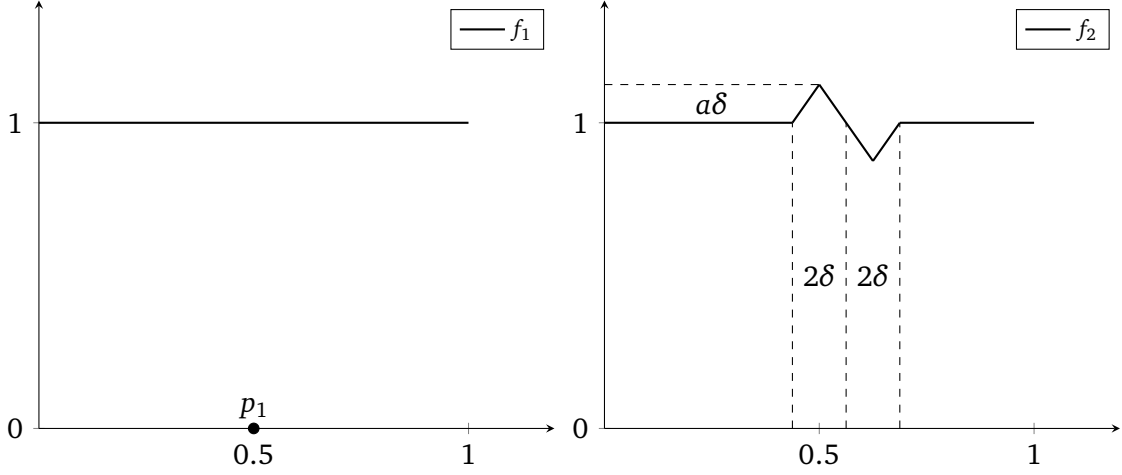
\begin{figure}[!htbp]
\caption{Density functions $f_1$ and $f_2$.}
\centering \begin{tikzpicture}
				
				\begin{axis}[
					legend style={nodes={scale=0.8, transform shape}},
					axis y line=left, 
					axis x line=bottom,
					%y axis line style={opacity=0},
					ytick={0,1},
					xtick={0.5,1},
					xmin=0,     xmax=1.2,
					ymin=0,     ymax=1.4,
					%xlabel = \(u\),
					%ylabel = {\(MTE\)},
					]
					%Below the red parabola is defined
					\addplot [
					thick,
					domain=0:1, 
					samples=100, 
					color=black,
					]
					{1};
					
					\addlegendentry{$f_1$};
					\addplot[mark=*] coordinates{(0.5,0)} node[above] {$p_1$};

					%after end axis/.code={
					%	\path (axis cs:0,0) node [anchor=north west,yshift=-0.075cm,xshift=-0.075cm] {0};
					%}
					
				\end{axis}

			\end{tikzpicture} \begin{tikzpicture}
				
				\begin{axis}[
					legend style={nodes={scale=0.8, transform shape}},
					axis y line=left, 
					axis x line=bottom,
					%y axis line style={opacity=0},
					ytick={0,1},
					xtick={0.5,1},
					xmin=0,     xmax=1.2,
					ymin=0,     ymax=1.4,
					%xlabel = \(u\),
					%ylabel = {\(MTE\)},
					]
					%Below the red parabola is defined
					\addplot [
					thick,
					domain=0:7/16, 
					samples=100, 
					color=black,
					]
					{1};
					\addplot [
					thick,
					domain=7/16:1/2, 
					samples=100, 
					color=black,
					]
					{2*x + 2/16};
					\addplot [
					thick,
					domain=1/2:5/8, 
					samples=100, 
					color=black,
					]
					{-2*x + 2 + 2/16};
					\addplot [
					thick,
					domain=5/8:11/16, 
					samples=100, 
					color=black,
					]
					{2*x -6/16};
					\addplot [
					thick,
					domain=11/16:1, 
					samples=100, 
					color=black,
					]
					{1};
					\addlegendentry{$f_2$};
					
					\addplot[dashed] coordinates{(0.5-1/16,0) (0.5-1/16,1)};
					\addplot[dashed] coordinates{(0.5+1/16,0) (0.5+1/16,1)};
					\addplot[] coordinates{(0.5,0.5)} node[] {$2\delta$};
					\addplot[dashed] coordinates{(0.5+3/16,0) (0.5+3/16,1)};
					\addplot[] coordinates{(0.5+2/16,0.5)} node[] {$2\delta$};
					
					\addplot[dashed] coordinates{(0,1+2/16) (1/2 ,1+2/16)};
					\addplot[] coordinates{(0.25,1+1/16)} node[] {$a\delta$};

					%after end axis/.code={
					%	\path (axis cs:0,0) node [anchor=north west,yshift=-0.075cm,xshift=-0.075cm] {0};
					%}
					
				\end{axis}

			\end{tikzpicture} 
\end{figure}

Denote the optimal price under $f_{2}$ by $p_{2}$. By Lemma \ref{lm:perturbed-distribution}(ii),
we have 
\begin{align*}
|p_{2}-p_{1}|\geq a\delta/8
\end{align*}
when $\delta$ is sufficiently small.

\bigskip{}
\textbf{Step 2.} We want to bound the Hellinger distance $H(F_{1}\|F_{2})$.
Define the function $\Psi(t)=\sqrt{1+t}$. Its second-order derivative
is bounded when $|t|<1/2$; that is, 
\begin{align*}
\sup_{|t|<1/2}|\Psi''(t)|\leq\frac{\sqrt{2}}{2}.
\end{align*}
Since $f_{1}(y)=1$, we have 
\begin{align*}
H(F_{1}\|F_{2})^{2}/2 & =1-\int_{0}^{1}\Psi\left(a\delta\phi_{Y}\left(\frac{y-1/2}{\delta}\right)\right)dy\\
 & =\int_{0}^{1}\Psi(0)-\Psi\left(a\delta\phi_{Y}\left(\frac{y-1/2}{\delta}\right)\right)dy.
\end{align*}
By the second-order Taylor expansion, we have 
\begin{align*}
 & \Psi(0)-\Psi\left(a\delta\phi_{Y}\left(\frac{y-1/2}{\delta}\right)\right)\\
\leq & -\Psi'(0)a\delta\phi_{Y}\left(\frac{y-1/2}{\delta}\right)+\frac{\sqrt{2}}{4}a^{2}\delta^{2}\phi_{Y}^{2}\left(\frac{y-1/2}{\delta}\right).
\end{align*}
By the construction of $\phi_{Y}$, we have 
\begin{align*}
\int_{0}^{1}\phi_{Y}\left(\frac{y-1/2}{\delta}\right)dy=0.
\end{align*}
By the change of variables $u=(y-1/2)/\delta$, we have 
\begin{align*}
\int_{0}^{1}\phi_{Y}^{2}\left(\frac{y-1/2}{\delta}\right)dy=\delta\int_{\mathbb{R}}\phi_{Y}^{2}\left(u\right)du\leq4\delta\int_{-1}^{0}(x+1)^{2}dx=\frac{4}{3}\delta.
\end{align*}
Combining these results together, we obtain a bound on the Hellinger
distance 
\begin{align*}
H(F_{1}\|F_{2})^{2}\leq\frac{2\sqrt{2}}{3}a^{2}\delta^{3}.
\end{align*}

\noindent \bigskip{}
\textbf{Step 3.} By setting $\delta=c_{0}'(3/8\sqrt{2})^{1/3}a^{-2/3}n^{-1/3}$
for $c_{0}'\in(0,1)$, we can ensure that $H(F_{1}\|F_{2})\leq1/(2\sqrt{n})$.
Previously, we assumed that $a\delta\leq1/2$ for the second-order
Taylor expansion. This is true if $c_{0}'$ is chosen to be sufficiently
small. In this case, the separation between $p_{1}$ and $p_{2}$
is lower bounded as below: 
\begin{align*}
|p_{1}-p_{2}|\geq a\delta/8=\frac{c_{0}'}{16}\left(\frac{3}{\sqrt{2}}\right)^{1/3}\left(\frac{a}{n}\right)^{1/3}.
\end{align*}
By Lemma \ref{lm:lecam-functionals}, we have 
\begin{align*}
\inf_{\check{p}_{U}\in\check{\mathcal{U}}}\sup_{F_{Y}\in\mathcal{F}^{U}}\mathbb{E}|\check{p}_{U}(\textit{data}_{Y})-p_{U}^{*}|\geq\frac{c_{0}'}{16}\left(\frac{3}{\sqrt{2}}\right)^{1/3}\left(\frac{a}{n}\right)^{1/3}.
\end{align*}

\noindent Lastly, we want to lower bound the revenue. By Lemma \ref{lm:foundamental-thm-calculus-Lipschitz},
we have
\begin{align*}
\mathscr{R}_{n}^{U}(\mathcal{F}^{U}) & =\inf_{\check{p}_{U}\in\check{\mathcal{U}}}\sup_{F_{Y}\in\mathcal{F}^{U}}\mathbb{E}|R(\check{p}_{U}(\textit{data}_{Y}),F_{Y})-R(p_{U}^{*},F_{Y})|\\
 & \geq\inf_{\check{p}_{U}\in\check{\mathcal{U}}}\sup_{F_{Y}\in\mathcal{F}^{U}}\mathbb{E}\left[\frac{C^{*}}{2}|\check{p}_{U}(\textit{data}_{Y})-p_{U}^{*}|^{2}\right]\\
 & \geq\inf_{\check{p}_{U}\in\check{\mathcal{U}}}\sup_{F_{Y}\in\mathcal{F}^{U}}\frac{C^{*}}{2}\left\{ \mathbb{E}\left[|\check{p}_{U}(\textit{data}_{Y})-p_{U}^{*}|\right]\right\} ^{2}\\
 & \gtrsim\left(\frac{1}{n}\right)^{2/3}.
\end{align*}
\end{proof}

\section{Auxiliary Lemmas}

\begin{lemma} \label{lm:foundamental-thm-calculus-Lipschitz} Let
$f$ be a function on $[0,1]$. Assume that $f$ is differentiable
and its derivative $f'$ is Lipschitz continuous. Let $z^{*}$ be
a point in $[0,1]$ such that $f'(z^{*})=0$.

\begin{enumerate}[label = (\roman*)]
\item The derivative $f'$ is a.e. differentiable on $[0,1]$. 
\item Assume that there exists $\kappa_{1}>0$ such that $f''(z)\leq-\kappa_{1}$
for almost all $z\in[0,1]$. Then, for any $z\in[0,1]$, we have 
\begin{align*}
|f(z)-f(z^{*})|\geq\frac{\kappa_{1}}{2}(z-z^{*})^{2}.
\end{align*}
\item Assume that there exists $\kappa_{2}>0$ such that $|f''(z)|\leq\kappa_{2}$
for almost all $z\in[0,1]$. Then, for any $z\in[0,1]$, we have 
\begin{align*}
|f(z)-f(z^{*})|\leq\frac{\kappa_{2}}{2}(z-z^{*})^{2}.
\end{align*}
\end{enumerate}
\end{lemma}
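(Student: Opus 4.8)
The plan is to reduce all three parts to the fundamental theorem of calculus, exploiting that $f'$ is Lipschitz and therefore absolutely continuous. For part (i), I would invoke the standard fact that a Lipschitz function on an interval is absolutely continuous, and an absolutely continuous function is differentiable almost everywhere; applying this to $f'$ yields that $f''$ exists a.e.\ on $[0,1]$.

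For parts (ii) and (iii), the first step is to note that, since $f'$ is Lipschitz, it is bounded on $[0,1]$, so $f$ is itself Lipschitz, hence absolutely continuous; consequently both $f$ and $f'$ satisfy the fundamental theorem of calculus. Using $f'(z^*)=0$, so that $f'(s) = \int_{z^*}^{s} f''(t)\,dt$, this gives for every $z\in[0,1]$ the representation
\begin{equation*}
f(z) - f(z^*) = \int_{z^*}^{z} f'(s)\,ds = \int_{z^*}^{z}\!\!\int_{z^*}^{s} f''(t)\,dt\,ds .
\end{equation*}
This double-integral identity is the crux of the argument.

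Next, I would fix $z$ and treat the case $z \ge z^*$ first. For part (ii), the hypothesis $f''(t)\le-\kappa_1$ a.e.\ gives $\int_{z^*}^{s} f''(t)\,dt \le -\kappa_1(s-z^*)$ for $s\in[z^*,z]$, and integrating over $s$ produces $f(z)-f(z^*)\le -\frac{\kappa_1}{2}(z-z^*)^2\le 0$; hence $|f(z)-f(z^*)| = f(z^*)-f(z)\ge\frac{\kappa_1}{2}(z-z^*)^2$. For part (iii), bounding the inner integrand by $|f''(t)|\le\kappa_2$ gives $|f(z)-f(z^*)|\le\int_{z^*}^{z}\kappa_2(s-z^*)\,ds=\frac{\kappa_2}{2}(z-z^*)^2$. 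The case $z<z^*$ follows by the symmetric argument — e.g.\ by applying the above to $\tilde f(u)\equiv f(z^*-u)$, whose derivative vanishes at $u=0$ and whose second derivative $f''(z^*-u)$ inherits the same bounds — yielding the identical estimates with $(z^*-z)^2=(z-z^*)^2$.

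I do not expect a genuine obstacle here: this is a deterministic calculus lemma. The only point requiring some care is the measure-theoretic bookkeeping, namely that the a.e.\ bounds on $f''$ can legitimately be integrated even though $f''$ is only defined almost everywhere — and this is exactly what absolute continuity of $f'$ (from its Lipschitz property) provides, via $f'(s)-f'(z^*)=\int_{z^*}^s f''(t)\,dt$.
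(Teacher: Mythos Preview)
Your proposal is correct and matches the paper's proof essentially step for step: both invoke absolute continuity of the Lipschitz $f'$ for part (i), and for parts (ii)--(iii) both use the double-integral identity $f(z)-f(z^*)=\int_{z^*}^{z}\int_{z^*}^{s} f''(t)\,dt\,ds$ obtained from $f'(z^*)=0$ and then integrate the a.e.\ bound on $f''$. If anything, you are slightly more careful in explicitly treating the case $z<z^*$ and flagging the measure-theoretic justification.
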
 
\begin{proof}[Proof of Lemma \ref{lm:foundamental-thm-calculus-Lipschitz}]
For part (i), notice that a Lipschitz continuous function is absolutely
continuous. By Theorem 3.35 in Chapter 3 of \citet{folland1999real},
we know that $f'$ is differentiable a.e. with 
\begin{align*}
f'(z_{1})-f'(z_{2})=\int_{z_{2}}^{z_{1}}f''(z)dz.
\end{align*}
For part (ii), we can apply the fundamental theorem of calculus twice
and obtain that 
\begin{align*}
f(z)-f(z^{*}) & =\int_{z^{*}}^{z}f'(\tilde{z})d\tilde{z}\\
 & =\int_{z^{*}}^{z}(f'(z_{1})-f'(z^{*}))dz_{1}\\
 & =\int_{z^{*}}^{z}\int_{z^{*}}^{z_{1}}f''(z_{2})dz_{2}dz_{1}\\
 & \leq-\kappa_{1}\int_{z^{*}}^{z}\int_{z^{*}}^{z_{1}}dz_{2}dz_{1},
\end{align*}
where in the second line we have used the assumption that $f'(z^{*})=0$,
and in the last line we have used the assumption that $f''(z)\leq-\kappa_{1}$
for almost all $z\in[0,1]$. The double integral in the last line
is equal to 
\begin{align*}
\int_{z^{*}}^{z}\int_{z^{*}}^{z_{1}}dz_{2}dz_{1}=\int_{z^{*}}^{z}(z_{1}-z^{*})dz_{1}=\frac{(z-z^{*})^{2}}{2}.
\end{align*}
Therefore, we have 
\begin{align*}
|f(z)-f(z^{*})|\geq\frac{\kappa_{1}}{2}(z-z^{*})^{2}.
\end{align*}
Part (iii) can be proved analogously. 
\end{proof}
\begin{lemma} \label{lm:uniform-distribution} For the uniform distribution
on $[0,1]$, the revenue function $R(y)=y(1-y)$. The revenue function
is twice-differentiable with second-order derivative $R''(y)=-2,y\in[0,1]$.
The optimal price is $1/2$. \end{lemma}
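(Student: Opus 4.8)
The plan is to verify each of the three assertions by a direct computation from the definition of the revenue function given in \eqref{eq:revenue opt}. First I would recall that the uniform distribution on $[0,1]$ has cumulative distribution function $F_{Y}(y)=y$ for $y\in[0,1]$ (with density identically $1$ on that interval). Since a scalar price $y$ yields revenue $r(y)=y(1-F_{Y}(y))$, substitution gives $R(y)=y(1-y)=y-y^{2}$ on $[0,1]$, which is the first claim.

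Next I would differentiate this quadratic polynomial. As a polynomial, $R$ is infinitely differentiable on $[0,1]$, and $R'(y)=1-2y$, $R''(y)=-2$ for every $y\in[0,1]$; this establishes the second claim, including that the second derivative is constant and equal to $-2$ (the benchmark value referenced in the Remark following \cref{thm:lower-bound-pd}).

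Finally, to pin down the optimal price I would solve the first-order condition $R'(y)=1-2y=0$, obtaining $y=1/2\in(0,1)$. Because $R''\equiv-2<0$, the function $R$ is strictly concave on $[0,1]$, so this stationary point is the unique global maximizer, whence $p_{U}^{*}=1/2$.

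There is essentially no obstacle here: every step is an elementary calculation. The only minor point requiring care is to work on the support $[0,1]$, where $F_{Y}(y)=y$ holds, and to note that the maximizer lies in the interior of this interval so that no boundary analysis is needed.
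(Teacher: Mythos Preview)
Your proposal is correct and matches the paper's approach: the paper simply writes ``The proof is straightforward,'' and your direct computation of $R(y)=y(1-y)$, $R''(y)=-2$, and the unique interior maximizer $y=1/2$ is exactly the verification that makes this straightforward.
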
 
\begin{proof}[Proof of Lemma \ref{lm:uniform-distribution}]
The proof is straightforward. 
\end{proof}
\begin{lemma} \label{lm:perturbed-distribution} Recall the perturbation
function $\phi_{Y}$ defined in (\ref{eq:def-phiY}). Consider the
following density function 
\begin{align*}
f(y)\equiv1+b\delta\phi_{Y}\left(\frac{y-1/2}{\delta}\right)=\begin{cases}
1, & \text{ if }y\in[0,1/2-\delta),\\
by+1-\frac{b}{2}+b\delta, & \text{ if }y\in[1/2-\delta,1/2),\\
-by+1+\frac{b}{2}+b\delta, & \text{ if }y\in[1/2,1/2+2\delta),\\
by+1-\frac{b}{2}-3b\delta, & \text{ if }y\in[1/2+2\delta,1/2+3\delta),\\
1, & \text{ if }y\in[1/2+3\delta,1],\\
0, & \text{ otherwise.}
\end{cases}%\label{eq:uniform-perturb}{align*}
\end{align*}
Denote $F$ as the corresponding cumulative distribution function,
$R(y)\equiv y(1-F(y))$ the revenue function, and $p^{*}\equiv\argmax_{y\in[0,1]}R(y)$
the optimal price. If $C^{*}\in(0,2)$, $|b|<4-2C^{*}$, and $\delta>0$
is sufficiently small, then the following statements hold.

\begin{enumerate}[label = (\roman*)]
\item The density $f$ is Lipschitz continuous. 
\item The revenue function is twice-differentiable a.e. The second-order
derivative is bounded a.e. and satisfies that 
\begin{align*}
-2f(y)-yf'(y)\geq-C^{*}\text{ for almost all }y.
\end{align*}
\item For $b>0$, the optimal price $p^{*}\in(1/2-\delta,1/2-b\delta/8)$.
For $b<0$, the optimal price $p^{*}\in(1/2-b\delta/8,1/2+2\delta)$.
For $b=0$, the optimal price $p^{*}=1/2$. In particular, $p^{*}$
is always an interior solution, and $f$ is always differentiable
in a neighborhood of $p^{*}$. 
\end{enumerate}
\end{lemma}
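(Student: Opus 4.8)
The plan is to prove the three parts in order, all of them driven by the explicit piecewise‑affine description of $f$. \textbf{Part (i)} is immediate: the hat function $\phi_Y$ is continuous and piecewise affine with slopes in $\{-1,0,1\}$, hence $1$‑Lipschitz on $\mathbb{R}$, so $y\mapsto\phi_Y((y-1/2)/\delta)$ is $(1/\delta)$‑Lipschitz and $f(y)=1+b\delta\,\phi_Y((y-1/2)/\delta)$ is $|b|$‑Lipschitz; since $|\phi_Y|\le1$ we get $f\ge1-|b|\delta>0$ for $\delta$ small, and the change of variables $u=(y-1/2)/\delta$ together with $\int_{-1}^{3}\phi_Y=0$ yields $\int_0^1 f=1$, so $f$ is a genuine density (supported in $[0,1]$ once $\delta<1/6$).

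\textbf{Part (ii).} Because $f$ is continuous, $F\in C^1$ with $F'=f$; because $f$ is piecewise affine with kinks only at $\{1/2-\delta,\,1/2,\,1/2+2\delta,\,1/2+3\delta\}$, the derivative $f'$ exists off this finite set and takes values in $\{-b,0,b\}$. Differentiating $R(y)=y(1-F(y))$ gives $R'(y)=1-F(y)-yf(y)$ and, a.e., $R''(y)=-2f(y)-yf'(y)$, which is bounded a.e.\ by $2(1+|b|\delta)+|b|$. For the strong‑concavity bound: outside $[1/2-\delta,1/2+3\delta]$ we have $f\equiv1$, $f'\equiv0$, so $R''\equiv-2$, which is within $-C^*$ since $C^*<2$; inside that interval the perturbation is \emph{localized near $1/2$}, so $f\ge1-|b|\delta$ and $|y f'(y)|\le|b|(1/2+3\delta)$, giving $-2f(y)-yf'(y)\le-2+|b|(1/2+5\delta)$, which is $\le-C^*$ as soon as $|b|(1/2+5\delta)\le2-C^*$. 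The last inequality holds for all sufficiently small $\delta$ precisely because $|b|<4-2C^*=2(2-C^*)$; this is where the hypothesis on $b$ does its real work (in the applications $|b|\le a$ with $a<4-2C^*$ fixed, so $\delta=\delta_n\to0$ eventually suffices).

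\textbf{Part (iii).} By Part (ii), $R'$ is Lipschitz and, via Lemma~\ref{lm:foundamental-thm-calculus-Lipschitz}(i) and $R''\le-C^*<0$ a.e., strictly decreasing; since $R'(0)=1>0$ and $R'(1)=1-1-1=-1<0$, there is a unique interior maximizer $p^*\in(0,1)$ with $R'(p^*)=0$, and locating it reduces to evaluating $R'$ at the endpoints of the claimed interval. For $b=0$, $f\equiv1$ and $R(y)=y(1-y)$, so $p^*=1/2$. For $b>0$: on $[1/2-\delta,1/2)$ one has $F(y)=y+\tfrac b2(y-1/2+\delta)^2$, hence $R'(y)=1-2y-\tfrac b2(y-1/2+\delta)^2-by(y-1/2+\delta)$; this gives $R'(1/2-\delta)=2\delta>0$, while at $y=1/2-b\delta/8$ (in this piece since $b<8$) a short expansion gives $R'(1/2-b\delta/8)=\tfrac{b\delta}{16}(b-4)+O(\delta^2)<0$ for $\delta$ small, using $b<4-2C^*<4$. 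Hence $p^*\in(1/2-\delta,\,1/2-b\delta/8)$, which lies strictly inside the affine piece $(1/2-\delta,1/2)$, so $f$ is differentiable near $p^*$ and $p^*$ is interior. The case $b<0$ is symmetric: the analogous expansions give $R'(1/2-b\delta/8)=\tfrac{|b|\delta}{16}(4-|b|)+O(\delta^2)>0$, $R'(1/2+\delta)=-\delta+O(\delta^2)<0$, and $R'(1/2+2\delta)=\delta(\tfrac b2-4)+O(\delta^2)<0$, so $p^*\in(1/2-b\delta/8,\,1/2+\delta)\subset(1/2-b\delta/8,\,1/2+2\delta)$, again strictly inside an affine piece.

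I expect the main obstacle to be the sign analysis in Part (iii): one must push $R'$ through the piecewise‑quadratic $F$, evaluate at $1/2\mp b\delta/8$, and isolate the leading term in $\delta$. The particular offset $b\delta/8$ and the restriction $|b|<4$ (subsumed by $|b|<4-2C^*$) are exactly what force the leading coefficient $(b-4)$ or $(4-|b|)$ to have the right sign uniformly in $b$; everything else is routine bookkeeping with the explicit formulas, and Part~(ii) is the other place the hypothesis $|b|<4-2C^*$ is essential.
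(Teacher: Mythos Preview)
Your proposal is correct and follows essentially the same route as the paper: Part~(i) is immediate from piecewise linearity, Part~(iii) locates $p^*$ by evaluating the strictly decreasing $R'$ at the two endpoints of the claimed interval and reading off the leading $\delta$-term, exactly as in the paper. The only noteworthy difference is in Part~(ii): the paper computes $R''$ piecewise and checks each of the three inner subintervals for both signs of $b$, whereas you use the single uniform estimate $-2f(y)-yf'(y)\le -2+|b|(1/2+5\delta)$ on the whole perturbation region, which is a clean shortcut that makes the role of the hypothesis $|b|<4-2C^*$ more transparent; both arguments are valid and yield the same conclusion (and a couple of leading constants, e.g.\ your $-\delta$ versus the paper's $-2\delta$ at $1/2+\delta$, differ harmlessly since only the sign matters).
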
 
\begin{proof}[Proof of Lemma \ref{lm:perturbed-distribution}]
\begin{figure}[!htbp]
\caption{Perturbation function and perturbed density.}
\centering \begin{tikzpicture}
				
				\begin{axis}[
					legend style={nodes={scale=0.8, transform shape}},
					axis y line=center, 
					axis x line=bottom,
					%y axis line style={opacity=0},
					ytick={-1,0,1},
					xtick={-1,0,1,2,3},
					xmin=-2,     xmax=4,
					ymin=-1.2,     ymax=1.2,
					%xlabel = \(u\),
					%ylabel = {\(MTE\)},
					]
					%Below the red parabola is defined
					\addplot [
					thick,
					domain=-2:-1, 
					samples=100, 
					color=black,
					]
					{0};
					\addplot [
					thick,
					domain=-1:0, 
					samples=100, 
					color=black,
					]
					{x+1};
					\addplot [
					thick,
					domain=0:2, 
					samples=100, 
					color=black,
					]
					{-x+1};
					\addplot [
					thick,
					domain=2:3, 
					samples=100, 
					color=black,
					]
					{x-3};
					\addplot [
					thick,
					domain=3:4, 
					samples=100, 
					color=black,
					]
					{0};
					\addlegendentry{$\phi_Y$};
					
				\end{axis}

			\end{tikzpicture} \begin{tikzpicture}
				
				\begin{axis}[
					legend style={nodes={scale=0.8, transform shape}},
					axis y line=left, 
					axis x line=bottom,
					%y axis line style={opacity=0},
					ytick={0,1},
					xtick={0.5,1},
					xmin=0,     xmax=1.2,
					ymin=0,     ymax=1.4,
					%xlabel = \(u\),
					%ylabel = {\(MTE\)},
					]
					%Below the red parabola is defined
					\addplot [
					thick,
					domain=0:7/16, 
					samples=100, 
					color=black,
					]
					{1};
					\addplot [
					thick,
					domain=7/16:1/2, 
					samples=100, 
					color=black,
					]
					{2*x + 2/16};
					\addplot [
					thick,
					domain=1/2:5/8, 
					samples=100, 
					color=black,
					]
					{-2*x + 2 + 2/16};
					\addplot [
					thick,
					domain=5/8:11/16, 
					samples=100, 
					color=black,
					]
					{2*x -6/16};
					\addplot [
					thick,
					domain=11/16:1, 
					samples=100, 
					color=black,
					]
					{1};
					\addlegendentry{$f$};
					
					\addplot[dashed] coordinates{(0.5-1/16,0) (0.5-1/16,1)};
					\addplot[dashed] coordinates{(0.5+1/16,0) (0.5+1/16,1)};
					\addplot[] coordinates{(0.5,0.5)} node[] {$2\delta$};
					\addplot[dashed] coordinates{(0.5+3/16,0) (0.5+3/16,1)};
					\addplot[] coordinates{(0.5+2/16,0.5)} node[] {$2\delta$};
					
					\addplot[dashed] coordinates{(0,1+2/16) (1/2 ,1+2/16)};
					\addplot[] coordinates{(0.25,1+1/16)} node[] {$a\delta$};

					%after end axis/.code={
					%	\path (axis cs:0,0) node [anchor=north west,yshift=-0.075cm,xshift=-0.075cm] {0};
					%}
					
				\end{axis}

			\end{tikzpicture} 
\end{figure}
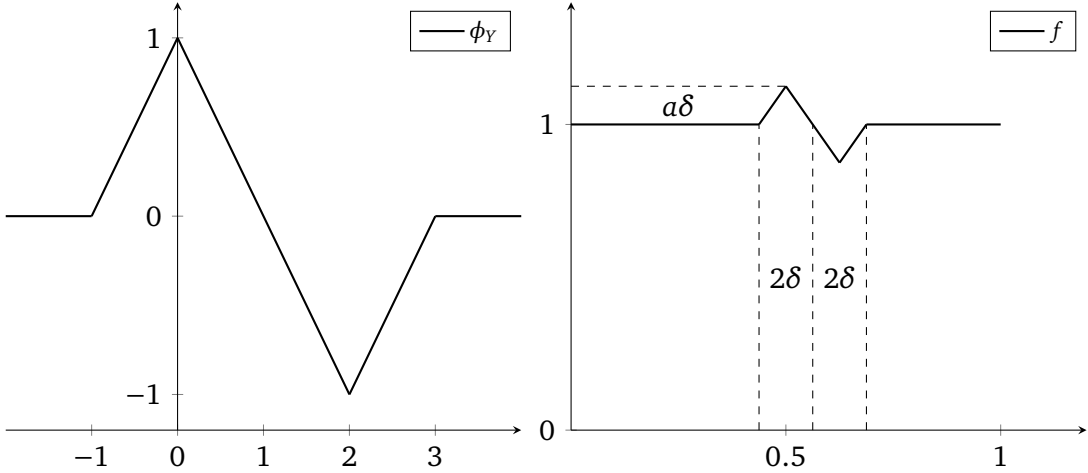

For reference, we plot here the perturbation function $\phi_{Y}$
and the perturbed density $f$. Part (i) is straightforward. The density
$f$ is piecewise linear and hence Lipschitz continuous with Lipschitz
constant $b$. To verify the strong concavity in part (ii), note that
the corresponding revenue function $R$ is continuously differentiable
and twice-differentiable a.e. on the support $[0,1]$. Its second-order
derivative 
\begin{align*}
R''(y)=-2f(y)-yf'(y)=\begin{cases}
-2, & \text{ if }y\in[0,1/2-\delta],\\
-3by-2+b-2b\delta, & \text{ if }y\in[1/2-\delta,1/2],\\
3by-2-b-2b\delta, & \text{ if }y\in[1/2,1/2+2\delta],\\
-3by-2+b+6b\delta, & \text{ if }y\in[1/2+2\delta,1/2+3\delta],\\
-2, & \text{ if }y\in[1/2+3\delta,1].
\end{cases}
\end{align*}
We can see that $R''$ is piecewise linear and hence bounded a.e.
We further show that $R''$ is bounded away from zero by $\kappa$.
On the intervals $[0,1/2-\delta]$ and $[1/2+3\delta,1]$, we have
$R''(y)=-2<-C^{*}$. We check the remaining three intervals one by
one. On the interval $[1/2-\delta,1/2]$, the condition $|b|<4-2C^{*}$
ensures that 
\begin{align*}
b\geq0 & \implies R''(y)\leq R''(1/2-\delta)=-b/2-2+b\delta\leq-C^{*},\\
b<0 & \implies R''(y)\leq R''(1/2)=-b/2-2-2b\delta\leq-C^{*},
\end{align*}
when $\delta$ is sufficiently small. On the interval $[1/2,1/2+2\delta]$,
we have 
\begin{align*}
b\geq0 & \implies R''(y)\leq R''(1/2+2\delta)=b/2-2+4b\delta\leq-C^{*},\\
b<0 & \implies R''(y)\leq R''(1/2)=b/2-2-2b\delta\leq-C^{*},
\end{align*}
when $\delta$ is sufficiently small. On the interval $[1/2+2\delta,1/2+3\delta]$,
we have 
\begin{align*}
b\geq0 & \implies R''(y)\leq R''(1/2+2\delta)=-b/2-2<-C^{*},\\
b<0 & \implies R''(y)\leq R''(1/2+3\delta)=-b/2-2-3b\delta<-C^{*},
\end{align*}
To summarize, we have shown that $R''(y)\leq-C^{*}$ a.e. on $[0,1]$
provided that $\delta>0$ is sufficiently small.

For part (iii), we first consider the case $b>0$. We only need to
consider the interval $[1/2-\delta,1/2]$. The reason will become
clear later. The cumulative distribution function 
\begin{align*}
F(y)=\frac{b}{2}y^{2}+\left(1-\frac{b}{2}+b\delta\right)\delta y+\frac{b}{2}\left(1/2-\delta\right)^{2},y\in\left[1/2-\delta,1/2\right].
\end{align*}
The revenue function 
\begin{align*}
R(y)=-\frac{b}{2}y^{3}-\left(1-\frac{b}{2}+b\delta\right)y^{2}+\left(1-\frac{b}{2}\left(1/2-\delta\right)^{2}\right)y,y\in\left[1/2-\delta,1/2\right].
\end{align*}
The marginal revenue 
\begin{align*}
R'(y)=-\frac{3b}{2}y^{2}-\left(2-b+2b\delta\right)y+1-\frac{b}{2}\left(1/2-\delta\right)^{2},y\in\left[1/2-\delta,1/2\right].
\end{align*}
We evaluate the marginal revenue at two points $1/2-\delta$ and $1/2-\frac{b\delta}{8}$.
When $y=1/2-\delta$, the marginal revenue 
\begin{align*}
R'\left(1/2-\delta\right)=\delta>0.
\end{align*}
When $y=1/2-b\delta/8$, the marginal revenue 
\begin{align*}
R'\left(1/2-\frac{b\delta}{8}\right)\approx\frac{b(b-4)}{16}\delta<0,
\end{align*}
where we have omitted higher order terms involving $\delta^{2}$.
Therefore, $R'\left(1/2-\frac{b\delta}{8}\right)$ is negative for
sufficiently small $\delta$. Since the marginal revenue $R'$ is
strictly decreasing on the entire domain $[0,1]$, we know that the
only zero of $R'$ (which is the optimal price $p^{*}$) is within
the region $(1/2-\delta,1/2-\frac{b\delta}{8})$. Within this region,
the revenue is twice-differentiable everywhere.

Next, we consider the case $b<0$. In this case, we only need to study
the region $[1/2,1/2+2\delta]$. The cumulative distribution function
\begin{align*}
F(y)=-\frac{b}{2}y^{2}+\left(1+\frac{b}{2}+b\delta\right)y+\frac{b}{2}\delta^{2}-\frac{b}{2}\delta-\frac{b}{8},y\in\left[1/2,1/2+2\delta\right].
\end{align*}
The revenue function 
\begin{align*}
R(y)=y(1-F(y))=\frac{b}{2}y^{3}-\left(1+\frac{b}{2}+b\delta\right)y^{2}+\left(1+\frac{b}{8}-\frac{b}{2}\delta^{2}+\frac{b}{2}\delta\right)y,y\in\left[1/2,1/2+2\delta\right].
\end{align*}
The marginal revenue 
\begin{align*}
R'(y)=\frac{3b}{2}y^{2}-(2+b+2b\delta)y+\left(1+\frac{b}{8}-\frac{b}{2}\delta^{2}+\frac{b}{2}\delta\right),y\in\left[1/2,1/2+2\delta\right].
\end{align*}
We evaluate the marginal revenue at two points $1/2+\delta$ and $1/2-\frac{b\delta}{8}$.
When $y=1/2+\delta$, the marginal revenue 
\begin{align*}
R'\left(1/2+\delta\right)\approx-2\delta<0,
\end{align*}
where we have omitted higher order terms involving $\delta^{2}$.
When $y=1/2-b\delta/8$, the marginal revenue 
\begin{align*}
R'\left(1/2-\frac{b\delta}{8}\right)\approx\frac{b(b+4)}{16}\delta>0,
\end{align*}
where we have omitted higher order terms involving $\delta^{2}$.
Since the marginal revenue $R'$ is strictly decreasing on the entire
domain $[0,1]$, we know that the only zero of $R'$ (which is the
optimal price $p^{*}$) is within the region $(1/2-\frac{b\delta}{8},1/2+\delta)$.
Within this region, the revenue is twice-differentiable everywhere.

Lastly, when $b=0$, the density function is constant, and Lemma \ref{lm:uniform-distribution}
shows that the optimal price is $1/2$. Therefore, regardless of the
sign of $b$, the optimal price is always an interior solution, and
is in the interior of a region on which the revenue function is twice-differentiable. 
\end{proof}
\begin{lemma} \label{lm:lecam-functionals} Take $x_{0}\in[0,1]$.
Recall the following definition of $\omega_{D}(\epsilon)$ and $\omega_{U}(\epsilon)$:
\begin{align*}
\omega_{D}(\epsilon) & \equiv\sup_{F_{1},F_{2}\in\mathcal{F}}\left\{ |p_{D}^{*}(x_{0};F_{1})-p_{D}^{*}(x_{0};F_{2})|:H(F_{1}\|F_{2})\leq\epsilon\right\} ,\\
\omega_{U}(\epsilon) & \equiv\sup_{F_{1},F_{2}\in\mathcal{F}^{U}}\left\{ |p_{U}^{*}(F_{1})-p_{U}^{*}(F_{2})|:H(F_{1}\|F_{2})\leq\epsilon\right\} .
\end{align*}
Then 
\begin{align*}
\inf_{\check{p}_{D} \in \check{\mathcal{D}}}\sup_{F_{Y,X}\in\mathcal{F}}\mathbb{E}_{F_{Y,X}}|\check{p}_{D}(x_{0};\textit{data})-p_{D}^{*}(x_{0};F_{Y,X})| & \geq\frac{1}{8}\omega_{D}\left(1/(2\sqrt{n})\right),\\
\inf_{\check{p}_{U}\in\check{\mathcal{U}}}\sup_{F_{Y}\in\mathcal{F}^{U}}\mathbb{E}_{F_{Y}}|\check{p}_{U}(\textit{data}_{Y})-p_{U}^{*}(F_{Y})| & \geq\frac{1}{8}\omega_{U}\left(1/(2\sqrt{n})\right).
\end{align*}
\end{lemma}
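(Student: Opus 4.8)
The plan is to prove both inequalities by Le~Cam's two-point method; since $p_D^*(x_0;\cdot)$ and $p_U^*(\cdot)$ are scalar functionals, a single pair of distributions suffices. I carry out the argument for $\omega_D$, and the argument for $\omega_U$ is identical after replacing $\mathcal{F}$ by $\mathcal{F}^U$, $p_D^*(x_0;\cdot)$ by $p_U^*(\cdot)$, $\check{\mathcal{D}}$ by $\check{\mathcal{U}}$, and $\textit{data}$ by $\textit{data}_{Y}$.

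First I fix an arbitrary pair $F_1,F_2\in\mathcal{F}$ with $H(F_1\|F_2)\le\epsilon:=1/(2\sqrt{n})$ and set $\delta:=\tfrac12|p_D^*(x_0;F_1)-p_D^*(x_0;F_2)|$. Given any data-based rule $\check{p}_D\in\check{\mathcal{D}}$, define the binary test $\psi(\textit{data}):=\argmin_{j\in\{1,2\}}|p_D^*(x_0;F_j)-\check{p}_D(x_0;\textit{data})|$, exactly as in Appendix~\ref{sec:proof-sketch1}. The triangle-inequality computation there shows that, under $F_j$, every error of $\psi$ forces $|\check{p}_D(x_0;\textit{data})-p_D^*(x_0;F_j)|\ge\delta$, so Markov's inequality gives
\[
\mathbb{E}_{F_j}|\check{p}_D(x_0;\textit{data})-p_D^*(x_0;F_j)|\ \ge\ \delta\,\mathbb{P}_{F_j}\big(\psi(\textit{data})\ne j\big),\qquad j=1,2.
\]
Averaging these two bounds and using that a supremum dominates an average, for every $\check{p}_D$,
\[
\sup_{F\in\{F_1,F_2\}}\mathbb{E}_{F}|\check{p}_D(x_0;\textit{data})-p_D^*(x_0;F)|\ \ge\ \frac{\delta}{2}\sum_{j=1}^{2}\mathbb{P}_{F_j}\big(\psi(\textit{data})\ne j\big)\ \ge\ \frac{\delta}{2}\inf_{\psi}\sum_{j=1}^{2}\mathbb{P}_{F_j}\big(\psi\ne j\big),
\]
where the infimum runs over all tests measurable with respect to $\textit{data}$, and the right-hand side no longer depends on $\check{p}_D$.

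Next I lower-bound the Bayes testing risk. Since $\textit{data}$ consists of $n$ i.i.d.\ observations, the likelihoods are the product measures $F_1^{\otimes n}$ and $F_2^{\otimes n}$, and the Le~Cam identity \citep{tsybakov2009introduction} gives $\inf_{\psi}\sum_{j}\mathbb{P}_{F_j}(\psi\ne j)=1-\mathrm{TV}(F_1^{\otimes n},F_2^{\otimes n})$. I then bound the total variation by the Hellinger distance, $\mathrm{TV}\le H$, combined with the tensorization estimate $H^2(F_1^{\otimes n},F_2^{\otimes n})=2\big(1-(1-\tfrac12H^2(F_1\|F_2))^n\big)\le n\,H^2(F_1\|F_2)\le n\epsilon^2=\tfrac14$. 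Hence $\mathrm{TV}(F_1^{\otimes n},F_2^{\otimes n})\le\tfrac12$, the Bayes risk is at least $\tfrac12$, and taking $\inf_{\check{p}_D}$ in the previous display yields
\[
\inf_{\check{p}_D\in\check{\mathcal{D}}}\ \sup_{F\in\{F_1,F_2\}}\mathbb{E}_{F}|\check{p}_D(x_0;\textit{data})-p_D^*(x_0;F)|\ \ge\ \frac{\delta}{4}\ =\ \frac{1}{8}\big|p_D^*(x_0;F_1)-p_D^*(x_0;F_2)\big|.
\]
Since $F_1,F_2\in\mathcal{F}$, the left-hand side is bounded above by $\inf_{\check{p}_D}\sup_{F_{Y,X}\in\mathcal{F}}\mathbb{E}_{F_{Y,X}}|\check{p}_D(x_0;\textit{data})-p_D^*(x_0;F_{Y,X})|$; as the pair was arbitrary, taking the supremum of the right-hand side over all admissible $(F_1,F_2)$ produces exactly $\tfrac18\omega_D(1/(2\sqrt{n}))$, which is the claim. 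The same chain with $\textit{data}_{Y}$, $\check{\mathcal{U}}$, $\mathcal{F}^U$, and $p_U^*$ gives the uniform-pricing bound.

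The argument is classical, so there is no deep obstacle; the two points that require care are (i) invoking the classification-to-estimation reduction of Appendix~\ref{sec:proof-sketch1} uniformly over the chosen pair, so that the supremum defining the modulus can be taken only at the very end, and (ii) calibrating the Hellinger radius to $\epsilon=1/(2\sqrt{n})$, since this is precisely what keeps $\mathrm{TV}(F_1^{\otimes n},F_2^{\otimes n})$ bounded away from $1$ after tensorizing to $n$ observations and therefore pins down both the $\sqrt{n}$-scaling inside the modulus and the constant $\tfrac18$.
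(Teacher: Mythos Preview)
Your proof is correct and is essentially the same approach as the paper's: the paper simply invokes Corollary~15.6 (Le~Cam for functionals) in \citet{wainwright2019high}, and your argument is a self-contained unpacking of that corollary---the two-point reduction to testing, the Le~Cam identity for the Bayes risk, the bound $\mathrm{TV}\le H$, and the Hellinger tensorization that turns the constraint $H(F_1\|F_2)\le 1/(2\sqrt{n})$ into $\mathrm{TV}(F_1^{\otimes n},F_2^{\otimes n})\le 1/2$. There is nothing to add.
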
 
\begin{proof}[Proof of Lemma \ref{lm:lecam-functionals}]
By treating $p_{D}^{*}(x_{0};\cdot)$ and $p_{U}^{*}(\cdot)$ as
functionals, the desired results directly follow from Corollary 15.6
(Le Cam for functionals) in Chapter 15 of \citet{wainwright2019high}. 
\end{proof}
\begin{lemma} \label{lm:fano} Let $\{F_{Y,X}^{j}\colon 1\leq j\leq M\}\subset\mathcal{F}$
be such that 
\begin{align*}
\lVert p_{D}^{*}(F_{Y,X}^{j})-p_{D}^{*}(F_{Y,X}^{j})\rVert_{2}\geq2\delta,j\ne j'.
\end{align*}
Then we have 
\begin{align*}
\inf_{\check{p}_{D} \in \check{\mathcal{D}}}\sup_{F_{Y,X}\in\mathcal{F}}\mathbb{E}\lVert\check{p}_{D}(\textit{data})-p_{D}^{*}(F_{Y,X})\rVert_{2}^{2}\geq\delta^{2}\left(1-\frac{\sum_{j,j'=1}^{M}\text{KL}(F_{Y,X}^{j}\|F_{Y,X}^{j'})/M^{2}+\log2}{\log M}\right)
\end{align*}
\end{lemma}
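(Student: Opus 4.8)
The plan is to run the standard reduction from estimating the functional $F_{Y,X}\mapsto p_{D}^{*}(F_{Y,X})$ under unweighted $L_{2}$ loss to an $M$-ary hypothesis test, and then apply Fano's inequality in the form already recorded in (\ref{eq:fano}). First I would fix an arbitrary data-based rule $\check{p}_{D}\in\check{\mathcal{D}}$ and attach to it the minimum-distance test
\begin{align*}
\psi(\textit{data})\equiv\argmin_{j\in\{1,\ldots,M\}}\lVert\check{p}_{D}(\textit{data})-p_{D}^{*}(F_{Y,X}^{j})\rVert_{2},
\end{align*}
with ties broken by the smallest index (so that $\psi$ is measurable). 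The key elementary step is to check that $\psi$ identifies the truth whenever the estimation error is below $\delta$: if the data-generating distribution is $F_{Y,X}^{j}$ and $\lVert\check{p}_{D}(\textit{data})-p_{D}^{*}(F_{Y,X}^{j})\rVert_{2}<\delta$, then for every $j'\neq j$ the triangle inequality together with the separation hypothesis $\lVert p_{D}^{*}(F_{Y,X}^{j})-p_{D}^{*}(F_{Y,X}^{j'})\rVert_{2}\geq2\delta$ gives $\lVert\check{p}_{D}(\textit{data})-p_{D}^{*}(F_{Y,X}^{j'})\rVert_{2}>2\delta-\delta=\delta$, hence $\psi(\textit{data})=j$; contrapositively, $\mathbb{P}_{F_{Y,X}^{j}}(\psi(\textit{data})\neq j)\leq\mathbb{P}_{F_{Y,X}^{j}}(\lVert\check{p}_{D}(\textit{data})-p_{D}^{*}(F_{Y,X}^{j})\rVert_{2}\geq\delta)$.

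Next I would convert squared-error risk into testing error via Markov's inequality, namely $\mathbb{E}_{F_{Y,X}^{j}}\lVert\check{p}_{D}(\textit{data})-p_{D}^{*}(F_{Y,X}^{j})\rVert_{2}^{2}\geq\delta^{2}\,\mathbb{P}_{F_{Y,X}^{j}}(\psi(\textit{data})\neq j)$, and then bound the worst case over $\mathcal{F}$ from below by the average over the finite subfamily $\{F_{Y,X}^{j}\}_{j=1}^{M}\subset\mathcal{F}$,
\begin{align*}
\sup_{F_{Y,X}\in\mathcal{F}}\mathbb{E}_{F_{Y,X}}\lVert\check{p}_{D}(\textit{data})-p_{D}^{*}(F_{Y,X})\rVert_{2}^{2}\geq\frac{1}{M}\sum_{j=1}^{M}\mathbb{E}_{F_{Y,X}^{j}}\lVert\check{p}_{D}(\textit{data})-p_{D}^{*}(F_{Y,X}^{j})\rVert_{2}^{2}\geq\delta^{2}\,\frac{1}{M}\sum_{j=1}^{M}\mathbb{P}_{F_{Y,X}^{j}}(\psi(\textit{data})\neq j).
\end{align*}
Taking $\inf_{\check{p}_{D}\in\check{\mathcal{D}}}$ on the left and noting that each rule $\check{p}_{D}$ induces a test, so that the infimum of the right-hand side over the induced tests is at least the infimum over all tests, I arrive at
\begin{align*}
\inf_{\check{p}_{D}\in\check{\mathcal{D}}}\sup_{F_{Y,X}\in\mathcal{F}}\mathbb{E}_{F_{Y,X}}\lVert\check{p}_{D}(\textit{data})-p_{D}^{*}(F_{Y,X})\rVert_{2}^{2}\geq\delta^{2}\,\inf_{\psi}\frac{1}{M}\sum_{j=1}^{M}\mathbb{P}_{F_{Y,X}^{j}}(\psi(\textit{data})\neq j).
\end{align*}

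To finish I would apply Fano's inequality in the form (\ref{eq:fano}) (cf.\ Chapter 15 of \citet{wainwright2019high}): endowing the index $J\in\{1,\ldots,M\}$ with the uniform prior, Fano gives $\inf_{\psi}\frac{1}{M}\sum_{j=1}^{M}\mathbb{P}_{F_{Y,X}^{j}}(\psi(\textit{data})\neq j)\geq1-(I(J;\textit{data})+\log2)/\log M$, where $I(J;\textit{data})$ is the mutual information, and convexity of the Kullback--Leibler divergence in its second argument bounds $I(J;\textit{data})\leq\frac{1}{M^{2}}\sum_{j,j'=1}^{M}\text{KL}(F_{Y,X}^{j}\|F_{Y,X}^{j'})$, the divergences being understood between the sampling distributions of $\textit{data}$. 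Substituting into the previous display yields the claimed inequality. I do not anticipate a genuine obstacle: this is the textbook local-Fano method. The only two points that need care are the triangle-inequality verification that the minimum-distance test succeeds on the event $\{\lVert\check{p}_{D}(\textit{data})-p_{D}^{*}(F_{Y,X})\rVert_{2}<\delta\}$, and using the variant of Fano's inequality whose bound is the \emph{average pairwise} KL divergence rather than a divergence against a fixed reference measure --- which is precisely the version supplied by \citet{wainwright2019high}.
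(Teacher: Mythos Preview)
Your proposal is correct and follows essentially the same route as the paper: the reduction from minimax $L_{2}$ risk to an $M$-ary test via the minimum-distance rule and Markov's inequality is already sketched in the preliminaries (\ref{eq:lower-bound-multiple-decision})--(\ref{eq:fano}), and the paper's proof of the lemma simply cites Proposition 15.12 and inequality (15.34) of \citet{wainwright2019high} with $\Phi(t)=t^{2}$, $\rho=\lVert\cdot\rVert_{2}$, and $\theta=p_{D}^{*}$. You have merely unpacked that citation explicitly, including the convexity bound $I(J;\textit{data})\leq M^{-2}\sum_{j,j'}\text{KL}(F_{Y,X}^{j}\|F_{Y,X}^{j'})$, so there is nothing to add.
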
 
\begin{proof}[Proof of Lemma \ref{lm:fano}]
The result follows from Proposition 15.12 (the Fano's inequality)
and inequality (15.34) (convexity of the KL divergence) in Chapter
15 of \citet{wainwright2019high}, where $\Phi$ is taken to be the
square function, $\rho$ the $L_{2}-$distance, and $\theta$ the
functional $p_{D}^{*}$. 
\end{proof}
\begin{lemma} \label{lm:G_delta-VC-subgraph} Consider the following
function class: 
\begin{align*}
\{(y,x)\mapsto(p\mathbf{1}\{y\geq p\}-\tilde{p}\mathbf{1}\{y\geq\tilde{p}\})\mathbf{1}\{x\in[k/K,(k+1)/K)\}\colon p\in[0,1]\}.
\end{align*}
For any $\tilde{p}\in[0,1]$, $K\geq1$, and $0\leq k\leq K-1$, the
above class is a VC-subgraph with VC-dimension no greater than
$2$. \end{lemma}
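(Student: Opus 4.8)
The plan is to reduce the statement to two standard permanence properties of VC-subgraph classes and then finish with an elementary counting argument about indicator functions of intervals.

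First I would strip away the parts of $(y,x)\mapsto\big(p\mathbf 1\{y\ge p\}-\tilde p\mathbf 1\{y\ge\tilde p\}\big)\mathbf 1\{x\in[k/K,(k+1)/K)\}$ that do not depend on $p$. Write this function as $h_p\mathbf 1_C-g_0$, where $h_p(y)\equiv p\mathbf 1\{y\ge p\}$, $C\equiv\{(y,x):x\in[k/K,(k+1)/K)\}$, and $g_0\equiv\tilde p\mathbf 1\{y\ge\tilde p\}\mathbf 1_C$ is a fixed function. Subtracting a fixed function from every member of a class leaves its VC-subgraph dimension unchanged (the bijection $(y,x,t)\mapsto(y,x,t+g_0(y,x))$ maps subgraphs of $h_p\mathbf 1_C-g_0$ to subgraphs of $h_p\mathbf 1_C$), and multiplying every member by a fixed $\{0,1\}$-valued function cannot increase it (the subgraph of $f\mathbf 1_C$ equals $(\mathrm{sg}\,f\cap(C\times\mathbb R))\cup(C^{c}\times(-\infty,0))$, i.e.\ a fixed intersection followed by a fixed union, neither operation raising the shattering dimension). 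Hence it suffices to show that $\mathcal H\equiv\{\,y\mapsto p\mathbf 1\{y\ge p\}:p\in[0,1]\,\}$ on $[0,1]$ has VC-subgraph dimension at most $2$.

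Then I would argue by contradiction. Suppose $\mathcal H$ shatters three points $(y_i,t_i)$, $i=1,2,3$, in $[0,1]\times\mathbb R$. The point $(y_i,t_i)$ lies in the subgraph of $y\mapsto p\mathbf 1\{y\ge p\}$ iff $t_i<p\mathbf 1\{y_i\ge p\}$. A short case check shows that unless $0\le t_i<y_i$, this membership is independent of $p$: if $t_i<0$ the point is always in, and if $t_i\ge y_i\ge 0$ then $p\mathbf 1\{y_i\ge p\}\le y_i\le t_i$ for every $p$, so the point is always out — either possibility is incompatible with shattering. When $0\le t_i<y_i$, membership holds precisely for $p\in(t_i,y_i]$. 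Thus shattering forces the $\{0,1\}^3$-valued map $p\mapsto\big(\mathbf 1\{p\in(t_1,y_1]\},\mathbf 1\{p\in(t_2,y_2]\},\mathbf 1\{p\in(t_3,y_3]\}\big)$ to attain all $8$ values as $p$ ranges over $[0,1]$; but each coordinate, being the indicator of an interval, changes value at most twice, so the at most six change-points cut $[0,1]$ into at most seven cells on which the vector is constant — at most $7<8$ patterns, a contradiction. Hence $\mathcal H$, and therefore the class in the lemma, has VC-subgraph dimension at most $2$.

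I expect the only delicate point to be the case analysis eliminating the degenerate points (those with $t_i<0$, $t_i\ge y_i$, or $x_i$ outside the market interval), since these are exactly the points that can never be shattered; once they are ruled out, the bound ``six breakpoints give at most seven constancy cells'' immediately beats $2^3=8$ and closes the proof, and everything else is routine bookkeeping with the permanence lemmas.
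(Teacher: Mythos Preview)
Your reduction is exactly the same as the paper's: subtract the fixed function, then multiply by the fixed market indicator, and reduce to bounding the VC-subgraph dimension of $\mathcal H=\{y\mapsto p\mathbf 1\{y\ge p\}:p\in[0,1]\}$. The paper simply invokes Lemma~2.6.22 of \citet{wellner1996} for that last step and the proof of Lemma~2.6.18(v) there for the translation step, whereas you give a self-contained argument: after ruling out degenerate points you show membership is equivalent to $p\in(t_i,y_i]$, and then count that three intervals produce at most six breakpoints and hence at most seven sign patterns, fewer than $2^3$. That direct argument is correct and is a nice substitute for the citation; it makes the bound of $2$ transparent rather than imported, at the small cost of a case check the paper avoids.
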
 
\begin{proof}[Proof of Lemma \ref{lm:G_delta-VC-subgraph}]
By Lemma 2.6.22 in Chapter 2 of \citet{wellner1996}, the class 
\begin{align*}
\{(y,x)\mapsto p\mathbf{1}\{y\geq p\}\colon p\in[0,1]\}
\end{align*}
is a VC-subgraph with VC-dimention no greater than $2$.\footnote{In the original statement of the lemma, the VC dimension is no greater
than 3. This is because the definition of VC dimension in \citet{wellner1996}
is the smallest number $n$ for which no set of $n$ points is shattered.
The definition we use in this paper is the largest number $n$ that
some set of $n$ points is shattered.} The function $(y,x)\mapsto\tilde{p}\mathbf{1}\{y\geq\tilde{p}\}$
is a fixed function that does not depend on the index $p$. By the
proof Lemma 2.6.18(v) in \citet{wellner1996}, the class 
\begin{align*}
\{(y,x)\mapsto p\mathbf{1}\{y\geq p\}-\tilde{p}\mathbf{1}\{y\geq\tilde{p}\}\colon p\in[0,1]\}
\end{align*}
is a VC-subgraph with VC-dimention no greater than $2$. Lastly,
we multiply each function in the class by an indicator $\mathbf{1}\{x\in[k/K,(k+1)/K)\}$.
This does not increase the VC-dimension. 
\end{proof}
\begin{lemma} \label{lm:expected-sup-EP-for-VC-subgraph} Let $Z_{1},\ldots,Z_{n}$
be an i.i.d. sequence of random variables from distribution $P$.
Let $\mathcal{G}$ be a class of VC-subgraph functions with VC-dimension
$v$ and envelope function $G$. Assume that $\lVert G\rVert_{L_{2}(P)}<\infty$.
Then we have 
\begin{align*}
\mathbb{E}\sup_{g\in\mathcal{G}}\left|\frac{1}{n}\sum_{i=1}^{n}g(Z_{i})-\mathbb{E}g(Z_{i})\right|\leq8\sqrt{2}\frac{\lVert G\rVert_{L_{2}(P)}}{\sqrt{n}}\left(\log(2C)+\log(v)+(\log(16)+3)v\right),
\end{align*}
for some universal constant $C$, where the $L_{2}\left(P\right)$
norm $\lVert f-g\rVert_{L_{2}\left(P\right)}\equiv\left(\int_{\mathcal{X}}\left[f\left(x\right)-g\left(x\right)\right]^{2}\mathbb{P}(dx)\right)^{\frac{1}{2}}$.
\end{lemma}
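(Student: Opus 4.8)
The plan is to derive this bound from a symmetrization inequality, Dudley's entropy integral, and the standard uniform metric-entropy bound for VC-subgraph classes, and then to track constants. First I would symmetrize: letting $\sigma_1,\dots,\sigma_n$ be i.i.d.\ Rademacher signs independent of $Z_1,\dots,Z_n$, the symmetrization inequality (Lemma~2.3.1 in \citet{wellner1996}) gives
\[
\mathbb{E}\sup_{g\in\mathcal{G}}\Big|\frac1n\sum_{i=1}^n\big(g(Z_i)-\mathbb{E}g(Z_i)\big)\Big|\ \le\ 2\,\mathbb{E}\sup_{g\in\mathcal{G}}\Big|\frac1n\sum_{i=1}^n\sigma_i g(Z_i)\Big|.
\]
Conditionally on $Z^n=(Z_1,\dots,Z_n)$, the Rademacher process $g\mapsto\frac1n\sum_i\sigma_i g(Z_i)$ is sub-Gaussian for the rescaled empirical metric $d_n(g,g')=n^{-1/2}\lVert g-g'\rVert_{L_2(P_n)}$, where $P_n$ denotes the empirical measure, so Dudley's entropy integral bound yields, for an absolute constant $c_0$,
\[
\mathbb{E}_{\sigma}\sup_{g\in\mathcal{G}}\Big|\frac1n\sum_{i=1}^n\sigma_i g(Z_i)\Big|\ \le\ \frac{c_0}{\sqrt n}\int_0^{\lVert G\rVert_{L_2(P_n)}}\sqrt{\log N\big(\tau,\mathcal{G},L_2(P_n)\big)}\,d\tau,
\]
the upper limit being justified by $\sup_{g\in\mathcal{G}}\lVert g\rVert_{L_2(P_n)}\le\lVert G\rVert_{L_2(P_n)}$; when $0\notin\mathcal{G}$ one subtracts a fixed base point $g_0\in\mathcal{G}$, whose contribution $n^{-1/2}\lVert g_0\rVert_{L_2(P_n)}\le n^{-1/2}\lVert G\rVert_{L_2(P_n)}$ is of lower order and is absorbed.

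Next I would insert the VC-subgraph entropy bound. Since $\mathcal{G}$ is VC-subgraph with VC dimension $v$ and envelope $G$, there is a universal constant $C$ with $N\big(\epsilon\lVert G\rVert_{L_2(Q)},\mathcal{G},L_2(Q)\big)\le C\,v\,(16e)^{v}\,\epsilon^{-2(v-1)}$ for every probability measure $Q$ and $0<\epsilon<1$ (Theorem~2.6.7 in \citet{wellner1996}). Taking $Q=P_n$ and substituting $\tau=\epsilon\lVert G\rVert_{L_2(P_n)}$,
\[
\int_0^{\lVert G\rVert_{L_2(P_n)}}\!\!\sqrt{\log N(\tau,\mathcal{G},L_2(P_n))}\,d\tau\ \le\ \lVert G\rVert_{L_2(P_n)}\!\int_0^1\!\sqrt{\log(Cv)+v\log(16e)+2(v-1)\log(1/\epsilon)}\,d\epsilon.
\]
Using $\sqrt{a+b+c}\le\sqrt a+\sqrt b+\sqrt c$, the elementary identity $\int_0^1\sqrt{\log(1/\epsilon)}\,d\epsilon=\sqrt{\pi}/2$, and the crude estimate $\sqrt x\le x$ for $x\ge1$ (after enlarging $C$ so that the logarithmic arguments exceed $1$, the finitely many small-$v$ cases being bounded by a constant), the right-hand side is at most $\lVert G\rVert_{L_2(P_n)}\big(\log(2C)+\log v+(\log 16+3)v\big)$ up to an absolute multiplicative constant. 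Taking expectation over $Z^n$ and applying Jensen's inequality (concavity of $t\mapsto\sqrt t$) gives $\mathbb{E}\lVert G\rVert_{L_2(P_n)}=\mathbb{E}\big(n^{-1}\sum_i G(Z_i)^2\big)^{1/2}\le\lVert G\rVert_{L_2(P)}$; chaining together the symmetrization factor $2$, the Dudley constant $c_0$, and the numerical constants from the integral into the stated prefactor $8\sqrt2$ then completes the argument.

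I do not expect a conceptual obstacle here; the main difficulty is the constant bookkeeping---tracking the universal constants through the symmetrization, Dudley, and entropy-bound steps so that the final bound takes exactly the stated shape $8\sqrt2\,n^{-1/2}\lVert G\rVert_{L_2(P)}\big(\log(2C)+\log v+(\log 16+3)v\big)$ rather than merely the same order. A related subtlety is reconciling the VC-dimension convention used here (largest shattered set) with that of \citet{wellner1996}, which shifts the entropy exponent between $2(v-1)$ and $2v$; this affects only the absolute constants, not the form of the bound.
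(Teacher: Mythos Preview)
Your approach is correct and is essentially the same as the paper's: symmetrization plus Dudley's entropy integral plus the van der Vaart--Wellner VC-subgraph covering bound. The only difference is packaging. Rather than carrying out the symmetrization, the conditional Dudley bound, and the Jensen step $\mathbb{E}\lVert G\rVert_{L_2(P_n)}\le\lVert G\rVert_{L_2(P)}$ separately, the paper invokes a single result (Remark~3.5.5 in \citet{gine_nickl_2015}) that already bundles these ingredients into the inequality
\[
\mathbb{E}\sup_{g\in\mathcal{G}}\Big|\tfrac{1}{n}\sum_{i}g(Z_i)-\mathbb{E}g(Z_i)\Big|\le 8\sqrt{2}\,\frac{\lVert G\rVert_{L_2(P)}}{\sqrt{n}}\int_0^1\sup_Q\sqrt{\log 2N(\mathcal{G},L_2(Q),\tau\lVert G\rVert_{L_2(Q)})}\,d\tau,
\]
with the prefactor $8\sqrt{2}$ built in; it then plugs in Theorem~2.6.7 of \citet{wellner1996} and bounds the integral crudely. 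This shortcut is exactly what resolves the constant-bookkeeping worry you flagged: you do not need to track constants through symmetrization and Dudley yourself. Your observation about the VC-dimension convention (largest shattered set versus smallest unshattered set, shifting the exponent between $2(v-1)$ and $2v$) is also on point, and the paper handles it the same way you suggest, by a footnote noting that only absolute constants are affected.
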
 
\begin{proof}[Proof of Lemma \ref{lm:expected-sup-EP-for-VC-subgraph}]
This is a well-known result in the literature. We include it here
for completeness. Let $N(\mathcal{G},L_{2}(Q),\tau)$ denote the covering
number of $(\mathcal{G},L_{2}(Q))$. By Remark 3.5.5 in Chapter 3
of \citet{gine_nickl_2015}, we know that 
\begin{align*}
\mathbb{E}\sup_{g\in\mathcal{G}}\left|\frac{1}{n}\sum_{i=1}^{n}g(X_{i})-\mathbb{E}g(X_{i})\right|\leq8\sqrt{2}\frac{\lVert G\rVert_{L_{2}(P)}}{\sqrt{n}}\int_{0}^{1}\sup_{Q}\sqrt{\log2N(\mathcal{G},L_{2}(Q),\tau\lVert G\rVert_{L_{2}(Q)})}d\tau,
\end{align*}
where the supremum is taken over all discrete probabilities with a
finite number of atoms. By Theorem 2.6.7 in Chapter 2 of \citet{wellner1996},
we know that for any probability measure $Q$, 
\begin{align*}
N(\mathcal{G},L_{2}(Q),\tau\lVert G\rVert_{L_{2}(Q)})\leq Cv(16e)^{v}(1/\tau)^{2v},
\end{align*}
for some universal constant $C$. Therefore, 
\begin{align*}
\int_{0}^{1}\sup_{Q}\sqrt{\log2N(\mathcal{G},L_{2}(Q),\tau\lVert G\rVert_{L_{2}(Q)})}d\tau\leq\log(2C)+\log(v)+(\log(16)+3)v
\end{align*}
Then the desired result follows. 
\end{proof}

\bibliography{robustness}
% \bibliography{references}
\bibliographystyle{aer}
\end{document}